\documentclass{article}

\usepackage{microtype}
\usepackage{graphicx}
\usepackage{subcaption}
\usepackage{booktabs}

\usepackage{hyperref}

\usepackage[preprint]{icml2026}

\usepackage{amsmath, amssymb, mathtools, cases, amsthm, verbatim, enumerate, bbm}
\usepackage{thmtools}
\usepackage{braket}
\usepackage{multicol}
\usepackage{enumitem}
\usepackage{amsfonts}
\usepackage{nicefrac}
\usepackage{xcolor}

\usepackage[nameinlink, capitalise, noabbrev]{cleveref}

\usepackage{wrapfig}

\usepackage{enumitem}
\setlist{nolistsep}
\setlist[itemize]{leftmargin=*}
\setlist[enumerate]{leftmargin=*}

\theoremstyle{plain}
\newtheorem{theorem}{Theorem}[section]
\newtheorem{lemma}[theorem]{Lemma}
\newtheorem{proposition}[theorem]{Proposition}
\newtheorem{fact}[theorem]{Fact}
\newtheorem{corollary}[theorem]{Corollary}

\theoremstyle{definition}
\newtheorem{definition}[theorem]{Definition}

\theoremstyle{remark}

\newcommand{\eps}{\varepsilon}
\newcommand{\poly}{{\operatorname{poly}}}

\newcommand{\mydelta}[1]{\Delta_G(#1, S)}

\allowdisplaybreaks

\usepackage[textsize=tiny]{todonotes}

\icmltitlerunning{Quantum Algorithms for Triangle Cut Sparsification}

\begin{document}

\twocolumn[
  \icmltitle{Quantum Algorithms for Triangle Cut Sparsification}
  \icmlsetsymbol{equal}{*}

  \begin{icmlauthorlist}
    \icmlauthor{Shan Jiang}{ustc}
    \icmlauthor{Pan Peng}{ustc}
  \end{icmlauthorlist}

  \icmlaffiliation{ustc}{School of Computer Science and Technology, University of Science and Technology of China, Hefei, China}

  \icmlcorrespondingauthor{Pan Peng}{ppeng@ustc.edu.cn}
  \icmlkeywords{Machine Learning, ICML}

  \vskip 0.3in
]

\printAffiliationsAndNotice{}

\begin{abstract}
    Triangles capture higher-order structures in graphs and are fundamental to applications such as clustering and network analysis.
    To enable efficient use of such structures at scale, we study the problem of \emph{triangle cut sparsification}, which aims to reduce the graph size while approximately preserving triangle counts across every cut.
    We investigate \emph{quantum algorithms} for this problem, using triangle listing as our main technical ingredient.
    In particular, we present a quantum algorithm for triangle listing that, for a graph with $n$ vertices, $m$ edges, and $t$ triangles, runs in time $T_{\mathrm{q\text{-}list}} =$ $\widetilde{O}\bigl(\min(n^{5/4}t^{7/12} + n^{7/6}t^{7/9}, m + m^{3/4}t^{1/2},$ $n^{3/2}t^{1/2})\bigr)$, improving upon the best known classical bounds over a broad range of parameters. Our algorithm is based on a heavy-light vertex partition and an extension of triangle detection via quantum walks and Grover search. 
    Leveraging this result, we design a quantum algorithm for constructing $\varepsilon$-triangle cut sparsifiers  of size $\tilde{O}(n/\eps^2)$ in time $\widetilde{O}(T_{\mathrm{q\text{-}list}} + \sqrt{mn}/\varepsilon)$. 
    Finally, we demonstrate applications to clustering algorithms based on triangle-related measures
    and prove a lower bound of $\Omega(n/\varepsilon^2)$ on the size of any $\varepsilon$-triangle cut sparsifiers.
\end{abstract}

\section{Introduction}

Graph sparsification reduces the computational and memory costs of large graphs while approximately preserving important structural properties. Classical examples include graph spanners~\cite{Chew86}, cut sparsifiers~\cite{benczur1996approximating}, and spectral sparsifiers~\cite{SS,batson2009twice}. These notions have been extended naturally to hypergraphs, yielding corresponding cut and spectral sparsifiers~\cite{chen2020near,soma2019spectral}. Graph sparsification has wide applications in algorithms and machine learning, including network flows~\cite{chen2025maximum}, computer vision~\cite{simonovsky2017dynamic}, and streaming learning~\cite{laenen2020higher}.

Recently, \citet{kapralov2022motif} introduced \emph{motif cut sparsifiers}, which sparsify a graph while approximately preserving the number of motifs (i.e., fixed subgraphs) crossing every cut. This notion is motivated by the importance of higher-order structures in graph machine learning, with applications in social network analysis~\cite{newman2003structure,guo2023social,hu2023triangular}, recommendation systems~\cite{tsourakakis2011spectral,jiang2022triangle}, and graph clustering~\cite{science16,li2017motif,ma2019local,shi2020effective}. Their algorithm achieves linear-size sparsifiers but relies on explicit motif enumeration, incurring high computational cost. The framework generalizes classical cut and hypergraph sparsifiers and has been extended to privacy-preserving settings for triangle motifs~\cite{peng2025differentially}. In this work, we focus on \emph{triangle cut sparsifiers} and aim to improve the efficiency of their construction.

Quantum computing offers a promising avenue for accelerating machine learning tasks and graph algorithms. In the context of sparsification, \citet{apers2022quantum} gave the first quantum algorithm for constructing linear-size spectral sparsifiers, achieving a running time of\footnote{We use $\widetilde{O}(\cdot)$ to hide polylogarithmic factors.} $\widetilde{O}(\sqrt{mn}/\varepsilon)$, and raised open questions about extending their speedup techniques to broader sparsification tasks. \citet{liu2025quantum} later developed a quantum algorithm for hypergraph sparsification, obtaining an $\eps$-spectral sparsifier in time $\widetilde{O}(r\sqrt{mn}/\eps)$ for a hypergraph of rank $r$. On the lower-bound side, a quantum query lower bound of $\widetilde{\Omega}(\sqrt{mn}/\varepsilon)$ is known for the edge case~\cite{apers2022quantum}, and an analogous lower bound of $\Omega(r\sqrt{mn}/\varepsilon)$ has been conjectured for hypergraph sparsification~\cite{liu2025quantum}. Together, these results demonstrate that quantum techniques can substantially reduce the cost of sparsifier construction, yet triangle cut sparsification---a natural higher-order analogue---has remained unexplored from a quantum perspective.

Motivated by this gap, we develop quantum algorithms for triangle cut sparsification that leverage efficient quantum triangle listing, enabling faster sparsification procedures as well as applications to triangle-based clustering.

\subsection{Our Contributions}

We present a quantum algorithm for constructing \emph{triangle cut sparsifiers}. In a weighted graph $G=(V,E,w)$, the weight of a triangle instance $T=(V_T,E_T)$ is defined as $w(T) = \prod_{e \in E_T} w(e)$. For a subset $S \subseteq V$, let $(S, V \setminus S)$ denote the corresponding cut. A triangle instance $T$ is said to \emph{cross} this cut if it has vertices on both sides, i.e., $V_T \cap S \neq \emptyset$ and $V_T \cap (V \setminus S) \neq \emptyset$.

\begin{definition}[Triangle Cut Sparsifier,~\citep{kapralov2022motif}]
\label{def:triangle-cut-sparsifier}
Let $G=(V,E,w)$ be a weighted graph.
For any $\varepsilon>0$, a reweighted subgraph $G'=(V,E',w')$ is an $\varepsilon$-triangle cut sparsifier of $G$ if, for every cut $(S,V\setminus S)$, $ (1-\varepsilon)\,\text{cut}_G(S,V\setminus S) \leq \text{cut}_{G'}(S,V\setminus S) \leq (1+\varepsilon)\,\text{cut}_G(S,V\setminus S)$, where $\text{cut}_G(S,V\setminus S) = \sum_{T \text{ is a triangle},\ T \text{ crosses } (S, V \setminus S)} w(T)$.
\end{definition}

We emphasize that triangle cut sparsifiers preserve the \emph{weighted} sum of crossing triangles, not the number of triangles. For instance, let $K_n$ be the complete graph with unit edge weights. For any balanced cut $(S,\bar S)$, $\text{cut}_{K_n}(S,\bar S)=\Theta(n^3)$. Now consider a random $d$-regular sparsifier $H$ with $d=\Theta(1/\varepsilon^2)$ where every edge is of weight $w'(e)=\Theta(n/d)$. $H$ has $O(n/\varepsilon^2)$ edges and thus at most $O(n^{3/2}/\varepsilon^3)$ triangles. Yet for a balanced cut, the expected number of crossing triangles in $H$ is $\Theta(d^3)$, and each triangle weight is $\Theta\bigl((n/d)^3\bigr)$. Consequently, $\text{cut}_H(S,\bar S)=\Theta\bigl(d^3\cdot (n/d)^3\bigr)=\Theta(n^3)$, matching the original cut value. Thus, a sparse reweighted graph can effectively preserve weighted cut values even though its triangle count is far smaller.

Our algorithm has the following performance guarantee.

\begin{theorem}
\label{thm:q-triangle_cut_spars}
Given query access to an undirected weighted graph $G$ with $n$ vertices and $m$ edges, there exists a quantum algorithm that, with high probability~\footnote{Throughout the paper, we use ``with high probability'' to mean with probability at least $1 - \poly(1/n)$.}, constructs an $\varepsilon$-triangle cut sparsifier of $G$ containing $\widetilde{O}(n/\varepsilon^2)$ edges.
The running time of the algorithm is $T_{\mathrm{q\text{-}list}} + \widetilde{O}\!\left(\sqrt{mn}/\varepsilon\right)$,
where $T_{\mathrm{q\text{-}list}}$ denotes the running time of quantum triangle listing as stated in~\Cref{thm:main-triangle}.
\end{theorem}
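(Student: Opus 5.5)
We follow the reduction of \citet{kapralov2022motif}: a triangle cut sparsifier of $G$ is obtained as an ordinary (edge) cut sparsifier of an auxiliary weighted graph. For each edge $e=(u,v)$ define the triangle-weighted edge
\[
\tilde w(e) = \sum_{x:\,(u,x),(v,x)\in E} w(u,v)\,w(u,x)\,w(v,x),
\]
the total weight of triangles through $e$. Each triangle crossing a cut $(S,V\setminus S)$ has exactly two of its three edges crossing the cut, and a non-crossing triangle has none. Hence $\text{cut}_G(S,V\setminus S)=\tfrac12\sum_{e\in\partial S}\tilde w(e)$, so triangle cuts of $G$ are, up to a factor $2$, edge cuts of $\tilde G=(V,E,\tilde w)$.

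The plan then has three steps.

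\textbf{Step 1: compute all $\tilde w(e)$.} We run the quantum triangle listing algorithm of \Cref{thm:main-triangle} in time $T_{\mathrm{q\text{-}list}}$. From the list we accumulate $\tilde w(e)$ for every edge lying in a triangle, which costs $O(t)$ additional time; this is dominated by $T_{\mathrm{q\text{-}list}}$, since listing must output all $t$ triangles. Edges with $\tilde w(e)=0$ are simply discarded.

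\textbf{Step 2: sparsify.} We apply the quantum spectral (hence cut) sparsification algorithm of \citet{apers2022quantum} to $\tilde G$. After Step 1 the weights are stored classically, so we can implement its query oracle in $\widetilde{O}(1)$ per query. This outputs a reweighted subgraph $\tilde H$ with $\widetilde{O}(n/\eps^2)$ edges that $\eps$-preserves all edge cuts of $\tilde G$, in time $\widetilde{O}(\sqrt{mn}/\eps)$.

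\textbf{Step 3: convert back to a triangle sparsifier.} Here the output must itself be a reweighted subgraph $G'$ whose triangle cuts approximate those of $G$. This is not automatic, since triangles in $G'$ arise from products of the new weights, not from sums of $\tilde w$. This is the main obstacle.

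The first thing I would try is the Kapralov et al.\ argument. They sample edges with probabilities proportional to their motif-weighted importances (effective resistance in $\tilde G$ times $\tilde w$). They then reweight each kept edge and show, via a concentration argument over cuts, that the triangle weight crossing each cut is preserved. Concretely, I would sample with probabilities $p_e\approx \min(1,\tilde w(e) R_{\tilde G}(e)\log n/\eps^2)$. The quantum speedup enters by obtaining these leverage-score estimates through the Apers--de Wolf procedure in $\widetilde{O}(\sqrt{mn}/\eps)$ time. The remaining task is to verify that their correctness proof only needs $\tilde w$ and approximate leverage scores, both of which we supply.

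If that route fails, the fallback is a hypergraph view: treat each triangle as a hyperedge of rank $3$ with weight $w(T)$, for which the cut function coincides exactly with $\text{cut}_G$. We would then run the quantum hypergraph sparsifier of \citet{liu2025quantum} on the listed triangles. The cost is that the output is a set of reweighted triangles rather than a subgraph, so a further step is needed to map it back to edge weights. Either way, the total running time is $T_{\mathrm{q\text{-}list}}+\widetilde{O}(\sqrt{mn}/\eps)$, and the high-probability guarantee follows by a union bound over the listing and sparsification steps.
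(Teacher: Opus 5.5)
There is a genuine gap. Your Step 3, which you correctly flag as the main obstacle, is where the whole proof lives, and neither route you sketch closes it.

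\textbf{Why Steps 2 and 3 fail.} The identity relating triangle cuts of $G$ to edge cuts of $\tilde G$ is correct, but it transfers sparsifiers in one direction only. A triangle cut sparsifier of $G$ yields a cut sparsifier of $\tilde G$, which is how the paper uses it for the lower bound. The converse does not hold: a cut sparsifier of $\tilde G$ says nothing about the triangle cuts of a reweighted subgraph, whose triangle weights are products of three edge weights. So Step 2 contributes nothing to the output.

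Your primary Step 3 route provably fails. This is independent edge sampling with $p_e\approx\min(1,\tilde w(e)R_{\tilde G}(e)\log n/\eps^2)$. Take $G=K_n$ with unit weights:
\begin{itemize}
\item $\tilde G$ is a uniformly weighted complete graph, so every leverage score is $2/n$ and every edge is kept with probability $p=\Theta(\log n/(n\eps^2))$.
\item A triangle survives only if all three edges do. The expected number of surviving triangles through a fixed vertex $v$ is therefore $\binom{n-1}{2}p^3=\Theta(\log^3 n/(n\eps^6))=o(1)$ for constant $\eps$.
\item So with probability $1-o(1)$ the singleton cut $(\{v\},V\setminus\{v\})$ has triangle value $0$ in $G'$, versus $\binom{n-1}{2}$ in $G$, however the kept edges are reweighted.
\end{itemize}
The estimator is unbiased but cannot concentrate, because triangle survival is a product of independent events.

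Your fallback is exactly what the paper explicitly rules out. This is running a hypergraph sparsifier on triangles as rank-$3$ hyperedges. An edge can lie in both a retained and a discarded hyperedge, so the reweighted hyperedges cannot be realized as edge weights of a subgraph.

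\textbf{The missing idea.} You need the adaptive, strength-based scheme of \citet{kapralov2022motif}, which the paper accelerates rather than replaces. Triangles are listed once in time $T_{\mathrm{q\text{-}list}}$, as in your Step 1; since the algorithm only deletes edges, no new triangles appear. Then, in each of $O(\log n)$ rounds:
\begin{itemize}
\item compute strength estimates $\kappa'_T$ (Chekuri--Xu, $\widetilde O(t)$ time);
\item store $\widehat\eta(e)=\sum_{T\ni e}w(T)/\kappa'_T$ in QRAM;
\item deterministically keep the critical edges, i.e.\ those with $\widehat\eta(e)$ above $\Theta(\eps'^2/\log n)$;
\item keep each remaining edge with constant probability $p=2^{-1/6}$, reweighted by $1/p$.
\end{itemize}
Adaptivity is what rescues $K_n$: as the graph thins, the surviving triangles become critical and are retained whole. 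Correctness and the $\widetilde O(n/\eps^2)$ size bound are inherited from Kapralov et al.\ (their Lemmas 6.2 and 6.3).

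The quantum work is elementary:
\begin{itemize}
\item the critical set is found by repeated Grover search in $\widetilde O(\sqrt{m|E_+|})$, using $|E_+|=\widetilde O(n/\eps^2)$;
\item sampling is done implicitly via random strings, with each edge's current weight computed on demand from the strings and its critical-edge history;
\item surviving edges are recovered by a final repeated Grover search in $\widetilde O(\sqrt{mn}/\eps)$;
\item the random strings are removed by Corollary 3.3 of \citet{apers2022quantum}.
\end{itemize}
In particular, the $\sqrt{mn}/\eps$ term comes from these Grover searches, not from running a spectral sparsifier.
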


For comparison, the classical algorithm of~\citet{kapralov2022motif} runs in time
$\widetilde{O}(\min\{T_{\mathrm{c\text{-}list}}, n^{\omega}\} + m)$, where $T_{\mathrm{c\text{-}list}}$ is the classical complexity of triangle listing (see the detailed expression below) and $\omega<2.372$ is the matrix multiplication exponent~\cite{alman2025more}. The algorithm consists of a triangle-listing phase followed by a post edge-sampling phase.  
Our quantum algorithm accelerates the triangle-listing phase for a broad class of graphs---including sparse graphs and those with moderate triangle counts (see the discussion below), and further reduces the post-sampling cost from $\widetilde{O}(m)$ to $\widetilde{O}(\sqrt{mn}/\varepsilon)$, yielding a polynomial improvement in this stage.

We further prove an $\Omega(n/\varepsilon^2)$ lower bound on the size of any $\varepsilon$-triangle cut sparsifiers (see~\Cref{thm:triangle_cut_lb}),
establishing near-tight sparsity and providing the first matching lower bound specialized to triangle motifs.

Now we formally state the performance guarantee of our quantum algorithm for \emph{triangle listing}.

\begin{theorem}
\label{thm:main-triangle}
Given query access to an undirected graph $G$ with $n$ vertices, $m$ edges and $t$ triangles, there exists a quantum algorithm, \textup{\textsc{Q-TriangleListing}} that, with high probability, lists all triangles in $G$ with runtime $T_{\mathrm{q\text{-}list}} = \widetilde{O}\left(\min(n^{5/4} t^{7/12} + n^{7/6} t^{7/9}, m + m^{3/4}t^{1/2}, n^{3/2}t^{1/2})\right)$.
\end{theorem}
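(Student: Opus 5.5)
The plan is to obtain each of the three terms in the minimum from a separate algorithm, run all three in parallel, and stop as soon as one finishes. Since $t$ is unknown, each algorithm will use a guess $\hat t$ that is doubled at every round. A round stops once the algorithm has found more than $\hat t$ triangles. Because the cost of each round grows polynomially in $\hat t$, the total cost is a geometric sum and is within a constant factor of the last round, giving only a logarithmic overhead.

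\textbf{The $n^{3/2}t^{1/2}$ term.} The basic primitive is Grover search over the $\binom{n}{3}$ vertex triples, which lists all $t$ triangles in $\widetilde{O}(\sqrt{n^3 t})$ time. We invoke it once per marked element, with repeated searches in the standard way. Finding $k$ marked items among $N$ costs $O(\sqrt{Nk})$ in total, and a $\log n$ factor boosts the success probability to a high-probability guarantee.

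\textbf{The $m+m^{3/4}t^{1/2}$ term.} We first read the entire edge list in time $O(m)$ and sort it, so edge and adjacency queries become $O(\log n)$. We then use a heavy--light split with threshold $\sqrt m$.
\begin{itemize}
\item For triangles touching a light vertex $v$ ($\deg v\le\sqrt m$), each such triangle is an edge $e=(v,u)$ together with a neighbour $w\in N(v)$. The search space therefore has size $\sum_v \deg(v)^2\le m\sqrt m$, and Grover listing over it costs $\widetilde{O}(\sqrt{m^{3/2}t})=\widetilde{O}(m^{3/4}t^{1/2})$.
\item For triangles with all three vertices heavy, there are at most $2\sqrt m$ heavy vertices. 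Grover over triples of them costs $\widetilde{O}(\sqrt{m^{3/2}t})$ as well.
\end{itemize}
The index sets are enumerable in superposition because we hold the sorted adjacency lists.

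\textbf{The $n^{5/4}t^{7/12}+n^{7/6}t^{7/9}$ term.} This is the main obstacle. The plan is to extend quantum-walk triangle \emph{detection}, such as the Magniez--Santha--Szegedy $\widetilde{O}(n^{1.3})$ walk or the Le Gall style $n^{5/4}$ algorithms, into a listing routine, again via a heavy--light partition of vertices by the number of triangles they belong to. Let $\Delta$ be a threshold.
\begin{itemize}
\item \textbf{Heavy vertices} (those in at least $\Delta$ triangles) number at most $3t/\Delta$. For each heavy vertex we list its triangles by Grover over pairs of its neighbours, at cost $\widetilde{O}(n\sqrt{t_v})$. By Cauchy--Schwarz the total is at most $n\sqrt{(t/\Delta)\cdot t}$.
\item \textbf{Light vertices} are handled by a walk-based search. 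We run the detection algorithm, extract a found triangle, and delete one of its edges (implemented via a classical record of removed edges checked on each query). We repeat, and random sampling of vertex subsets of size about $n/\Delta^{1/3}$ spreads out the triangles so that each subgraph contains few triangles, which keeps the repeated detections cheap.
\end{itemize}
Summing the per-subgraph costs, roughly (number of subgraphs)$\times$(detection cost of $n'$ vertices)$\times$(triangles per subgraph)$^{\alpha}$, and optimizing $\Delta$ should produce the two exponents $(5/4,7/12)$ and $(7/6,7/9)$ from the two regimes of the walk cost. Getting these exact exponents, and ensuring deleted edges do not break the walk's update cost, is where I expect most of the difficulty. My first attempt would be a three-way balance among the heavy cost, the subgraph count, and the per-detection cost.
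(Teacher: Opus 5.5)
Your Grover term and heavy--light term are fine and essentially match the paper's \textsc{GS-TriangleListing} and \textsc{HL-TriangleListing}. Your single search over the index set of size $\sum_{v\in V_L}\deg(v)^2\le 2m^{3/2}$ replaces the paper's per-vertex searches plus Cauchy--Schwarz and gives the same bound. Running the three procedures in parallel is also how the paper combines them.

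The gap is the third term. It is a plan rather than a proof, and its one concrete mechanism fails for listing. After detecting a triangle you delete one of its edges, which destroys every not-yet-output triangle through that edge in the same subgraph. Two triangles $\{u,v,w\}$ and $\{u,v,w'\}$ can fall in the same sampled subgraph; given the first does, this happens with probability about $\hat t^{-1/3}$, whether or not the vertices are light. Nothing in your sketch guarantees that each triangle is eventually detected in some subgraph without such a competitor. You also leave two things unspecified: how many subsets are sampled (each triangle must lie entirely inside one of them), and how vertices are classified by triangle count, which would itself require per-vertex counting.

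The missing ingredient is the paper's \textsc{IncidentTriangleListing}. After finding $T$, list all $k$ triangles sharing an edge with $T$ by Grover search over (edge of $T$, third vertex) pairs in $\widetilde{O}(\sqrt{nk})$ time, and only then mask all three edges. Each detection call then eliminates at least one triangle and loses none.

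With this fix your idea closes without the heavy vertices and without any three-way balance:
\begin{itemize}
\item Sample $\Theta(\hat t\log n)$ sets $U$, keeping each vertex with probability $\hat t^{-1/3}$. Every triangle then lies in some $U$ with high probability, whatever $\hat t$ is.
\item Inside each $U$, with masks local to $U$, call an $\widetilde{O}(|U|^{5/4})$ triangle finder at most $t_U+1$ times.
\item Since $\mathbb{E}\sum_U t_U=O(t\log n)$, the expected cost is $\widetilde{O}\bigl((t+\hat t)(n\hat t^{-1/3})^{5/4}\bigr)=\widetilde{O}(n^{5/4}t^{7/12})$ for $\hat t=\Theta(t)$.
\item A time cutoff with restarts, plus doubling of $\hat t$, turns this into a high-probability bound without knowing $t$.
\end{itemize}
This is already at most the stated sum, so you do not need to reproduce the $(7/6,7/9)$ exponents.

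The paper instead opens up Le Gall's walk. It lists all triangles touching the sample $S$ together with their edge-sharing triangles, then reruns the two-level walk. It bounds the marked fraction via a maximum edge-disjoint packing $\nu_i$ and sums $\sum_i \nu_i^{-1/2}$. The list-before-delete step is the ingredient both routes need, and it is what your sketch is missing.
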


We assume $t\ge 1$ throughout the paper.
This result provides the first provable quantum speedup for the triangle listing problem across a broad range of graphs.
The speedup stems from non-trivial quantum primitives and carefully adapting from triangle detection to listing, rather than acting as a black-box wrapper; consequently, removing these quantum components would eliminate the polynomial gains. The three terms in our runtime $T_{\mathrm{q\text{-}list}}$ correspond to quantum walk, heavy-light vertex partition and Grover search, respectively. 
We compare our bound with the fastest known classical algorithms~\cite{bjorklund2014listing}, whose running time is 
$T_{\mathrm{c\text{-}list}} = \widetilde{O}\!\left(
n^\omega + n^{\frac{3(\omega-1)}{5-\omega}} t^{\frac{2(3-\omega)}{5-\omega}}
\right)$ or $\widetilde{O}\!\left(
m^{\frac{2\omega}{\omega+1}} + m^{\frac{3(\omega-1)}{\omega+1}} t^{\frac{3-\omega}{\omega+1}}
\right)$ for sparse graphs.
In comparison, when the number of triangles satisfies $t \le n^{15/14}$, our running time is at most $\widetilde{O}(n^2)$, matching or improving upon the conjectured classical lower bound $\widetilde{O}(n^2 + n t^{2/3})$ obtained by setting $\omega = 2$.
For sparse graphs, the term $m + m^{3/4} t^{1/2}$ based on heavy--light vertex partition dominates our running time. Assuming $\omega = 2$, the best known classical bound simplifies to $\widetilde{O}(m^{4/3} + m t^{1/3})$, which is widely believed to be conditionally optimal~\cite{patrascu2010towards}. In contrast, our bound is always no larger whenever $t\leq m^{3/2}$ (which holds for all graphs, see~\Cref{fact:triangle_count_bound}), demonstrating a quantum advantage.
A discussion of the query complexity of our algorithms is provided in~\Cref{appsec:q-triangle_listing}.

To the best of our knowledge, no quantum lower bounds are currently known for triangle listing. Nevertheless, our algorithm encounters fundamental limitations. In particular, the $\widetilde{O}(n^{3/2}t^{1/2})$ term approaches Grover-optimality (the method cannot be improved beyond logarithmic factors).
Moreover, in the two extreme regimes, our algorithm matches known optimal bounds. When $t = 1$, the running time reduces to $\widetilde{O}(n^{5/4})$, matching the best known quantum bound for triangle detection~\cite{le2014improved}; when $t = \Theta(n^3)$, the running time becomes $\Theta(n^3)$, which is unavoidable whether classically or quantumly due to the inherent output size.

Finally, we show that our quantum triangle listing primitive can be directly incorporated into several triangle-based clustering algorithms, including spectral and PageRank-based methods~\cite{science16,tsourakakis2017scalable,yin2017local}, thereby yielding corresponding quantum speedups.

\subsection{Technical Overview}
Our quantum algorithm for triangle cut sparsification builds on the strength-based motif sparsification framework of~\citet{kapralov2022motif}, which treats each triangle as a hyperedge and iteratively sparsifies the graph by retaining edges of high estimated importance while sampling the remaining edges.
Although this framework yields a sparsifier with small size (i.e., $\tilde{O}(n)$ edges), its classical implementation is bottlenecked by triangle enumeration and repeated $\widetilde{O}(m)$-time post-sampling, which together dominate the runtime. 
Our contribution is to provide quantum speedups for both components: triangle listing and post-sampling.

\textbf{Quantum Triangle Listing.} 
We use three approaches for triangle listing. The first approach is based on heavy--light vertex partition, which distinguishes triangles by whether they involve a low-degree vertex. Vertices are classified as light or heavy by a degree threshold $\Delta$, ensuring that light vertices have bounded neighborhoods while the number of heavy vertices remains small. Triangles incident to light vertices are listed by applying repeated Grover search over pairs of neighbors, and each light vertex is removed after processing to avoid duplication. The remaining triangles, which are induced entirely by heavy vertices, are listed via a direct Grover search over triples of heavy vertices. By balancing the costs of two phases and setting $\Delta = \sqrt{m}$, this approach lists all triangles with high probability in $\widetilde{O}(m + m^{3/4} t^{1/2})$ time.

The second approach extends Le Gall's quantum triangle detection algorithm~\cite{le2014improved}, which decides the existence of a triangle using $\widetilde{O}(n^{5/4})$ queries. 
The algorithm samples a vertex set $S$ of size $\sqrt{n}$ and efficiently lists all triangles incident to $S$ via Grover search.
The remaining task is to find triangles entirely outside $S$.
Since vertex-pairs sharing a common neighbor in randomly sampled $S$ are excluded, the search space for subset $X$ is reduced from $O(|X|^2)$ to $O(|X|^2/\sqrt{n})$ with high probability. A quantum walk is then performed on the Johnson graph $J(V,\lceil n^{3/4}\rceil)$ to search for a subset containing an edge of a triangle.
To check whether a subset $A$ is marked, apply a variable-cost quantum search to identify a common neighbor and an inner quantum walk on $J(A,\lceil \sqrt{n}\rceil)$.
This framework naturally extends triangle detection to listing (i.e., with more triangles), but requires careful handling to avoid rediscovering the same triangle multiple times.

We address this by first listing all triangles incident to $S$ as well as all triangles sharing edges with them, then repeatedly running a quantum walk whose marked-set fraction scales with the number of remaining triangles.
This yields a total cost of $\widetilde{O}(n^{5/4} t^{7/12})$ for graphs with fewer triangles. To avoid duplicates, we introduce an auxiliary procedure that outputs all triangles sharing an edge with the newly found triangle and then removes those three edges from the oracle.
This ensures progress and contributes an additional $O(\sqrt{n}t)$ term.

The third approach uses Grover search directly, achieving a runtime of $\widetilde{O}(n^{3/2} t^{1/2})$, which is preferable for dense graphs and when triangles are abundant.

\textbf{Quantum Post-Sampling.}
Assuming all triangles have been listed, we next accelerate the post-sampling stage.
Classically, each iteration identifies \emph{critical edges} (with large estimated importance) and samples the remaining edges.
We accelerate the first step using our quantum triangle listing algorithm, and the second step by encoding sampling decisions implicitly, recovering surviving edges only at the end via quantum search~\cite{apers2022quantum}.
Overall, the algorithm constructs an $\varepsilon$-triangle cut sparsifier with high probability in time 
$T_{\mathrm{q\text{-}list}} + \widetilde{O}\!\left(\frac{\sqrt{mn}}{\varepsilon}\right)$, 
demonstrating that strength-based motif sparsification admits substantial quantum speedups.

\subsection{Other Related Work}
Triangle detection and listing are classical problems in graph algorithms and fine-grained complexity (see e.g., \citet{itai1977finding,alon1997finding, bjorklund2014listing, williams2020monochromatic, bringmann2025fine}).

Quantum computing has enabled substantial speedups for triangle detection.
Grover's method first reduced the query complexity to $O(n^{3/2})$~\cite{buhrman2001quantum}, followed by a sequence of improvements using various techniques~\cite{magniez2007quantum, belovs2012span, lee2013improved, jeffery2013nested}.
The current best bounds achieve $\widetilde{O}(n^{5/4})$ queries for unweighted graphs~\cite{le2014improved}, with further improvements for sparse graphs~\cite{le2017quantum} and logarithmic factors removed in~\citep{Carette2019extended}.
Despite this progress, the \emph{triangle listing} problem remains largely unexplored in the quantum setting.

A more detailed discussion of classical and quantum algorithms for triangle-related problems, including other quantum advances, is provided in~\Cref{appsec:related_work}.

\section{Preliminaries}
\label{sec:prelim}

Let $G=(V, E, w)$ be an undirected graph, where $V$ is the vertex set $[n] = \{1,2,\ldots,n\}$, $E$ is the edge set of cardinality $m$, and $w\colon E \rightarrow \mathbb{R}_{\ge 0}$ represents edge weights.
For $v\in V$, we denote by $N(v)$ its neighbor set and $\deg(v)$ its degree.
A \emph{triangle} is a $3$-node complete graph $K_3$. 
A subgraph of $G$ that is isomorphic to a triangle is called a \emph{triangle instance} $T$ in $G$ (or simply a \emph{triangle} in $G$). 
We denote by $\mathcal{T}(G)$ the set of all triangle instances in $G$, with the triangle count in $G$ defined as $t\coloneqq |\mathcal{T}(G)|$.

\textbf{Quantum Computing Background.}
In quantum computing, the state of a system is represented by a unit vector $\ket{v}$ in a Hilbert space $\mathcal{H}$, which is a complex inner product vector space.
For a $d$-dimensional Hilbert space $\mathbb{C}^d$, the standard computational basis consists of orthonormal vectors $\{\ket{i}\}_{i=0}^{d-1}$, where $\ket{i} = (0, \ldots, 0, 1, 0, \ldots, 0)^\top$ denotes the $i$-th basis vector.
A single qubit is represented as $\alpha\ket{0} + \beta\ket{1}$ with complex amplitudes $\alpha,\beta$ satisfying $|\alpha|^2 + |\beta|^2 = 1$.
Multi-qubit states are represented via tensor products: for $\ket{a},\ket{b}\in\mathbb{C}^d$, their tensor product is their Kronecker product $\ket{a}\otimes\ket{b} \equiv \ket{a}\ket{b} = (a_0b_0, a_0b_1, \ldots, a_1b_0, a_1b_1, \ldots, a_{d-1}b_{d-1})^\top \in \mathbb{C}^{d^2}$.

The evolution of a closed quantum system is described by a unitary operator $U$ satisfying $U^\dagger U = I$, acting as $\ket{\psi} \mapsto U\ket{\psi}$, where $U^\dagger$ is the Hermitian conjugate of $U$, and $I$ is the identity matrix.
To extract classical information from state $\ket{\psi}$, measurement in the computational basis yields outcome $i$ with probability $|\braket{\psi|i}|^2$, collapsing the state to $\ket{i}$, where $\braket{\psi|i}$ is the inner product of $\ket{\psi}$ and $\ket{i}$.

\textbf{Quantum Computational Model.}
We assume the input graph $G=(V,E,w)$ is stored in a quantum-accessible classical memory (QRAM) that supports the following queries coherently:
\begin{itemize}[nosep]
    \item \textbf{degree query:} given $v\in V$, return its degree $\deg(v)$;
    \item \textbf{neighbor query:} given $v\in V$ and an index $i\in[\deg(v)]$, return the $i$-th neighbor of $v$;
    \item \textbf{vertex-pair query:} given a pair of vertices $\{u,v\}$, return the weight $w(u,v)$ if $\{u,v\}\in E$, and $0$ otherwise.
\end{itemize}
This is equivalent to having general graph model in QRAM, 
as standard in quantum graph algorithms (e.g.,~\citet{apers2022quantum,chen2025quantum}).

Quantum algorithms are described in the quantum circuit model, operating on $\ket{0}^{\otimes n}$, applying a sequence of unitary gates and oracle queries, and ending with measurement.
The time complexity of a quantum algorithm is measured by the total number of elementary gates, oracle queries, and QRAM operations. We also include involved classical steps and the cost of writing outputs (e.g., storing the list of triangles). In particular, since triangle listing outputs $t$ items, any algorithm for this task must spend $\Omega(t)$ time; our upper bounds always dominate this bound, so it does not appear as an extra term.

Although QRAM is a standard assumption, it is a stronger model than the plain circuit model where we usually consider query complexity that only counts the number of oracle queries.  
We remark that our guarantees are asymptotic. While we do not provide concrete hardware crossover points, the polynomial improvements we achieve suggest that, once large-scale fault-tolerant quantum computers become available, the advantages would be relevant for massive graphs.

\textbf{Quantum Tools.}
Below are some used tools.
\begin{lemma}[Grover search, \citep{grover1996fast}]
\label{lem:grover_search}
    Let $g\colon [N]\to\{0,1\}$ be a Boolean predicate and $M=\{i\in[N]\mid g(i)=1\}$.
    There exists a quantum algorithm that, with probability at least $2/3$, 
    \begin{itemize}[nosep]
        \item (standard version) finds one element of $M$ (if $M\neq\emptyset$) in $\widetilde{O}(\sqrt{N/|M|})$ time;
        \item (repeated version) finds all elements of $M$ in $\widetilde{O}(\sqrt{N|M|})$ time.
    \end{itemize}
\end{lemma}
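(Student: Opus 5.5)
The plan is to derive both parts of \Cref{lem:grover_search} from the standard amplitude-amplification analysis. The repeated version is then obtained by iterating the standard version while removing the elements already found. Throughout, the $\widetilde{O}(\cdot)$ absorbs the polylogarithmic overhead of implementing $g$ coherently and the overhead for boosting success probabilities.

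\emph{Standard version, $|M|$ known.} Prepare the uniform superposition $\ket{u}=N^{-1/2}\sum_{i\in[N]}\ket{i}$ using $O(\log N)$ gates. Let $O_g\ket{i}=(-1)^{g(i)}\ket{i}$ be the phase oracle, built from one evaluation of $g$ by phase kickback, and let $D=2\ket{u}\bra{u}-I$. The Grover iterate $G=DO_g$ acts on the plane spanned by $\ket{M}=|M|^{-1/2}\sum_{i\in M}\ket{i}$ and $\ket{M^\perp}$ as a rotation by angle $2\theta$, where $\sin\theta=\sqrt{|M|/N}$. Applying $G$ a total of $k=\lfloor \pi/(4\theta)\rfloor=O(\sqrt{N/|M|})$ times and then measuring gives an element of $M$ with probability at least $1-|M|/N$. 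If this is below $2/3$, which can only happen when $|M|>N/3$, we sample classically instead.

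\emph{Standard version, $|M|$ unknown.} Use the Boyer--Brassard--H{\o}yer--Tapp schedule. Choose $k$ uniformly from $[0,\lambda^j)$ for geometrically increasing $\lambda^j$ with $\lambda=6/5$. Check each measured outcome classically with one evaluation of $g$, and stop as soon as an outcome is marked. Averaging $\sin^2((2k+1)\theta)$ over a uniform $k$ shows that once $\lambda^j\gtrsim 1/\theta$, each round succeeds with constant probability. The total cost is therefore a geometric sum of order $O(\sqrt{N/|M|})$. If no marked element is found within $O(\sqrt N)$ iterations, we report $M=\emptyset$.

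\emph{Repeated version.} Maintain a classical list $L$ of elements found so far. Stored in QRAM, $L$ gives a modified predicate $g_L(i)=g(i)\wedge(i\notin L)$ that can be evaluated coherently with polylogarithmic overhead. While elements remain, run the unknown-$|M|$ search on $g_L$. When $j$ elements have already been found, $|M|-j$ marked elements remain, so this step costs $O(\sqrt{N/(|M|-j)})$. Summing over $j=0,\dots,|M|-1$ gives
\[
\sum_{r=1}^{|M|}\sqrt{N/r}=O\!\left(\sqrt{N|M|}\right).
\]
A final empty search, costing $O(\sqrt N)\le O(\sqrt{N|M|})$ when $M\neq\emptyset$, certifies termination.

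\emph{Main obstacle: success probability of the repeated version.} Each of the $|M|+1$ searches must succeed. Simply repeating each one $O(\log |M|)$ times would add a logarithmic factor, which the $\widetilde{O}$ tolerates. False positives are impossible, since every outcome is checked classically. The one real failure mode is that the terminating search falsely reports ``empty'' while marked elements remain. I would handle this by repeating the termination check $O(\log N)$ times, which drives the failure probability below $1/3$ for the overall union bound. Because every step is verified classically, failures can only make the algorithm stop early and never corrupt the output.
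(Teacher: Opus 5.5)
Your proposal is correct and follows the same route as the paper: it cites Grover's algorithm for the standard version, and obtains the repeated version by recording a list of found solutions, masking them in the oracle, and re-running the standard search. Your accounting $\sum_{r=1}^{|M|}\sqrt{N/r}=O(\sqrt{N|M|})$, together with the BBHT schedule for the unknown number of remaining solutions, is a more careful version of the paper's $|M|\cdot\widetilde{O}(\sqrt{N/|M|})$ estimate, since the number of remaining solutions shrinks as the search proceeds.
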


\textbf{Quantum Walk on Johnson Graphs.}
Another used quantum tool is quantum walk search, especially the one over Johnson graphs.
Define a search problem related to graph $G$.
Let $F$ be a finite set with $|F|=O(n^c)$ for constant $c$, and $r\leq|F|$.
We denote the set of all $r$-size subsets of $F$ by $\tbinom{F}{r}$, which has cardinality $\tbinom{|F|}{r}$.
Given a Boolean function $f_G\colon \tbinom{F}{r}\to\{0,1\}$ depending on $G$.
The goal is to decide whether $f_{G}^{-1}(1)\neq\emptyset$.
Johnson graph $J(F,r)$ has vertex set $\binom{F}{r}$, where two vertices are adjacent if they differ in exactly one element.

\begin{lemma}[Quantum walk search, \citet{ambainis2007quantum,magniez2007search, bonnetain2023finding}]
\label{lem:quantum_walk_search}
    Suppose
    $|f_{G}^{-1}(1)|/\binom{|F|}{r}\ge\varepsilon$ when $f_{G}^{-1}(1)\neq\emptyset$.
    Then a marked element can be found with probability at least $3/4$ in time $ \widetilde{O}\!\left(\mathsf{S} + \frac{1}{\sqrt{\varepsilon}}\bigl(\sqrt{r}\,\mathsf{U} + \mathsf{C}\bigr) \right)$, where $\mathsf{S},\mathsf{U},\mathsf{C}$ denote setup, update, and checking costs.
\end{lemma}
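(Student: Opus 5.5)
We would prove this lemma by specializing the generic MNRS quantum-walk search framework to the Johnson graph, so that everything reduces to a spectral-gap computation. The framework, for a symmetric ergodic Markov chain $P$ with spectral gap $\delta$ and marked fraction at least $\varepsilon$ under the stationary distribution, finds a marked state with constant probability at cost $O\bigl(\mathsf{S} + \frac{1}{\sqrt{\varepsilon}}(\frac{1}{\sqrt{\delta}}\mathsf{U} + \mathsf{C})\bigr)$. Here $\mathsf{S}$ prepares the stationary superposition together with the associated data, $\mathsf{U}$ performs one step of the walk while updating the data, and $\mathsf{C}$ computes the marking bit from the data. We take $P$ to be the simple random walk on $J(F,r)$: from an $r$-subset $A$, remove a uniformly random element of $A$ and add a uniformly random element of $F\setminus A$. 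Its stationary distribution is uniform on $\binom{F}{r}$, so the hypothesis $|f_G^{-1}(1)|/\binom{|F|}{r}\ge\varepsilon$ is exactly the marked-fraction condition that the framework requires.

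The key steps, in order, are as follows.

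\textbf{(i) Spectral gap.} We bound the gap of $J(F,r)$. The Johnson scheme is well understood: with $N=|F|$, the eigenvalues of the normalized transition matrix are $1-\frac{k(N+1-k)}{r(N-r)}$ for $k=0,\dots,\min(r,N-r)$. Taking $k=1$ gives $\delta = \frac{N}{r(N-r)} \ge \frac{1}{r}$, so $1/\sqrt{\delta}\le\sqrt{r}$. This is what turns the generic bound into the stated form $\sqrt{r}\,\mathsf{U}$.

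\textbf{(ii) Reflection about the stationary state.} We recall the MNRS construction: Szegedy's walk operator $W(P)$ is built from two reflections implemented with a constant number of update operations. Phase estimation on $W(P)$ to precision $\Theta(\sqrt{\delta})$, with $O(\log(1/\varepsilon))$ repetitions, implements an approximate reflection $R$ about the stationary state $\ket{\pi}$ with error $O(\sqrt{\varepsilon})$ per use. Its cost is $\widetilde{O}(\mathsf{U}/\sqrt{\delta})$.

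\textbf{(iii) Amplitude amplification.} We run amplitude amplification from $\ket{\pi}$, alternating the checking reflection (cost $\mathsf{C}$) with $R$, for $O(1/\sqrt{\varepsilon})$ rounds. Because $\varepsilon$ is only a lower bound, we use the standard exponentially growing guess schedule, which costs only logarithmic overhead.

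\textbf{(iv) Boosting the success probability.} Each run ends by measuring a subset, which we verify with one extra check. A constant number of independent repetitions raises the success probability to at least $3/4$. The $\widetilde{O}$ absorbs the logarithmic factors coming from phase estimation and the guessing schedule. Because the data structures are coherent (QRAM), the time bounds follow the same accounting as the query bounds.

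The main obstacle is the error analysis in step (ii). The approximate reflection must act as $-I$ up to small error on the orthogonal complement of $\ket{\pi}$, within the span reached by the algorithm. This requires that every nontrivial eigenphase of $W(P)$ is at least $\Omega(\sqrt{\delta})$, which is Szegedy's spectral lemma relating the eigenvalues $\cos\theta$ of $P$ to the phases $\pm\theta$ of $W(P)$. The errors must also accumulate to only $O(1)$ over $1/\sqrt{\varepsilon}$ rounds. We would follow the MNRS recursive phase-estimation argument, and, as in \citet{bonnetain2023finding}, note that the same analysis goes through when $\mathsf{U}$ and $\mathsf{C}$ are implemented with bounded error.
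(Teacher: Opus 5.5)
Your proposal is correct and is the standard derivation behind this lemma: the paper gives no proof of its own but imports the bound from Ambainis, MNRS and Bonnetain et al., and instantiating the MNRS bound $\mathsf{S}+\frac{1}{\sqrt{\varepsilon}}(\frac{1}{\sqrt{\delta}}\mathsf{U}+\mathsf{C})$ with the Johnson-graph gap $\delta=|F|/(r(|F|-r))\ge 1/r$ is exactly how the $\sqrt{r}\,\mathsf{U}$ form arises. Two details should be stated explicitly: the MNRS gap is $1-\max_{k\ge1}|\lambda_k|$, so note that the most negative eigenvalue of $J(F,r)$ has magnitude only $1/\max(r,|F|-r)$; and the data $D_G(A)$ must depend on $A$ alone (history independence) for the coherent updates to interfere correctly in the time-based accounting.
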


More details and useful tools are deferred to~\Cref{appsec:prelim}.

\section{Quantum Algorithms for Triangle Listing}
\label{sec:q-triangle_listing}

In this section, we present quantum algorithms for triangle listing based on multiple algorithmic paradigms. 
Our results combine (i) a heavy--light vertex partition framework for sparse graphs, (ii) extensions of triangle detection via quantum walk search, and (iii) a Grover search approach.

\subsection{Heavy--Light Vertex Partition}
\label{subsec:hl_triangle_listing}

We present the quantum triangle listing algorithm based on heavy--light vertex partition. 
Let $\Delta$ be a threshold to be specified later, and we partition the vertex set $V$ into heavy vertices $V_H = \{v \in V : \deg(v) > \tfrac{9}{10}\Delta\}$ and light vertices $V_L = \{v \in V : \deg(v) \leq \tfrac{11}{10}\Delta\}$, where the overlap accounts for estimation slack (in the case degree needs estimation) and does not affect correctness.
Since $\sum_{v \in V} \deg(v) = 2m$, we have $|V_H| \leq \tfrac{20m}{9\Delta}$. 
Then every triangle in the graph falls into one of two categories: triangles containing at least one light vertex, and triangles induced entirely by heavy vertices.
The algorithm is given in~\Cref{alg:HL-TriangleListing}, with the following guarantee.

\begin{algorithm}[t]
\caption{\textsc{HL-TriangleListing}($\mathcal{O}_G, \Delta$)}
\label{alg:HL-TriangleListing}
\begin{algorithmic}[1]
\STATE $\mathcal{T}_L \gets \emptyset$, $\mathcal{T}_H \gets \emptyset$
\STATE Partition $V$ into $V_L$ and $V_H$
\FOR{\textbf{each} $v \in V_L$}
\STATE $\mathcal{R}_v = \{\{v,u,w\} : u,w \in N(v)\}$
\STATE $\mathcal{T}_L$ add triangles found by repeated Grover search on $\mathcal{R}_v$ 
\STATE Remove $v$ and incident edges from $\mathcal{O}_G$
\ENDFOR
\STATE $\mathcal{T}_H \gets$ triangles found by repeated Grover search on all triples of the remaining graph
\STATE \textbf{return} $\mathcal{T}_L \cup \mathcal{T}_H$
\end{algorithmic}
\end{algorithm}

\begin{theorem}
\label{thm:hl_q-triangle_listing}
Let $G = (V,E)$ be a graph with $n$ vertices, $m$ edges, and $t$ triangles. Setting $\Delta = \sqrt{m}$,
\textup{\textsc{HL-TriangleListing}} outputs all triangles with high probability in
$\widetilde{O}(m + m^{3/4} t^{1/2})$ time.
\end{theorem}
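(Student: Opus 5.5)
We split the analysis into correctness and running time, with the cost divided into the partition step, the light phase, and the heavy phase, and we choose $\Delta=\sqrt{m}$ at the end to balance them.

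\textbf{Correctness.} Every triangle either contains a vertex of $V_L$ or lies inside $V_H\setminus V_L$. Suppose a triangle contains light vertices, and let $v$ be the first of them processed in the loop. At that moment none of its vertices has been removed, so the triangle lies in $\mathcal{R}_v$. It is then output by the repeated Grover search of \Cref{lem:grover_search}, and it is output exactly once, because $v$ is deleted immediately afterwards. Triangles with no light vertex survive into the final graph on $V_H\setminus V_L$ and are found in the last step. Each Grover call succeeds with probability $2/3$. Repeating each call $O(\log n)$ times and taking a union bound over at most $n+1$ calls gives high probability of success, at only a polylogarithmic cost.

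\textbf{Running time.} Computing the partition costs $O(n)$ degree queries. Isolated vertices can be ignored, so we may assume $n=O(m)$. For light vertex $v$, let $t_v$ be the number of triangles it lists. Its search space $\mathcal{R}_v$ has size $O(\deg(v)^2)$, so its cost is $\widetilde{O}(\sqrt{\deg(v)^2\,(t_v+1)})=\widetilde{O}(\deg(v)\sqrt{t_v+1})$. Summing over light vertices and applying Cauchy--Schwarz with $\deg(v)\le \tfrac{11}{10}\Delta$ gives
\[
\sum_{v\in V_L}\deg(v)\bigl(1+\sqrt{t_v}\bigr)\le 2m+\Bigl(\sum_v \deg(v)^2\Bigr)^{1/2}\Bigl(\sum_v t_v\Bigr)^{1/2}\le 2m+\sqrt{\Delta\cdot 2m\cdot t}.
\]
Here $\sum_v t_v\le t$ because each triangle is listed once, and $\sum_v\deg(v)^2\le \Delta\sum_v\deg(v)=O(\Delta m)$. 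The light phase therefore costs $\widetilde{O}(m+\sqrt{\Delta m t})$.

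For the heavy phase, the remaining graph has at most $|V_H|=O(m/\Delta)$ vertices. Repeated Grover search over its $O((m/\Delta)^3)$ triples, with at most $t$ marked, costs $\widetilde{O}((m/\Delta)^{3/2}\sqrt{t})$.

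Setting $\Delta=\sqrt{m}$ makes both phases $\widetilde{O}(m^{3/4}t^{1/2})$. Adding the $O(m)$ terms yields the claimed $\widetilde{O}(m+m^{3/4}t^{1/2})$.

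\textbf{Main obstacle.} The main obstacle is making the Grover and oracle steps rigorous. The search over $\mathcal{R}_v$ must be implemented as a search over index pairs $(i,j)\in[\deg(v)]^2$, using neighbor queries to obtain $u,w$ and a vertex-pair query to test the edge $\{u,w\}$. The \emph{removal} of $v$ from $\mathcal{O}_G$ must be simulated by keeping a classical record (in QRAM) of the deleted vertices and marking a triple only if none of its vertices is deleted. This costs $O(\log n)$ per check.

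The repeated Grover search also assumes the number of marked elements is known or can be handled. We will use the standard variant that finds elements one at a time with exponentially growing guesses, stopping after $O(\log n)$ consecutive failures. This costs $\widetilde{O}(\sqrt{N(|M|+1)})$ even when $|M|=0$, and that $+1$ term is exactly what produces the additive $m$.

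Finally, the heavy search over triples needs $V_H$ stored explicitly so it can be indexed, which takes $O(n)$ classical time in the partition step.
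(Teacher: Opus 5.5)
Your proposal is correct and follows essentially the same route as the paper: a heavy--light partition at threshold $\Delta$, repeated Grover search over neighbor pairs of each light vertex with deletion afterward to avoid duplicates, Cauchy--Schwarz giving $\widetilde{O}(\sqrt{m\Delta t})$ for the light phase, Grover over the $O((m/\Delta)^3)$ remaining heavy triples, and balancing at $\Delta=\sqrt{m}$. The only difference is bookkeeping for the additive $m$: you get it from the $\sqrt{N(|M|+1)}$ cost of Grover searches with few or no marked elements (searching over neighbor indices on the fly), whereas the paper attributes it to explicitly enumerating each $N(v)$ in $\widetilde{O}(\deg(v))$ time; both are valid, and yours handles the $|M|=0$ case more carefully.
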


\begin{proof}
We first compute the heavy--light vertex partition by degree queries in time $O(n)$.
Every triangle in $G$ either contains a light vertex or is fully contained in $V_H$. Removing each light vertex after processing ensures that every triangle is listed, and exactly once.

\textbf{Light Vertices.}
Fix $v \in V_L$ and let $t_v$ denote the number of triangles incident to $v$. We enumerate $N(v)$ using neighbor queries in $\widetilde{O}(\deg(v))$ time. The search space $\mathcal{R}_v$ has size $\binom{\deg(v)}{2}$, and repeated Grover search lists all such triangles in time
$
\widetilde{O}\!\left(\sqrt{\binom{\deg(v)}{2}\, t_v}\right)
= \widetilde{O}\!\left(\deg(v)\sqrt{t_v}\right)
$.
Summing over all $v \in V_L$ and applying Cauchy--Schwarz inequality,
$
\sum_{v \in V_L} \deg(v)\sqrt{t_v}
\le
\sqrt{\left(\sum_{v \in V_L} \deg^2(v)\right)
      \left(\sum_{v \in V_L} t_v\right)}
$.
Each triangle contains three vertices, so $\sum_{v \in V_L} t_v \le \sum_{v \in V} t_v = 3t$. Moreover, since $\deg(v) \le \Delta$ for all $v \in V_L$,
$
\sum_{v \in V_L} \deg(v)^2
\le
\Delta \sum_{v \in V_L} \deg(v)
\le 2m\Delta
$.
Thus, the total time for processing light vertices is
$\widetilde{O}(\sqrt{m\Delta t})$.

\textbf{Heavy Vertices.}
The number of heavy vertices satisfies $|V_H| \le \tfrac{20m}{9\Delta}$. Applying the repeated Grover search on the triples of the remaining graph yields time
$
\widetilde{O}\!\left(|V_H|^{3/2} t_H^{1/2}\right)
=
\widetilde{O}\!\left(\left(\frac{m}{\Delta}\right)^{3/2} t^{1/2}\right),
$
where $t_H \le t$ is the number of triangles induced entirely by heavy vertices.

Choosing $\Delta$ to balance the two terms,
$\sqrt{m\Delta t} = (m/\Delta)^{3/2}\sqrt{t}$, gives $\Delta = \sqrt{m}$. 
Note that the cost of partitioning is $O(n)$, which is dominated by $m$ by removing isolated vertices in advance, gives the claimed $\widetilde{O}(m + m^{3/4} t^{1/2})$ running time.
\end{proof}

\subsection{Quantum Walk Approach}

Our second approach is based on quantum walk search and generalizes the triangle \emph{finding} framework of~\citet{le2014improved} to triangle \emph{listing}. 
The resulting algorithm, \textsc{QW-TriangleListing} (\Cref{alg:QW-TriangleListing}), is particularly efficient when the number of triangles is moderate.
We present high-level descriptions here and defer technical details and proofs to~\Cref{subsec:q-triangle_listing_analysis}. The algorithm proceeds in two phases.

\textbf{Pre-listing.}
In the \textsc{Pre-Listing} phase, we sample a random vertex subset $S \subseteq V$ and use repeated Grover search to enumerate all triangles containing at least one vertex in $S$.
Let $t_1$ denote the number of such triangles.
For each discovered triangle, we additionally list all triangles incident to its edges and then remove these edges from the graph oracle.
Let $t_1'$ be the number of additional triangles.

For any vertex set $X \subseteq V$, define $\mydelta{X}$ as the set of vertex-pairs in $X$ that do not share a common neighbor in $S$.
Note that for any pair $\{v,w\} \in \binom{X}{2} \setminus \mydelta{X}$, there exists a vertex $u \in S$ such that $\{v,w\} \subseteq N_G(u)$.
Consequently, if $\{v,w\} \in E$, then $(u,v,w)$ forms a triangle containing a vertex from $S$; otherwise, $\{v,w\}$ does not participate in any triangle in $G$ at all.
After~\textsc{Pre-Listing}, every remaining triangle in the graph is supported entirely on vertex-pairs in $\mydelta{V}$.

\begin{lemma}
\label{lem:pre_listing}
    Let $G = (V, E)$ be an $n$-vertex graph.
    With high probability, \textup{\textsc{Pre-Listing}} finds all triangles containing at least one vertex from $S$ in time $\widetilde{O}\left( n s^{1/2} t_1^{1/2} \;+\; \sqrt{n t_1 \, t_1'}\right).$
    The initialization for $H_0$ of size $h$ requires $O(sh)$ time.
    Moreover, after the procedure terminates, every remaining triangle in $G$ is supported entirely on vertex-pairs contained in $\mydelta{V}$.
\end{lemma}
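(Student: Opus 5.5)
The proof splits into three parts, one for each claim of \Cref{lem:pre_listing}: the cost of enumerating triangles through $S$, the cost of the follow-up listing along edges of discovered triangles, and the structural claim about $\mydelta{V}$.

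\textbf{Step 1: listing triangles through $S$.} Let $|S| = s$. We run one repeated Grover search (\Cref{lem:grover_search}) over the search space $S \times \binom{V}{2}$, which has size $N = O(sn^2)$. The predicate tests whether $\{u,v,w\}$ is a triangle using three vertex-pair queries. Exactly $t_1$ elements are marked, up to a constant factor from the ordering of vertices. The repeated version therefore outputs all of them in $\widetilde{O}(\sqrt{sn^2 t_1}) = \widetilde{O}(n s^{1/2} t_1^{1/2})$ time. To boost success to $1-\poly(1/n)$ we repeat $O(\log n)$ times and take the union, which costs only logarithmic factors.

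\textbf{Step 2: triangles on the discovered edges.} For each discovered triangle and each of its at most three edges $\{a,b\}$, we run repeated Grover search over the $n$ candidate third vertices $c$, marking those with $\{a,c\},\{b,c\}\in E$. If $k_e$ triangles contain edge $e$, this costs $\widetilde{O}(\sqrt{n k_e})$. After processing, we remove the three edges from the oracle by maintaining a classical deleted-edge set that the predicate consults in QRAM; this removal is what prevents rediscovery.

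Summing over the $O(t_1)$ edges and applying Cauchy--Schwarz gives
\[
\sum_e \sqrt{n k_e} \le \sqrt{n \cdot O(t_1)\cdot \sum_e k_e}.
\]
Each additional triangle is counted $O(1)$ times in $\sum_e k_e$, because it shares at most three edges, and each edge is removed once. So $\sum_e k_e = O(t_1 + t_1')$. The total is $\widetilde{O}(\sqrt{n t_1 t_1'})$ plus a term $\widetilde{O}(\sqrt{n}\,t_1)$. I would absorb the latter as follows: the Step 1 bound $n s^{1/2} t_1^{1/2}$ dominates $\sqrt{n}\,t_1$ whenever $t_1 \le n s$, and $t_1 \le s n^2$ always holds. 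If this absorption fails, I would state it as the extra $O(\sqrt{n}\,t)$ term mentioned in the overview.

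\textbf{Initialization.} The data structure $H_0$ records, for each of $h$ vertex pairs, whether they share a neighbor in $S$. It is built by querying each pair against each $u\in S$, which costs $O(sh)$.

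\textbf{Step 3: the structural claim.} Take any remaining triangle $\{x,y,z\}$ and suppose a pair, say $\{y,z\}$, lies outside $\mydelta{V}$. Then some $u \in S$ is adjacent to both $y$ and $z$. Since $\{y,z\}\in E$, the set $\{u,y,z\}$ is a triangle through $S$, which Step 1 found. Step 2 then listed every triangle containing $\{y,z\}$, including $\{x,y,z\}$, and deleted that edge, so $\{x,y,z\}$ is not a remaining triangle. This is a contradiction.

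\textbf{Main obstacle.} The subtle point is consistency: the edge removals change the graph while Step 1's list was computed on the original graph. I would resolve this by completing Step 1 entirely before any deletions, and by defining $\mydelta{V}$ with respect to the original neighborhoods.
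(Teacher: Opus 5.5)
Your Steps 1 and 3 follow the paper's proof. Step 1 is the same repeated Grover search over $S\times\binom{V}{2}$. Step 3 is the same observation: every edge in $\bigcup_{u\in S}\binom{N_G(u)}{2}$ lies in a triangle of $\mathcal{T}_S$, so it is masked only after all triangles through it have been listed. Your per-edge searches with incremental deletion in Step 2 are an acceptable variant of the paper's single call to \textsc{IncidentTriangleListing}, which runs one repeated Grover search over $V\times E_S$ and then masks all of $E_S$ at once. Two steps, however, do not work as written.

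\textbf{Absorbing $\sqrt{n}\,t_1$.} Your absorption argument does not follow. Absorbing $\sqrt n\,t_1$ into $n s^{1/2}t_1^{1/2}$ needs $t_1\le ns$, but you only have $t_1\le sn^2$. In fact $t_1$ can be as large as $\Theta(|S|\,n^2)$ (e.g.\ in $K_n$), so the absorption genuinely fails in general. Falling back on an extra $\widetilde O(\sqrt n\,t)$ term would prove a weaker statement than the lemma.

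The missing observation concerns the definition of $t_1'$. The main text loosely calls it the number of ``additional'' triangles, but the appendix defines $t_1'$ as the number of triangles incident to $E_S$. Every triangle of $\mathcal{T}_S$ is incident to $E_S$, since all three of its edges lie in $E_S$, so $t_1\le t_1'$. In your scheme each per-edge search costs $\widetilde O(\sqrt{n(k_e+1)})$, where the $+1$ pays for certifying that no marked element remains; this is exactly the source of your stray term. Each triangle is found at most once per edge, so $\sum_e k_e\le 3t_1'$. Cauchy--Schwarz over the at most $3t_1$ edges then gives $\widetilde O(\sqrt{n\cdot 3t_1\cdot(3t_1'+3t_1)})=\widetilde O(\sqrt{n t_1 t_1'})$ with no leftover term.

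\textbf{Initialization of $H_0$.} $H_0$ is not a list of $h$ vertex pairs. It is the starting node of the walk on $J(V,h)$, i.e.\ an $h$-subset of vertices, and $\mydelta{H_0}$ ranges over all $\binom{h}{2}$ of its pairs. Recording a shared-neighbor-in-$S$ bit pair by pair would cost $\Theta(|S|\,h^2)$ queries, not $O(sh)$.

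The paper's bound comes from storing, for each $u\in H_0$, the set $N_G(u)\cap S$, using $h|S|=\widetilde O(sh)$ vertex-pair queries. A pair $\{v_1,v_2\}\subseteq H_0$ then belongs to $\mydelta{H_0}$ exactly when the two stored sets are disjoint, so no further queries are needed. This per-vertex representation is also what gives the outer walk its $O(s)$ update cost, so it matters to get it right here.
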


For set $X \subseteq V$ and a vertex $w \in V$, let $\mydelta{X,w} = \binom{N_G(w)}{2} \cap \mydelta{X}$ denote the set of vertex-pairs in $X$ that do not share any common neighbor in $S$, but do share $w$ as a common neighbor.
By the randomness in the choice of $S$, we have such a good property $\sum_{w\in V}\left|\mydelta{X,w}\right| \le \frac{n|X|^2}{s}$ with high probability.
Consequently, the total number of candidate triangles vertex-pairs in $X$ that can participate in is reduced from $n\lvert X \rvert^2$ to $\frac{n \lvert X \rvert^2}{s}$. This preprocessing step significantly reduces the search space for the quantum walk search in the second phase.

\textbf{Exhaustive Listing.}
After the previous step, let $t_2$ denote the number of remaining triangles.
The second phase repeatedly applies a two-level quantum walk subroutine, \textsc{Q-WalkSearch}, to identify a triangle supported on $\mydelta{V}$.
Each time a new triangle is found, all triangles incident to its edges are enumerated, and its three edges are then removed from the graph oracle in \textsc{IncidentTriangleListing} (\Cref{alg:incident_triangle_listing}).
The process terminates once no triangles remain.

We implement edge removal by modifying the \emph{graph oracle} $\mathcal{O}_G$ via standard oracle masking.
Specifically, we associate an auxiliary qubit, initialized to $0$, with each edge index, and flip it to $1$ once the edge is removed (i.e., belongs to $E_I$).
The modified oracle returns an edge if and only if it exists in the original graph and its auxiliary qubit remains $0$.
This technique is standard when extending Grover search from finding a single marked element~(\Cref{lem:standard_grover_search}) to listing all marked elements~(\Cref{lem:repeated_grover_search}), by disabling previously discovered items in the search oracle.

\textbf{Two-Level Quantum Walk Search.} We employ a two-level quantum walk to find a marked vertex-pair set $H^*$ (i.e., containing an edge from some remaining triangle). The \emph{outer} walk is over Johnson graph $J(V, h)$, consisting of $h$-size vertex subsets $H \subseteq V$. We know $\mydelta{H}$ contains all vertex-pairs that may still support a triangle after~\textsc{Pre-Listing}.

To check whether $\mydelta{H}$ contains an edge that participates in a triangle, we search for a vertex $w \in V$ such that $w$ completes a triangle with some pair in $\mydelta{H}$.
For a fixed $w$, this task is handled by an \emph{inner} quantum walk over Johnson graph $J(H, h^{2/3})$ (whose vertices are smaller subsets $H'\subseteq H$), which checks whether $\mydelta{H',w}$ contains an actual edge of $G$.
If such a pair exists, it forms a triangle together with $w$.

The analysis differs from triangle detection mainly by bounding the fraction of marked states in the outer walk.
Unlike the simple edge‑counting argument used in prior work, we bound the marked fraction via a combinatorial packing argument: we take a maximum collection of pairwise edge‑disjoint remaining triangles, extract a set of certifying edges of low maximum degree, and apply a second‑moment calculation.  This yields a rigorous lower bound on the marked fraction that does not assume independence of triangle‑supporting edges.  The following lemma summarizes the resulting search cost.

\begin{lemma}
\label{lem:first_level_quantum_walk}
    Given the graph remaining after~\textup{\textsc{Pre-Listing}} with $n$ vertices and $t_2$ remaining triangles, let $\nu$ be the maximum number of pairwise edge‑disjoint triangles in this graph.
    There exists a quantum walk search algorithm on the Johnson graph $J(V,h)$, starting from an initial state $H_0$, that, with high probability, finds a set $H\subseteq V$ and a vertex $w\in V$ such that $\mydelta{H,w} \cap E \neq \emptyset$.
    Denote the found set and vertex by $H^*$ and $w^*$.
    The running time of the algorithm is 
    \[
    \widetilde{O}\!\Bigl( \bigl(1+\frac{n}{h\sqrt{\nu}}\bigr) \cdot \bigl( s h^{1/2} + n^{1/2} h^{2/3} + n^{1/2} s^{-1/2} h \bigr) \Bigr),
    \]
    where $s$ is the sample size in~\textup{\textsc{Pre-Listing}}.
\end{lemma}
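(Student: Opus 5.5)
We would obtain the lemma by applying the quantum walk search bound (\Cref{lem:quantum_walk_search}) to the outer walk on $J(V,h)$. A state $H$ is marked if some $w\in V$ has $\mydelta{H,w}\cap E\neq\emptyset$. The data structure stored with $H$ consists of, for each $v\in H$, the set $N_G(v)\cap S$. This suffices to decide membership of any pair of $H$ in $\mydelta{H}$ without further queries. The setup cost is therefore $O(sh)$, which is the stated initialization cost, paid once from $H_0$. The update cost is $O(s)$: replacing one vertex requires $s$ vertex-pair queries against $S$. The $\sqrt{h}\,\mathsf{U}$ term thus contributes $s h^{1/2}$.

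The checking cost $\mathsf{C}$ comes from a variable-cost Grover search over $w\in V$, as in \citet{le2014improved}. For a fixed $w$, an inner walk on $J(H,h^{2/3})$ stores the induced edges of $H'$ among $N(w)$, so it looks for a pair in $\mydelta{H',w}$ that is an edge. Let $d_w=|\mydelta{H,w}|$. The inner walk costs roughly $h^{2/3}+\sqrt{h/h^{2/3}}\cdot(\text{marked fraction})^{-1/2}(\sqrt{h^{2/3}}\cdot h^{2/3}\text{-pair update})$, and this simplifies to a cost of order $h^{2/3}+\sqrt{d_w}$. Variable-cost search over $w$ then gives
\[
\mathsf{C}=\widetilde{O}\Bigl(\sqrt{\textstyle\sum_w (h^{2/3}+\sqrt{d_w})^2}\Bigr)=\widetilde{O}\Bigl(n^{1/2}h^{2/3}+\sqrt{\textstyle\sum_w d_w}\Bigr).
\]
The good property of the random set $S$ gives $\sum_w d_w\le nh^2/s$ with high probability, so the second term is at most $n^{1/2}s^{-1/2}h$. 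This yields the second factor in the stated running time.

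The main obstacle is lower-bounding the marked fraction $\varepsilon$ by $\Omega(\min\{1,h^2\nu/n^2\})$. From this bound, $1/\sqrt{\varepsilon}=O(1+n/(h\sqrt{\nu}))$ follows directly. The argument would go as follows.
\begin{itemize}
\item Fix a maximum family of $\nu$ pairwise edge-disjoint remaining triangles and take one edge from each, obtaining a set $F$ of $\nu$ distinct edges. By \Cref{lem:pre_listing}, each edge of $F$ lies in $\mydelta{V}$, and hence in $\mydelta{H}$ whenever both endpoints lie in $H$; its third vertex then serves as a witness $w$.
\item Thin $F$ to a subfamily $F'$ of size $\widetilde{\Omega}(\nu)$ whose maximum degree is at most $O(\sqrt{\nu})$. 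To do this, choose for each triangle the edge avoiding its highest-degree vertex, or prune greedily. Edge-disjointness bounds the number of triangles at any vertex by about $n/2$, and a counting argument controls high-degree vertices.
\item Let $X$ be the number of edges of $F'$ inside a random $h$-set. Then $\mathbb{E}X\approx |F'|h^2/n^2$. In the variance, the contribution of edge pairs sharing a vertex is bounded by $\sum_v\deg_{F'}(v)^2\,h^3/n^3$, which the degree cap keeps small.
\item The Paley--Zygmund inequality then gives $\Pr[X>0]\ge(\mathbb{E}X)^2/\mathbb{E}X^2=\Omega(\min\{1,h^2\nu/n^2\})$.
\end{itemize}
The remaining risk is whether the degree-capping step loses only polylogarithmic factors. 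If it does not, we would fall back to a dyadic bucketing of vertex degrees in $F$.

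Finally, we combine setup, update and checking costs in \Cref{lem:quantum_walk_search}, and amplify the success probability to a high-probability guarantee with $O(\log n)$ repetitions.
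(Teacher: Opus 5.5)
Your architecture matches the paper's proof. It uses the same marking rule, the same data structure $N_G(v)\cap S$ with $O(s)$ updates, variable-cost search over $w$ with per-$w$ cost $h^{2/3}+\sqrt{|\mydelta{H,w}|}$ together with the good-set bound, and a packing, second-moment and Paley--Zygmund bound on the marked fraction.

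\textbf{The gap: the degree cap.} This is the one genuinely new ingredient, and you leave it as an open ``risk''. Most of the routes you list for it cannot work.

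\emph{Pruning a fixed $F$ fails.} Greedy pruning or dyadic bucketing of a \emph{fixed} $F$ breaks on the following example. Let the $\nu$ packing triangles be $\{x,a_i,b_i\}$, all through one vertex $x$, and let $F=\{\{x,a_i\}\}_i$. Then $F$ is a star, so every subfamily of maximum degree $O(\sqrt{\nu})$ has only $O(\sqrt{\nu})$ edges, a polynomial loss. Keeping the whole star instead can certify at most $\Pr[Y>0]\le\Pr[x\in H]=h/n$. That is polynomially below $\min\{1,\nu h^2/n^2\}$ when $\nu h\gg n$ and $h\ll n$. The bound of $n/2$ packing triangles per vertex does not help here.

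\emph{The missing idea.} The certifying edge must be re-chosen inside each triangle, using the fact that edge-disjointness forces any two vertices to share at most one packing triangle. Call a vertex heavy if it lies in more than $\sqrt{\nu}$ packing triangles; there are fewer than $3\sqrt{\nu}$ such vertices. Your first rule does work, provided ``degree'' means packing degree; with $G$-degree it can again produce the star. Concretely, take in each $T$ the edge opposite its vertex $y_T$ of largest packing degree. A light vertex then receives at most $\sqrt{\nu}$ chosen edges. A heavy vertex $x$ receives one from $T$ only if $y_T\neq x$, which forces $y_T$ to be heavy as well. Distinct such $T$ give distinct $y_T$, since $\{x,y_T\}$ is an edge of $T$. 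Hence all $\nu$ edges are kept with maximum degree at most $3\sqrt{\nu}$, and nothing is lost.

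This is essentially \Cref{lem:select-edges} in the paper, which picks a light--light edge when $T$ has at most one heavy vertex and a heavy--heavy edge otherwise. With it, $\sum_v\deg_F(v)^2\le 6\nu^{3/2}$, so the adjacent-pair term is $O(\mu^{3/2})\le O(\mu+\mu^2)$, and your Paley--Zygmund step goes through as written.

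\textbf{A smaller point: the checking cost.} Cite \Cref{lem:second_level_quantum_walk} rather than your sketch. An inner walk that stores the edges induced on $H'$ and pays $h^{2/3}$ queries per update costs about $h^{4/3}$ per $w$. The bound $h^{2/3}+\sqrt{|\mydelta{H,w}|}$ instead comes from three choices:
\begin{itemize}
\item store only the bits $[\{v,w\}\in E]$ for $v\in H'$, so each update costs $O(1)$;
\item check by Grover search over $\mydelta{H',w}$;
\item mark only those $H'$ whose $|\mydelta{H',w}|$ is capped via the estimator $\mathcal{A}$.
\end{itemize}
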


\textbf{Setting Parameters.}
For the parameter setting below, assume for the moment that \textsc{QW-TriangleListing} is given an estimate $\hat t=\Theta(t)$ of the total number of triangles; this assumption will be removed later by a constant-ratio guessing schedule. 
No explicit knowledge of $t_1$, $t_1'$ or $t_2$ is needed.
Based on $\hat{t}$, we choose parameters $(s,h)$ to balance the costs of \textsc{Pre-Listing} and quantum walk search phase.
\[
(s,h)=
\begin{cases}
\bigl(n^{1/2}\hat{t}^{1/6},\; n^{3/4}\hat{t}^{1/4}\bigr), & \hat{t} \le n^{3/7},\\[4pt]
\bigl(n^{1/2}\hat{t}^{1/6},\; n\hat{t}^{-1/3}\bigr), & n^{3/7} < \hat{t} \le n^{3/5},\\[4pt]
\bigl(n^{2/3}\hat{t}^{-1/9},\; n\hat{t}^{-1/3}\bigr), & \hat{t} > n^{3/5}.
\end{cases}
\]

With these choices, we obtain the following guarantee.

\begin{theorem}
\label{thm:QW-triangle_listing}
Let $G$ be an undirected graph with $n$ vertices and $t$ triangles.
There exists a quantum algorithm, \textup{\textsc{QW-TriangleListing}}, with high probability, that outputs all triangles in $ \widetilde{O}\big(n^{5/4} t^{7/12} + n^{7/6} t^{7/9}\big)
$ time.
\end{theorem}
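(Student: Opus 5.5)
The plan is to analyze \textsc{QW-TriangleListing} as \textsc{Pre-Listing} followed by a loop of \textsc{Q-WalkSearch} and \textsc{IncidentTriangleListing} calls. I will bound the total cost by summing \Cref{lem:pre_listing}, the per-call cost of \Cref{lem:first_level_quantum_walk}, and the incident-listing cost, and then substitute the three parameter regimes for $(s,h)$. Correctness follows directly from the stated lemmas. After \textsc{Pre-Listing}, every remaining triangle is supported on pairs in $\mydelta{V}$, so any remaining triangle makes some $(H,w)$ marked. The loop stops only when the walk finds no marked element. Every triangle removed from the oracle has been output, since \textsc{IncidentTriangleListing} lists all triangles through an edge before deleting it. Failure probabilities are amplified to $1-\poly(1/n)$ with $O(\log n)$ repetitions and a union bound over at most $t$ iterations.

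\textbf{Number of walk calls and their cost.} Let $\nu_i$ be the maximum number of pairwise edge-disjoint triangles before the $i$-th call. I claim $\nu_{i+1}\le \nu_i-1$. The argument is as follows. If $T$ is the triangle found in call $i$, its three edges are removed. Hence any edge-disjoint packing in the new graph, together with $T$, is an edge-disjoint packing in the old graph. So there are at most $\nu_1\le t$ calls, and in the $j$-th call from the end we have $\nu\ge j$. Summing \Cref{lem:first_level_quantum_walk} over the calls, and using $\sum_{j\le t} j^{-1/2}=O(\sqrt t)$, gives a total walk cost of
\[
\widetilde{O}\Bigl(\bigl(\nu_1+\tfrac{n\sqrt{t}}{h}\bigr)\bigl(s h^{1/2}+n^{1/2}h^{2/3}+n^{1/2}s^{-1/2}h\bigr)\Bigr).
\]
\textsc{IncidentTriangleListing} runs Grover search over common neighbours of each of the three edges. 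Each triangle is charged to the edge through which it is discovered, which should give $\widetilde{O}(\sqrt{n}\,t)$ in total. Adding \Cref{lem:pre_listing} with $t_1,t_1'\le t$ gives the further terms $n s^{1/2}t^{1/2}+\sqrt{n}\,t+sh$.

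\textbf{Plugging in parameters.} I will check each regime of $(s,h)$ separately, with $\hat t=\Theta(t)$. For $t\le n^{3/7}$, the pre-listing term is $n\cdot n^{1/4}t^{1/12}\cdot t^{1/2}=n^{5/4}t^{7/12}$. The walk term has $n\sqrt t/h=n^{1/4}t^{1/4}$ multiplying $sh^{1/2}=n^{7/8}t^{7/24}$ and the other two summands, and I will verify that each product is at most $n^{5/4}t^{7/12}$ in this range. The middle and high regimes, where $h=n t^{-1/3}$ and the factor becomes $t^{5/6}$, should similarly produce the $n^{7/6}t^{7/9}$ term; the breakpoints $n^{3/7}$ and $n^{3/5}$ are exactly where the dominant summands swap. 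The additive $\sqrt{n}\,t$ and $sh$ terms are checked to be dominated by these two bounds. Finally, the assumption of knowing $\hat t$ is removed with a doubling schedule: run with $\hat t=2^k$ for $k=0,1,\dots$, abort a run once it exceeds its budget or lists more than $2\hat t$ triangles, and stop once a run completes. Because the bound is polynomial in $t$, this costs only a constant factor.

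\textbf{Main obstacle.} The delicate step is the term $\nu_1\cdot C$, where $C$ is the per-call check/update cost: it charges a full walk cost to each of up to $t$ calls even when the marked fraction is large. If this term is not dominated, I would refine the count by noting that the per-call cost is really $\max(1,n/(h\sqrt{\nu}))\cdot C$. I would then try batching: when $\nu$ is large, list many triangles per walk execution, or switch to the $\sqrt{n}\,t$-type direct listing, so that the $t\cdot C$ contribution is absorbed into $\sqrt{n}\,t$ and the stated bound.
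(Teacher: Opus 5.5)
Your outline follows the paper's proof step for step. It has the pre-listing cost $n s^{1/2}t^{1/2}+\sqrt n\,t+sh$ and the decrease $\nu_{i+1}\le \nu_i-1$ from edge-disjointness (the paper phrases this as $\nu_i\ge r-i+1$). It then uses $\sum_i \nu_i^{-1/2}=O(\sqrt t)$, charges $\widetilde O(\sqrt n\,t)$ for incident listing, and checks the three $(s,h)$ regimes.

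Your doubling schedule with an explicit budget cutoff is a valid alternative to the paper's verify-by-triangle-detection schedule. Correctness does not depend on $\hat t$, and for fixed $(s,h)$ the cost formula is monotone in $t$. The explicit cutoff is in fact what keeps runs with $\hat t<t$ within their budgets, which the paper's schedule leaves implicit.

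The thing to fix is your ``main obstacle,'' which is misdiagnosed. The term $\nu_1 C\le tC$ should not be batched away: it \emph{is} the second summand of the theorem. In regimes II and III one has $C=\widetilde O(n^{7/6}t^{-2/9})$, so $tC=n^{7/6}t^{7/9}$. The factor $n\sqrt t/h=t^{5/6}$ that you credit with producing this term gives only $n^{7/6}t^{11/18}$.

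In regime I, $C=\widetilde O(n t^{1/6})$ and $tC=n t^{7/6}$. This is at most $n^{5/4}t^{7/12}$ precisely when $t\le n^{3/7}$, which is where that breakpoint comes from. So the accounting you already wrote down closes the proof once the arithmetic is carried out, and no batching or switching to direct listing is needed.
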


\begin{algorithm}[t]
\caption{\textsc{QW-TriangleListing}($\mathcal{O}_G$, $n$, $\hat t$)}
\label{alg:QW-TriangleListing}
\begin{algorithmic}[1]
\STATE Set parameters $s,h$ according to the estimate $\hat t$
\STATE $\mathcal{T}(G), H_0 \gets$ \textsc{Pre-Listing}($\mathcal{O}_G$, $n$, $s$, $h$)
\WHILE{true}
    \STATE $T \gets$ \textsc{Q-WalkSearch}($\mathcal{O}_G$, $H_0$, $h$)
    \IF{$T = \emptyset$}
        \STATE \textbf{return} $\mathcal{T}(G)$
    \ENDIF
    \STATE Let $E_T$ be the three edges of triangle $T$
    \STATE $\mathcal{T}' \gets$ \textsc{IncidentTriangleListing}($\mathcal{O}_G$, $E_T$)
    \STATE $\mathcal{T}(G) \gets \mathcal{T}(G) \cup \{T\} \cup \mathcal{T}'$
\ENDWHILE
\end{algorithmic}
\end{algorithm}

\begin{algorithm}[t]
\caption{\textsc{Pre-Listing}($\mathcal{O}_G$, $n$, $s$, $h$)}
\label{alg:pre-listing}
\begin{algorithmic}[1]
\STATE Sample a set $S \subseteq V$ of size $\Theta(s \log n)$ uniformly at random
\STATE $\mathcal{R}_S = \{\{u, v, w\} \mid u \in S$, $\{v, w\} \in \tbinom{V}{2} \}$
\STATE $\mathcal{T}_S \gets$ triangles found by repeated Grover search on $\mathcal{R}_S$
\STATE $E_S = \bigl\{\, \{u,v\} \in E : \exists\, T \in \mathcal{T}_S, \{u,v\} \subseteq T \bigr\}$
\STATE $\mathcal{T}_S' \gets$ \textsc{IncidentTriangleListing}($\mathcal{O}_G$, $E_S$)
\STATE Initialize a quantum walk state $H_0$ with $|H_0| = h$
\STATE \textbf{return} $\mathcal{T}_S \cup \mathcal{T}_S'$, $H_0$
\end{algorithmic}
\end{algorithm}

\begin{algorithm}[t]
    \caption{\textsc{IncidentTriangleListing}($\mathcal{O}_G$, $E_I$)}
    \label{alg:incident_triangle_listing}
    \begin{algorithmic}[1]
    \STATE $\mathcal{R}_E = \{\{u, v, w\} \mid u \in V$, $\{v, w\} \in E_I \}$
    \STATE $\mathcal{T} \gets$ triangles found by repeated Grover search on $\mathcal{R}_E$
    \STATE Modify $\mathcal{O}_G$ by removing all edges in $E_I$
    \STATE {\bfseries return} $\mathcal{T}$
    \end{algorithmic}
\end{algorithm}

\begin{algorithm}[t]
\caption{\textsc{Q-WalkSearch}($\mathcal{O}_G$, $H_0$, $h$)}
\label{alg:quantum_walk_search}
\begin{algorithmic}[1]
\STATE Run the two-level quantum walk starting from $H_0$ (\Cref{lem:first_level_quantum_walk})
\STATE Obtain a marked vertex $w^*$ and a subset $H^* \subseteq V$
\STATE $\mathcal{R}_{H^*} = \{\{w^*, u, v\} \mid \{u, v\} \in \binom{H^*}{2}\}$
\STATE Use standard Grover search on $\mathcal{R}_{H^*}$
\STATE \textbf{return} the triangle found, or $\emptyset$ if none exists
\end{algorithmic}
\end{algorithm}

We defer the Grover-search-based algorithm and its analysis to \Cref{appsubsec:grover_triangle_listing}.

\textbf{Putting Things Together.}
We combine the triangle listing algorithms based on heavy--light vertex partition, quantum walk, and Grover search to obtain the final bound.

\begin{proof}[Proof of~\Cref{thm:main-triangle}]
\textsc{Q-TriangleListing} run the algorithms of~\Cref{thm:hl_q-triangle_listing}, ~\Cref{thm:QW-triangle_listing}, and ~\Cref{thm:grover_triangle_listing} in parallel, and return the output from the one that terminates first. 
The resulting running time is $\widetilde{O}\left(\min(n^{5/4} t^{7/12} + n^{7/6} t^{7/9}, m + m^{3/4}t^{1/2}, n^{3/2}t^{1/2})\right)$.
\end{proof}

\section{Quantum Algorithms for Triangle Cut Sparsification}
\label{sec:q-triangle_cut_spars}

We now present the quantum algorithm for triangle cut sparsification that leverages our triangle listing algorithm.
Note that directly applying quantum hypergraph sparsification, by representing each triangle as a hyperedge, does not yield what we want, since an edge can appear in both retained and discarded hyperedges at the same time.
Instead, we base our quantum algorithm on the classical strength-based motif cut sparsification framework of~\citet{kapralov2022motif}.
For completeness, we provide the full classical algorithm, and its time analysis in~\Cref{appsec:triangle_cut_spars}.

Once all the triangles are listed, the algorithm iteratively sparsifies the graph. Each iteration of the algorithm consists of two steps: finding critical edges and post-sampling the remaining edges. 
We show that both steps admit efficient quantum implementations, leading to an overall quantum speedup for triangle cut sparsification.

\subsection{Finding Critical Edges}
\label{subsec:q-find_critical}

We accelerate the first step by combining quantum triangle listing with Grover search. 
As in the classical construction, this step identifies edges whose total importance exceeds a prescribed threshold.
We begin with the formal definition about triangle strength from~\citet{kapralov2022motif}. 

\begin{definition}[Triangle Connectivity]
\label{def:triangle_connectivity}
    Let $G=(V, E, w)$ be an undirected weighted graph.
    We say that $G$ is $k$-connected (with respect to triangles) if every cut $(S, V\setminus S)$ in $G$ has triangle size at least $k$.
\end{definition}

\begin{definition}[$k$-Strongly Connected Component]
\label{def:k_connected_comp}
    Let $G=(V, E, w)$ be an undirected weighted graph.
    For a value $k\in \mathbb{R}_+$, an induced subgraph $C=(V_C, E_C, w)$ of $G$ is called a $k$-strongly connected component of $G$ if 
    \begin{enumerate}[label=(\roman*), nosep]
        \item $C$ is $k$-connected, and 
        \item there is no induced subgraph $C^\prime=(V_{C^\prime}, E_{C^\prime}, w)$ of $G$ that is $k$-connected and satisfies $V_C \subsetneq V_{C^\prime}$.
    \end{enumerate}
\end{definition}
    
\begin{definition}[Triangle Strength]
\label{def:triangle_strength}
    Let $G=(V, E, w)$ be an undirected weighted graph and let $\mathcal{T}(G)$ denote the set of all triangle instances in $G$. 
    For a triangle instance $T \in \mathcal{T}(G)$, the triangle strength $\kappa_T$ is the maximum value $k$ such that there exists a $k$-strongly connected component that contains $T$ as a subgraph.
\end{definition}

Due to the expensive computation, we only approximate importance weight of each edge in the algorithm, using existed hypergraph strength estimation method~\cite{chekuri2018minimum} to obtain the strength estimate $\kappa'_T$ of triangle $T$.

\begin{algorithm}[H]
    \caption{\textsc{Q-CriticalEdgeFinding}($\mathcal{O}_G, \eps$)}
    \label{alg:q-critical_edge_finding}
    \begin{algorithmic}[1]
    \STATE $\mathcal{T}(G) \gets$ \textsc{Q-TriangleListing}($\mathcal{O}_G$)
    \STATE $ \{ \kappa'_T\}_{T \in \mathcal{T}(G)}$ $\gets \textsc{StrengthEstimation}(\mathcal{O}_G, \mathcal{T}(G))$
    \STATE $E_+ \gets $ critical edges found by repeated Grover search on $E$
    \STATE {\bfseries return} $E_+$
    \end{algorithmic}
\end{algorithm}

\begin{lemma}
\label{lem:q-critical_edge_finding}
    Let $G = (V, E, w)$ be an undirected weighted graph. 
    Given query access $\mathcal{O}_G$ to the graph, \textup{\textsc{Q-CriticalEdgeFinding}} outputs the critical edge set $E_+$ in time $T_{\mathrm{q\text{-}list}} + \widetilde{O}(\lvert \mathcal{T}(G)\rvert + \sqrt{m|E_+|})$ with high probability, where $m = |E|$ and $T_{\mathrm{q\text{-}list}}$ denotes the running time of quantum triangle listing.
\end{lemma}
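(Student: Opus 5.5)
We analyze the three lines of \textsc{Q-CriticalEdgeFinding} one at a time and add up their costs.

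\textbf{Line 1 (listing).} By \Cref{thm:main-triangle}, \textsc{Q-TriangleListing} outputs $\mathcal{T}(G)$ with high probability in time $T_{\mathrm{q\text{-}list}}$. We store the list classically in QRAM, indexed by edge: for each edge $e$, a contiguous array of the triangles containing $e$. Building this index needs one pass over the triangle list plus a sort or hash table, which costs $\widetilde{O}(|\mathcal{T}(G)|)$.

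\textbf{Line 2 (strength estimation).} We treat each listed triangle as a hyperedge of rank $3$ with weight $w(T)$. This gives a hypergraph on $n$ vertices with $|\mathcal{T}(G)|$ hyperedges. We then invoke the hypergraph strength estimation of \citet{chekuri2018minimum}. It runs in time near-linear in the total hyperedge size, namely $\widetilde{O}(3|\mathcal{T}(G)|)$. It returns estimates $\kappa'_T$ with the approximation guarantee required by the framework of \citet{kapralov2022motif}; we simply reuse that framework's correctness argument, since we compute exactly the same estimates. Concurrently, for every edge $e$ we compute its importance $\sum_{T\ni e} w(T)/\kappa'_T$ (or whatever edge-level quantity the classical framework thresholds) and store it in QRAM. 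Each triangle contributes to exactly three edges, so this costs $O(|\mathcal{T}(G)|)$ classical operations.

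\textbf{Line 3 (selecting critical edges).} The predicate ``$e$ is critical'' is now answered by a single QRAM lookup followed by a comparison with the prescribed threshold. To make the domain $E$ searchable coherently, we index edges as pairs $(v,i)$ with $i\in[\deg(v)]$, using neighbor queries. This domain has size $2m$, and each edge is counted twice, which only changes constants. Since the predicate costs $O(1)$ per evaluation, repeated Grover search (\Cref{lem:grover_search}) finds all of $E_+$ in $\widetilde{O}(\sqrt{m|E_+|})$ time. Amplifying to high probability costs only logarithmic factors, as does taking a union bound with Lines 1 and 2.

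\textbf{Total.} Summing the three lines gives $T_{\mathrm{q\text{-}list}}+\widetilde{O}(|\mathcal{T}(G)|+\sqrt{m|E_+|})$, as claimed.

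\textbf{Main obstacle.} The delicate step is ensuring that the per-edge importance is computed entirely classically inside the $\widetilde{O}(|\mathcal{T}(G)|)$ budget and then exposed as a QRAM table. If it were not, each Grover oracle call would have to recompute a sum over incident triangles coherently, and the $\sqrt{m|E_+|}$ bound would acquire a multiplicative overhead. A second point to check is that the strength estimation procedure of \citet{chekuri2018minimum} is indeed near-linear in the number of rank-$3$ hyperedges, independent of $m$. This holds because the procedure only needs the listed triangles; edges lying in no triangle are irrelevant to it.
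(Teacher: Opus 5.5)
Your proposal is correct and follows essentially the same route as the paper: list the triangles once in time $T_{\mathrm{q\text{-}list}}$, and run the classical strength estimation of \citet{chekuri2018minimum} in $\widetilde{O}(|\mathcal{T}(G)|)$ time. Then precompute each $\widehat{\eta}(e)$ by adding $w(T)/\kappa'_T$ to the three edges of every triangle, store the result in QRAM, and run repeated Grover search over $E$ with an $\widetilde{O}(1)$-cost threshold predicate to get $\widetilde{O}(\sqrt{m|E_+|})$. Your explicit indexing of $E$ through pairs $(v,i)$ with $i\in[\deg(v)]$, and your per-edge triangle index, are extra implementation details the paper leaves implicit; they fit within the same budget and change only constants.
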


\begin{proof}
We list all triangles $\mathcal{T}(G)$ in time $T_{\mathrm{q\text{-}list}}$ with high probability by~\Cref{thm:main-triangle}, and then compute an estimated strength for each triangle via the classical routine~\textsc{StrengthEstimation}~\cite{chekuri2018minimum}, obtaining $\{\kappa'_T\}_{T \in \mathcal{T}(G)}$. By~\Cref{lem:strength_est}, it runs in $\widetilde{O}(|\mathcal{T}(G)|)$ time. These estimates determine the importance weight contributed by each triangle. 

All intermediate data, including weights and estimated strengths, are stored in QRAM. We thus assume QRAM access to $\mathcal{T}(G)$ and the associated values.
Using this access, we precompute the estimated edge importance $\widehat{\eta}(e)$ for every edge $e \in E$ as follows:
\begin{enumerate}[nosep]
    \item Initialize $\widehat{\eta}(e) \gets 0$ for all $e \in E$.
    \item For each triangle $T \in \mathcal{T}(G)$, add $w(T)/\kappa'_T$ to $\widehat{\eta}(e)$ for all three edges $e$ of $T$.
\end{enumerate}
This preprocessing step touches each triangle three times and thus costs $\widetilde{O}(|\mathcal{T}(G)|)$ time, which is dominated by the earlier triangle listing phase.
After this preprocessing, the importance $\widehat{\eta}(e)$ of any edge $e$ can be queried in $\widetilde{O}(1)$ quantum time via QRAM read. 
Let $c_1,d$ be absolute constants and set $\eps'=\eps/(15c_1\log n)$.
Define the Boolean predicate $g(e)= 1$ if $\widehat{\eta}(e) > \tfrac{d\eps'^2}{3(\log n+3)}$; and $0$ otherwise. It decides whether $e$ is a critical edge.
Hence, for any edge $e$, we can evaluate the predicate $g$ in $\widetilde{O}(1)$ time.

Applying repeated Grover search (\Cref{lem:grover_search}) over $E$ with predicate $g$ finds all critical edges in time $\widetilde{O}(\sqrt{m|E_+|})$.
Summing all costs yields the stated running time.
\end{proof}

\subsection{Post-Sampling the Remaining Edges} 
We now describe how to accelerate the post-sampling step.
Explicitly outputting all edges in the graph after each iteration would require up to $O(m)$ time since the expected number of surviving edges is in this order.
To resolve this issue, we follow the techniques of~\citet{apers2022quantum} and implement sampling implicitly using random strings, which allows us to defer outputting the sparsified graph.

\textbf{Implicit Sampling via Random Strings.}
Assume access to a family of random strings $\{r_i \}_{1 \leq i \leq \lceil 6 c_1 \log n \rceil}$, where each $r_i$ is a binary string of length $m$. 
All bits of each $r_i$ are independent and equal to $0$ with probability $1 - p = 1 - 2^{-1/6}$. 
Each bit of $r_i$ corresponds to an edge in $E$.

In iteration $i$, after computing the critical edge set $E_+$, we sample each non-critical edge $e \in E \setminus E_+$ according to the random bit. 
We set $w'(e) = 0$ if $r_i(e) = 0$; or $w'(e) = w(e)/p$ if $r_i(e) = 1$. 
After all iterations, apply repeated Grover search to identify edges with nonzero weight and these edges constitute the final sparsifier.

\begin{algorithm}[H]
    \caption{\textsc{Q-TriangleCutSparsification}($\mathcal{O}_G, \eps$)}
    \label{alg:quantum_triangle_spars}
    \begin{algorithmic}[1]
    \FOR{$i = 1$ {\bfseries to} $\lceil 6 c_1 \log n \rceil$}
    \STATE $E_+ \gets $ \textsc{Q-CriticalEdgeFinding}($\mathcal{O}_G, \frac{\eps}{15 c_1 \log n }$)
    \FOR{each edge $e \in E \setminus E_+$}
    \IF{$r_i(e) = 0$} \STATE $w'(e) \gets 0$
    \ELSE \STATE $w'(e) \gets w(e)/p$
    \ENDIF
    \ENDFOR
    \STATE $w(e) \gets w'(e)$
    \ENDFOR
    \STATE $E^\prime = \{e \in E \mid w'(e) > 0\}$ found by repeated Grover search
    \STATE {\bfseries return} $G^\prime = (V, E^\prime, w')$
    \end{algorithmic}
\end{algorithm}

\begin{theorem}
\label{thm:q-triangle_cut_spars_with_random_strings}
    Given oracle access to independent, uniformly random strings $r_i \in \{0,1\}^m$ for $1 \leq i \leq \lceil 6 c_1 \log n \rceil$, \textup{\textsc{Q-TriangleCutSparsification}} outputs, with high probability, an $\eps$-triangle cut sparsifier of $G$ with $\widetilde{O}(n/\eps^2)$ edges. Moreover, there exists a quantum implementation with running time $\widetilde{O}(T_{\mathrm{q\text{-}list}} + \sqrt{mn}/\eps)$.
\end{theorem}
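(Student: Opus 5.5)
Correctness and size will come from the classical analysis of \citet{kapralov2022motif}; running time from the quantum implementation. The key observation is that, for a fixed realization of $\{r_i\}$, \textsc{Q-TriangleCutSparsification} produces exactly the same reweighted graph as the classical strength-based algorithm (\Cref{appsec:triangle_cut_spars}) run with those sampling coins. The reason is that in iteration $i$ the critical set $E_+$ is computed deterministically from the current weighted graph, by \Cref{lem:q-critical_edge_finding}, up to the high-probability success of \textsc{Q-TriangleListing}. Non-critical edges are then kept with probability $p$ and rescaled by $1/p$, exactly as in the classical scheme. The classical guarantee then gives, with high probability, an $\eps$-triangle cut sparsifier with $\widetilde{O}(n/\eps^2)$ edges, using the per-round accuracy $\eps'=\eps/(15c_1\log n)$ over $\lceil 6c_1\log n\rceil$ rounds. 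A union bound over the $O(\log n)$ calls to the quantum subroutines, each failing with probability $\poly(1/n)$, preserves the high-probability guarantee.

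For the running time, I would represent the current graph implicitly. The weight of $e$ after $i$ rounds is $w(e)/p^{k}$ if $e$ survived, where $k$ counts the rounds in which $e$ was non-critical, and is $0$ otherwise. This is computable in $\widetilde{O}(1)$ from the stored sets $E_+^{(1)},\dots,E_+^{(i)}$ (kept in QRAM as membership tables) and the bits $r_1(e),\dots,r_i(e)$. This yields an oracle $\mathcal{O}_{G_i}$ for the current graph with $\widetilde{O}(1)$ overhead per query, without ever writing out all surviving edges.

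Degree and neighbor queries in $G_i$ are awkward, since deleted edges change the indexing. The first approach I would try is to run \textsc{Q-TriangleListing} only once, on the original $G$, and in later rounds filter the stored list. A triangle survives iff all three of its edges have nonzero weight, and its new weight is the product of the updated edge weights. This makes each later round cost $\widetilde{O}(|\mathcal{T}(G)|)$, which is at most $\widetilde{O}(T_{\mathrm{q\text{-}list}})$ since listing must output every triangle. If the strength estimation needs the current edge set, it only needs edges inside the surviving triangles, and those are available from the list.

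Each round's critical-edge search then costs $\widetilde{O}(\sqrt{m|E_+^{(i)}|})$ by \Cref{lem:q-critical_edge_finding}. The final repeated Grover search for $E'$ costs $\widetilde{O}(\sqrt{m|E'|})$. Every critical edge is retained in the final sparsifier, so $|E_+^{(i)}|\le|E'|=\widetilde{O}(n/\eps^2)$, and both terms are $\widetilde{O}(\sqrt{mn}/\eps)$. Summing over the $O(\log n)$ rounds gives the claimed $\widetilde{O}(T_{\mathrm{q\text{-}list}}+\sqrt{mn}/\eps)$.

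\textbf{Main obstacle.} I expect the hardest step to be justifying that reusing the single triangle list is faithful to the classical algorithm. In particular, strengths must be recomputed on the reweighted triangle hypergraph each round, and the cost must be bounded uniformly.

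The random strings in the statement are assumed given. If one must instead simulate them, I would follow \citet{apers2022quantum}: replace the true random strings by $\widetilde{O}(\sqrt{mn}/\eps)$-wise independent hash functions. These are indistinguishable to an algorithm making that many queries, and each evaluation costs $\widetilde{O}(1)$.
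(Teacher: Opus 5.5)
Your outline matches the paper's proof. Correctness is inherited from the classical strength-based analysis. The triangle list is computed once and reused, since the algorithm only deletes edges; your per-round recomputation of strengths at $\widetilde{O}(|\mathcal{T}(G)|)$ cost is more explicit than the paper's one-line remark. Weights are represented implicitly through the random strings and the history of critical sets. The $m$-dependent costs are the per-round repeated Grover searches for $E_+$ and the final one for $E'$.

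\textbf{The failing step.} You bound the per-round critical sets by arguing $|E_+^{(i)}|\le|E'|$ because ``every critical edge is retained in the final sparsifier.'' This is false. The critical set is recomputed in every round from the current reweighted graph and fresh strength estimates. An edge that is critical in round $i$ can therefore become non-critical later. For example, if the other edges of all triangles containing it are sampled away, its estimated importance drops to $0$. From then on it is sampled like any other non-critical edge and may well be deleted, so nothing forces it into $E'$. The implication also runs in the wrong direction. The bound on $|E'|$ is itself derived from the critical-set bound: with high probability every edge of $E'$ was critical in some round, so $|E'|\le\sum_i|E_+^{(i)}|$.

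\textbf{The fix.} Use the per-round bound the paper invokes, \Cref{lem:critical_edge_size_bound}. It holds deterministically in every round, regardless of the coins, given property~2 of \Cref{lem:strength_est}:
\begin{itemize}
\item Each triangle contributes $\widehat{\eta}(T)$ to exactly three edges, so $\sum_{e}\widehat{\eta}(e)=3\sum_{T}w(T)/\kappa'_T\le 3c(n-1)$.
\item Every critical edge has $\widehat{\eta}(e)\ge d\eps'^2/(3(\log n+3))$.
\item Hence $|E_+^{(i)}|\le 9c(n-1)(\log n+3)/(d\eps'^2)=\widetilde{O}(n/\eps^2)$, using $\eps'=\eps/(15c_1\log n)$.
\end{itemize}
With this substitution, each round's critical-edge search costs $\widetilde{O}(\sqrt{mn}/\eps)$, and the rest of your running-time accounting goes through unchanged.
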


\begin{proof}
Correctness follows directly from~\Cref{thm:triangle_cut_spars_correctness}.
In each iteration, the algorithm partially sparsifies the current graph induced by edges with nonzero weight.
By a union bound over all $O(\log n)$ iterations, the probability that all iterations succeed is $1 - O(\log n/\operatorname{poly}(n))$.

We now analyze the running time.
The dominant cost per iteration is identifying the critical edges, which by~\Cref{lem:q-critical_edge_finding} takes
$T_{\mathrm{q\text{-}list}} + \widetilde{O}(\sqrt{m|E_+|})$ time (note that $T_{\mathrm{q\text{-}list}}$ is always no less than the total number of triangles $|\mathcal{T}(G)|$).
Since the size of critical edges $|E_+| = \widetilde{O}(n/\eps^2)$ by~\Cref{lem:critical_edge_size_bound},
the total cost over all iterations is
$\widetilde{O}(T_{\mathrm{q\text{-}list}} + \sqrt{mn}/\eps)$.
Indeed, we can compute $\mathcal{T}(G)$ only once, since we only delete edges during the algorithm.
In the final step, finding $\widetilde{O}(n/\eps^2)$ preserved edges of the sparsifier (\Cref{lem:triangle_cut_spars_size_bound}) among $m$ edges requires $\widetilde{O}(\sqrt{mn}/\eps)$ time.

It remains to show that the edge sampling and reweighting during each iteration can be implemented efficiently.
Weights are not updated explicitly. Instead, for each edge $e$, the algorithm maintains an implicit representation of its final weight using the random strings and the history of critical-edge
selections.
Consider the $i$-th iteration.
Let $k$ denote the number of times edge $e$ has been selected as a critical edge prior to iteration $i$.
If $k = 0$, then
\[
w'(e) =
\begin{cases}
(\tfrac{1}{p})^i w(e), & \text{if } (r_i r_{i-1} \dots r_1)(e) = 1,\\
0, & \text{otherwise}.
\end{cases}
\]
If $k > 0$, let $i' < i$ be the most recent iteration in which $e$ was selected as a critical edge.
Then
\[
w'(e) =
\begin{cases}
(\tfrac{1}{p})^{i-k} w(e), & \text{if } (r_i r_{i-1} \dots r_{i'+1})(e) = 1,\\
0, & \text{otherwise}.
\end{cases}
\]
This logic can be implemented by an oracle that computes $w'(e)$ on demand and is used within Grover search.
\end{proof}

\textbf{Removing Random Strings.}
Finally, we eliminate the assumption of access to random strings.
The following lemma gives the guarantee, with at most a polylogarithmic overhead in the runtime.

\begin{lemma} [Corollary 3.3 in~\citep{apers2022quantum}]
\label{lem:no-random-string}
    Any quantum algorithm with runtime $q$ that makes queries to a uniformly random string can be simulated by a quantum algorithm with runtime $\widetilde{O}(q)$ without random string, using an additional QRAM of $\widetilde{O}(q)$ bits.
\end{lemma}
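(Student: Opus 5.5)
This lemma is quoted from \citet{apers2022quantum}, so the plan is to recall the standard simulation argument and check that it fits our cost model. The idea is to replace the uniformly random string with a pseudorandom one that is indistinguishable to any algorithm making at most $q$ queries.

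The key observation is a classical fact about quantum query algorithms. The acceptance probability of a $q$-query algorithm is a polynomial of degree at most $2q$ in the bits of the oracle string. Hence, if the string is drawn from a $2q$-wise independent distribution, the output distribution is \emph{identical} to the one obtained with a truly uniform string. For strings with bias $p = 2^{-1/6}$ per bit, as used in \textsc{Q-TriangleCutSparsification}, we reduce to the uniform case. Each biased bit is derived from $O(\log n)$ uniform bits up to $1/\poly(n)$ error, with the polylogarithmic overhead absorbed in $\widetilde{O}$.

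The construction has three steps.
\begin{enumerate}
\item Fix a $2q$-wise independent family of functions $f\colon [m] \to \{0,1\}^{b}$. We take random polynomials of degree $2q-1$ over a finite field $\mathbb{F}$ with $|\mathbb{F}| = \poly(m)$.
\item Sample the $2q$ coefficients classically, using $\widetilde{O}(q)$ random bits, and write them into QRAM. This gives the $\widetilde{O}(q)$ bits of additional QRAM.
\item Implement each oracle query $\ket{e}\ket{z} \mapsto \ket{e}\ket{z \oplus f(e)}$ by evaluating the polynomial at $e$ coherently.
\end{enumerate}

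The main obstacle is the third step: naive Horner evaluation costs $O(q)$ per query, giving $O(q^2)$ total rather than $\widetilde{O}(q)$. To fix this, I would use fast constructions of $k$-wise independent functions with polylogarithmic evaluation time. Examples are the hash families of Christiani--Pagh--Thorup, or multipoint evaluation structures precomputed in $\widetilde{O}(q)$ time and stored in QRAM. These make each query $\widetilde{O}(1)$ and the total runtime $\widetilde{O}(q)$.

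Correctness then follows from the polynomial method. Every output probability of the simulated algorithm is a polynomial of degree at most $2q$ in the oracle bits. The expectations of all monomials of degree at most $2q$ coincide under the uniform and the $2q$-wise independent distributions. So all output probabilities agree exactly, up to the $1/\poly(n)$ rounding error for the biased bits.

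In our application we set $q = \widetilde{O}(T_{\mathrm{q\text{-}list}} + \sqrt{mn}/\eps)$ from \Cref{thm:q-triangle_cut_spars_with_random_strings}. This yields \Cref{thm:q-triangle_cut_spars} with the same runtime.
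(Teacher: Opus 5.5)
Your argument is correct and is essentially the proof behind the cited result. The paper gives no proof of its own, and Corollary~3.3 of \citet{apers2022quantum} is obtained as you describe: Zhandry's observation (via the degree-$2q$ polynomial method) that a $q$-query algorithm cannot distinguish a $2q$-wise independent oracle from a uniformly random one, combined with the Christiani--Pagh--Thorup $k$-wise independent hash family, which uses $\widetilde{O}(k)$ space and has polylogarithmic evaluation time.

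One caveat: drop the ``multipoint evaluation'' alternative. A coherent query ranges over all $m$ indices, so tabulating the polynomial would require evaluating it at every point, which costs $\widetilde{O}(m)$ time and QRAM rather than $\widetilde{O}(q)$. Only the CPT route gives the stated bounds.
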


\begin{proof} [Proof of~\Cref{thm:q-triangle_cut_spars}]
The theorem follows immediately by combining ~\Cref{thm:q-triangle_cut_spars_with_random_strings} with~\Cref{lem:no-random-string}.
\end{proof}

\section{Extensions}
\subsection{Applications: Triangle Clustering}
\label{sec:applications}

Triangle clustering captures higher-order community structure by measuring cluster quality using \emph{triangle-based conductance} (see \Cref{def:triangle_conductance}) rather than edges alone, and has been shown to reveal substantially different partitions in practice and theory.
A key structural fact underlying existing algorithms is that triangle clustering reduces exactly to standard conductance on a \emph{triangle-weighted graph}, where each edge weight equals the total weight of triangles containing that edge.
As a result, classical triangle clustering algorithms first construct this triangle-weighted graph and then apply standard spectral or PageRank-based clustering procedures~\cite{science16,tsourakakis2017scalable,yin2017local}, inheriting the corresponding approximation guarantees.

Our quantum algorithms yield a natural speedup of this pipeline by accelerating its dominant computational bottleneck---triangle enumeration.
Using our quantum triangle listing algorithm, we can implement query access to the triangle-weighted graph more efficiently than classical methods, which suffices for downstream clustering algorithms.
This allows us to combine our method with existing quantum primitives such as quantum spectral clustering~\cite{apers2022quantum}, yielding quantum speedups for both global ($2$-way and $k$-way) triangle spectral clustering and local triangle clustering, while preserving the same theoretical guarantees as their classical counterparts. For example, we have the following corollary.

\begin{corollary}
\label{cor:q-triangle_2_clustering}
There exists a quantum algorithm that, given query access to $G$, returns a $2$-partition in time $T_{\mathrm{q\text{-}list}} + \widetilde{O} (\sqrt{mn})$. %
The triangle conductance of the output clustering is at most $4\sqrt{\phi_2^\Delta(G)}$.
\end{corollary}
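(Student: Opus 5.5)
The plan is to reduce triangle clustering to ordinary conductance on the triangle-weighted graph $G_\Delta$, and then run a quantum spectral $2$-clustering routine on $G_\Delta$. In $G_\Delta$ each edge $e$ gets weight $w_\Delta(e)=\sum_{T\ni e} w(T)$. The reduction is the standard identity behind \citet{science16}: for every $S$, the triangle cut of $S$ in $G$ equals half the ordinary cut of $S$ in $G_\Delta$. This holds because a crossing triangle has exactly two of its three edges crossing the cut, and a non-crossing triangle has none. Summing over $v\in S$, the triangle volume of $S$ is likewise proportional to the weighted degree volume in $G_\Delta$. Hence the triangle conductance of any $S$ equals the ordinary conductance of $S$ in $G_\Delta$, up to the fixed normalisation of \Cref{def:triangle_conductance}. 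It follows that $\phi_2^\Delta(G)=\phi_2(G_\Delta)$, and any Cheeger-type guarantee on $G_\Delta$ transfers verbatim to $G$.

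\textbf{Step 1: query access to $G_\Delta$.} Call \textsc{Q-TriangleListing} (\Cref{thm:main-triangle}) once, in time $T_{\mathrm{q\text{-}list}}$. Then, exactly as in the proof of \Cref{lem:q-critical_edge_finding}, make one pass over $\mathcal{T}(G)$ and accumulate $w_\Delta(e)$ for every edge in QRAM. This costs $O(t)\le T_{\mathrm{q\text{-}list}}$. The graph $G_\Delta$ is a reweighted subgraph of $G$ whose edges are the edges of $G$ lying in at least one triangle. Degree and neighbor queries on $G_\Delta$ can therefore be answered using the original adjacency-list oracle, with the new weights returned from QRAM. Zero-weight edges can be masked in the oracle; alternatively, we keep them with weight $0$, which changes neither cuts nor volumes. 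Weighted degrees $\sum_e w_\Delta(e)$ are also accumulated during the same pass, so each query costs $\widetilde{O}(1)$.

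\textbf{Step 2: quantum sparsification and spectral clustering.} Feed this oracle to the quantum spectral sparsifier of \citet{apers2022quantum} with constant $\varepsilon$. In time $\widetilde{O}(\sqrt{mn})$ it produces an $\widetilde{O}(n)$-edge spectral sparsifier $H$ of $G_\Delta$. On $H$, classically compute an approximate second eigenvector of the normalized Laplacian and perform the Cheeger sweep, in $\widetilde{O}(n)$ time. The sweep returns a set whose conductance in $H$ is at most $\sqrt{2\phi_2(H)}$, or equivalently at most $\sqrt{2\lambda_2}$ up to the usual factor.

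\textbf{Main obstacle: constant loss.} The hard part is bookkeeping the constants so that the final bound is $4\sqrt{\phi_2^\Delta(G)}$. We need three facts. First, spectral approximation with constant $\varepsilon$ distorts $\lambda_2$, all cut values and all volumes by factors $1\pm\varepsilon$. Second, the approximate eigenvector costs at most another constant factor. Third, Cheeger gives conductance at most $2\sqrt{\lambda_2(H)}\le 2\sqrt{(1+\varepsilon)\,2\phi_2(G_\Delta)}$. Choosing $\varepsilon$ a small enough constant then leaves the conductance in $G_\Delta$ at most $4\sqrt{\phi_2(G_\Delta)}=4\sqrt{\phi_2^\Delta(G)}$. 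One could avoid these calculations by invoking the quantum spectral clustering statement of \citet{apers2022quantum} as a black box with its own guarantee, and then applying the identity from the first paragraph. The total time is $T_{\mathrm{q\text{-}list}}+\widetilde{O}(\sqrt{mn})$.
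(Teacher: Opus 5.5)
Your proposal is correct and follows the paper's route: list all triangles once, accumulate $w(T)$ onto each triangle's three edges to store $G_\Delta$ in QRAM in $\widetilde{O}(|\mathcal{T}(G)|)$ time, run the quantum spectral $2$-clustering of \citet{apers2022quantum} on $G_\Delta$, and transfer its guarantee via $\phi_2^\Delta(G)=\phi_2(G_\Delta)$ (\Cref{fact:equivalence}); the paper uses that routine as the black box you mention at the end, rather than unpacking it into sparsify-then-sweep. One small slip: the sweep guarantee is $\sqrt{2\lambda_2(H)}\le 2\sqrt{\phi_2(H)}$, not $\sqrt{2\phi_2(H)}$, but your later constant bookkeeping uses the correct form, and the factor $4$ still comes out for a small constant $\varepsilon$.
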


Full definitions, formal reductions, and detailed guarantees are provided in~\Cref{appsec:applications}.

\subsection{Lower Bound on the Size of Triangle Cut Sparsifiers}
We show that the sparsity achieved by our triangle cut sparsification is essentially optimal.

\begin{theorem}
\label{thm:triangle_cut_lb}
For every integer $n$ and $\varepsilon \in (1/\sqrt{n},1)$, there exists an $n$-vertex graph $G$ such that every $\varepsilon$-triangle cut sparsifier of $G$ has $\Omega(n/\varepsilon^2)$ edges.
\end{theorem}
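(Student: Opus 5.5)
We reduce to the known $\Omega(n/\varepsilon^2)$ lower bound for ordinary (edge) cut sparsifiers, due to Andoni--Chen--Krauthgamer--Qin--Woodruff--Zhang (and Carlson et al.). That bound says there is an $n$-vertex graph $H$, e.g.\ a disjoint union of dense random-like pieces of size $\Theta(1/\varepsilon^2)$, whose every $\varepsilon$-cut sparsifier has $\Omega(n/\varepsilon^2)$ edges. The plan is to embed $H$ into a graph $G$ in which the crossing-triangle weight of each cut encodes the edge cut of $H$. The most direct gadget is an apex construction. Take $H$ on $n-1$ vertices and add one vertex $a$ adjacent to all of $V(H)$ with very large weight $W$, giving $G=H+a$. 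Each edge $uv$ of $H$ then forms a triangle $\{a,u,v\}$ of weight $W^2 w(uv)$. For a cut in which $a$ lies on the side $\bar S$, that triangle crosses exactly when it is not contained in $\bar S$. Hence
\[
\text{cut}_G(S,\bar S)=W^2\Bigl(w_H(E(S,\bar S))+w_H(E(S))\Bigr)+W\cdot(\text{terms from } H\text{-triangles})+\dots .
\]
This contains the internal weight of $S$ in addition to the cut, so the gadget must be adjusted.

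To isolate the edge cut, we use two apices $a,b$ with $a$ forced to lie opposite $b$, or we symmetrize by combining the cuts $(S,\bar S)$ and $(S\cup\{a\},\bar S\setminus\{a\})$. Concretely, the triangle weight with $a\in \bar S$ is $W^2(w(\partial S)+w(E(S)))$. With $a$ moved to the $S$ side it is $W^2(w(\partial S)+w(E(\bar S)))$. Since $w(E(S))=\tfrac12(\mathrm{vol}(S)-w(\partial S))$, where vol is the weighted degree sum, the first quantity equals $\tfrac{W^2}{2}(\mathrm{vol}_H(S)+w(\partial S))$. Thus the triangle cuts determine $\mathrm{vol}(S)+\partial S$ for every $S$. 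I would then invoke the stronger form of the lower bound, which is known to apply to this quantity as well. Alternatively, I would choose $H$ regular with singleton-degree constraints, so that preserving $\mathrm{vol}(S)+\partial S$ for all $S$ within $(1\pm\varepsilon)$ forces preserving $\partial S$ to within $(1\pm O(\varepsilon))$ on the hard instances. In the standard hard instances, balanced cuts within a $1/\varepsilon^2$-size cluster have $\partial S=\Theta(\mathrm{vol}(S))$, so relative error transfers up to constants.

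Next, a sparsifier $G'$ of $G$ may drop or reweight apex edges, and may keep triangles not of the apex type. We argue as follows. Taking $W$ huge, the apex triangles dominate the cut values by a factor $W$. We use $G'$ itself, not $G$, to define a sparsifier $H'$ of $H$. Set $w_{H'}(uv)=w_{G'}(au)\,w_{G'}(av)\,w_{G'}(uv)/W^2$, or more robustly project triangle weights onto the base edge. The edges of $H'$ are contained in $E(G')\cap E(H)$, so $|E(H')|\le|E(G')|$. The triangle cut of $G'$ equals the analogous quantity for $H'$ plus non-apex triangles, and non-apex triangles only add nonnegative weight. The main obstacle is precisely this step. $H'$ is a ``product-weighted'' graph rather than a freely reweighted one, and non-apex triangles in $G'$ could carry large weight. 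Reweighting can make them heavy even though they are negligible in $G$. I would handle this by noting that the lower-bound proofs in the cut-sparsifier literature are information-theoretic: from any data structure answering all cut queries to $1\pm\varepsilon$, one decodes $\Omega(n/\varepsilon^2)$ bits. Our $G'$ is such a data structure for $H$'s cut and volume queries via the reduction above, and it is describable in $\widetilde O(|E(G')|)$ words. This yields $|E(G')|=\widetilde\Omega(n/\varepsilon^2)$, and a careful version (bounded-precision weights) removes the log. The assumption $\varepsilon>1/\sqrt n$ ensures that clusters of size $1/\varepsilon^2$ fit within $n$ vertices.
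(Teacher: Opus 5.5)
There is a genuine gap in the reduction step. With a single apex $a$, a triangle $\{a,u,v\}$ crosses $(S,\bar S)$ unless $u$ and $v$ both lie on $a$'s side. So the triangle cut of $G$ measures the weight of $H$-edges \emph{touching} $S$, namely $\tfrac12(\mathrm{vol}_H(S)+w_H(\partial S))$, not $w_H(\partial S)$.

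Consequently, the graph $\tilde H$ you extract from $G'$ (with $\tilde w(uv)=w'(au)w'(av)w'(uv)/W^2$) is only guaranteed to preserve this touching weight within $1\pm\varepsilon$. That does not make it a cut sparsifier of $H$: after subtracting volumes you are left with an additive error of order $\varepsilon\,\mathrm{vol}_H(S)$, which is not relative to $w_H(\partial S)$.

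A concrete failure: let $H$ contain two copies of $K_{k,k}$ joined by a single bridge, with $k\ge 1/\varepsilon$. Any set touching the bridge already touches at least $k+1$ edges, so deleting the bridge changes every touching-weight query by at most a $1/(k+1)\le\varepsilon$ fraction. Hence $G$ minus that edge is a valid $\varepsilon$-triangle cut sparsifier of $G=H+a$. Yet in $\tilde H$ the cut separating the two copies drops from $1$ to $0$.

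So \Cref{thm:cut-sparsifier-lb}, an existential statement about all cuts of some unspecified graph, cannot be invoked as a black box. Each of your proposed repairs asserts the needed fact rather than proving it:
\begin{itemize}
\item There is no cited ``stronger form'' of the lower bound for the functional $\mathrm{vol}(S)+\partial S$.
\item Making $H$ regular does not rule out cuts with $\partial S\ll\mathrm{vol}(S)$.
\item The information-theoretic fallback needs $G'$ to answer $H$'s cut queries to $1\pm\varepsilon$, which it does not, since it knows volumes only approximately.
\end{itemize}
To salvage your route you would have to reopen the Gap-Hamming argument. You would need to verify either that the hard instance has expanding components, or that Bob only uses cuts with $\partial S=\Theta(\mathrm{vol}(S))$ whose volume terms cancel in his comparisons. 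None of this is done. Your worry about non-apex triangles, by contrast, is moot: the standard hard instance is bipartite, hence triangle-free, and a sparsifier is a reweighted subgraph. Likewise $W$ plays no role.

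The missing idea is to make the third vertex of each triangle move together with an endpoint of the base edge, rather than being a global apex. The paper does this as follows. Each left vertex $u$ of the bipartite hard instance gets a twin $u'$, the edge $uu'$ is added, and $u$'s edges to $R$ are copied onto $u'$. The triangles are then exactly $\{u,u',v\}$ for $uv\in E(H)$.

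On synchronized cuts, where every $u$ and $u'$ lie on the same side, such a triangle crosses if and only if $uv$ crosses. So triangle cuts of $G$ coincide with edge cuts of $H$. Therefore, from any triangle cut sparsifier $G'$, the weights $w'(uu')w'(uv)w'(u'v)$ on $E(H)$ form an honest $\varepsilon$-cut sparsifier of $H$ with at most $|E(G')|$ edges. The paper phrases this via $(G')_\Delta$ and a projection, with a factor $2$. Then \Cref{thm:cut-sparsifier-lb} applies directly.
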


The proof proceeds via a reduction from the one-way distributional Gap-Hamming-Distance problem, inspired by the classical construction in~\citep{andoni2016on}.
We lift their bipartite hard instance of cut sparsifiers to a triangle-weighted graph, and show that triangle cut values in the original graph correspond, up to a constant factor, to cut values in the associated triangle-weighted graph.
Consequently, a triangle cut sparsifier with fewer than $n/\varepsilon^2$ edges would yield a protocol violating the communication lower bound.
The full construction and proof are deferred to~\Cref{appsec:lower_bound}.

Together with our upper bounds, \Cref{thm:triangle_cut_lb} implies that our $\widetilde{O}(n/\varepsilon^2)$-size triangle cut sparsifiers are nearly tight up to polylogarithmic factors.
In particular, since our sparsifier size inherits that of~\citet{kapralov2022motif}, this result establishes near-optimality of their sparsity guarantees for triangle motifs.
We note that existing lower bounds in that work apply to induced-motif cut sparsifiers (i.e., $2$-paths), whereas our result provides a matching lower bound specifically for triangle cut sparsification.

\section{Future Work}
\label{sec:future}

We raise several open questions for future work.

\textbf{General Motifs.}
A natural question is whether the quantum speedup for triangle cut sparsification can be extended to more general motifs. The post-sampling and reweighting stage of our framework is motif-agnostic; the main bottleneck lies in the motif listing primitive. Our quantum triangle listing algorithm exploits structural properties specific to triangles---for instance, that a triangle is determined by a vertex together with a pair of its neighbor, which enables the efficient use of heavy--light decompositions. For larger patterns such as $4$-cycles or higher-order motifs, analogous structural primitives are less direct, and obtaining similar polynomial speedups remains an open challenge. Concretely, classical algorithms for listing $4$-cycles either enumerate all length-$2$ paths (leading to $O(n^2 + t)$ time, where $t$ is the number of $4$-cycles) or achieve $O(m^{4/3} + t)$ via heavy--light vertex partitioning~\cite{abboud2022listing}. While our heavy--light techniques share some intuition with the latter approach, the adaptation is nontrivial, and a quantum speedup for $4$-cycle listing is left for future study.

\textbf{Query Lower Bounds.}
Establishing quantum query lower bounds for triangle listing and triangle cut sparsification remains an important direction. For triangle finding, an $\Omega(n)$ quantum lower bound is known via reduction to the OR problem, but it is not tight against the $O(n^{5/4})$ upper bound~\cite{le2014improved}. For triangle listing, no nontrivial quantum lower bound is currently known. Existing adversary and polynomial methods are primarily designed for Boolean decision problems, while triangle listing is a global structure recovery problem with non-constant output size $\Theta(t)$, which limits their direct applicability. 
For (edge) cut sparsifiers, a quantum query lower bound of $\Omega(\sqrt{mn}/\varepsilon)$ has been established~\cite{apers2022quantum}. It is proved by reducing sparsification to a structured search problem: a large fraction of edges from some unsparsifiable graph must be found within the whole graph. This search task is then formulated as a composition of finding marked entries in a Boolean matrix with the OR function, and the final bound follows by applying a composition theorem. 
Our \(\Omega(n/\varepsilon^2)\) sparsifier-size lower bound, along with the existing quantum sparsification lower-bound framework, may suggest
\(\Omega(\sqrt{mn}/\varepsilon)\) as a natural baseline query lower bound.
Nevertheless, triangle cut sparsification also requires triangle-level
information such as triangle counting or listing, and therefore its full quantum query complexity may be larger.

\section*{Acknowledgements}
The work is supported in part by Quantum Science and Technology - National Science and Technology Major Project (Grant No. 2021ZD0302901) and NSFC Grant 62272431.

\section*{Impact Statement}
This paper presents work whose goal is to advance the field of Machine Learning. There are many potential societal consequences of our work, none which we feel must be specifically highlighted here.

\bibliography{main}
\bibliographystyle{icml2026}

\newpage
\appendix
\onecolumn
\section{Overview of the Appendix}

The appendix is organized as follows:
\begin{itemize}
    \item \Cref{appsec:related_work}: we provide the discussion of other related work.
    \item \Cref{appsec:prelim}: we provide the supplementary tools used in the appendix.
    \item \Cref{appsec:q-triangle_listing}: we present the omitted algorithms and analysis in~\Cref{sec:q-triangle_listing}.
    \item \Cref{appsec:triangle_cut_spars}: we present the deferred algorithms, proofs and analysis from~\Cref{sec:q-triangle_cut_spars}.
    \item \Cref{appsec:applications}: we present the application on the quantum algorithms for triangle-based clustering.
    \item \Cref{appsec:lower_bound}: we present the lower bound on the size of triangle cut sparsifier. 
\end{itemize}

\section{Other Related Work}
\label{appsec:related_work}

\textbf{Triangles and Related Problems.}
Triangles are fundamental higher-order structures in graphs, capturing interactions beyond pairwise relationships.
For instance, the well-known tendency for ``friends of friends to become friends'' in social networks is naturally modeled through triangles~\cite{wasserman1994social}.
Triangles also underpin classical notions such as clustering coefficients~\cite{soffer2005network, arifuzzaman2012parallel} and often appear as building blocks in real-world networks~\cite{milo2002network, chu2011triangle, chen2020can}.

Motivated by these applications, a variety of triangle-related algorithmic primitives have been studied, including triangle-motif conductance~\cite{science16, tsourakakis2017scalable} and triangle-based embeddings~\cite{nassar2020using}.
Triangle listing has also been investigated in distributed models, though almost exclusively in the classical setting; representative results include~\citet{chu2011triangle, giechaskiel2015pdtl, park2016pte}.

\textbf{Quantum Advances.}
Quantum algorithms for triangle-related problems have attracted attention also in distributed settings. \citet{izumi2020quantum} presented a quantum algorithm for triangle finding that runs in $\widetilde{O}(n^{1/4})$ rounds in the CONGEST model.
Subsequently, \citet{censor2022quantum} introduced a quantum distributed framework for clique detection, improving the state of the art for triangles and extending to larger cliques.

More broadly, quantum speedups for graph algorithms have been developed through the sparsification paradigm.
The foundational work of~\citet{apers2022quantum} initiated quantum algorithms for a series of related problems, along with quantum query lower bounds.
Building on this line of work, \citet{apers2021quantum} gave a quantum algorithm for exact minimum cut under bounded weight ratios, and \citet{apers2021sublinear} extended these techniques to exact minimum $s$--$t$ cut computation.
Subsequent advances include quantum motif clustering ~\cite{cade2023quantum} and a general framework for quantum spectral approximation~\cite{apers2023quantum}.
In the hypergraph setting, \citet{liu2025quantum} developed quantum algorithms for cut sparsification and minimum cut problems.

Besides, quantum algorithms have been applied to a broad range of graph-theoretic problems~\cite{durr2006quantum, magniez2007quantum, lee2021quantum, ponce2025graph}, including minimum spanning tree~\cite{o2020solving}, random spanning tree sampling~\cite{apers2025quantum}, and graph coloring~\cite{shimizu2022exponential}.

\section{Omitted Preliminaries from~\Cref{sec:prelim}}
\label{appsec:prelim}

This section collects additional details on the computational model and quantum tools used throughout the paper.

\subsection{Quantum Computing and Computational Model}

We refer to~\Cref{sec:prelim} for a concise introduction to quantum states, unitaries, and measurements.

Our computational model assumes the input graph $G=(V,E,w)$ is stored in a quantum-accessible classical memory (QRAM) that supports the following queries coherently:
\begin{itemize}[nosep]
    \item \textbf{degree query:} given $v\in V$, return its degree $\deg(v)$;
    \item \textbf{neighbor query:} given $v\in V$ and an index $i\in[\deg(v)]$, return the $i$-th neighbor of $v$;
    \item \textbf{vertex-pair query:} given a pair of vertices $\{u,v\}$, return the weight $w(u,v)$ if $\{u,v\}\in E$, and $0$ otherwise.
\end{itemize}
The weight of an edge $e$ is encoded as floating-point numbers $\ket{w(e)}\in\mathbb{C}^{d_{\mathrm{acc}}}$, where $d_{\mathrm{acc}}=O(1)$ suffices for arbitrary constant precision. These three queries can be implemented as the following quantum oracles:
\begin{align*}
    \mathcal{O}_{\deg}\ket{v}\ket{0} &= \ket{v}\ket{\deg(v)},\\
    \mathcal{O}_{\mathrm{neigh}}\ket{v}\ket{i}\ket{0} &= \ket{v}\ket{i}\ket{N(v)_i},\\
    \mathcal{O}_{\mathrm{edge}} \ket{\{u,v\}}\ket{0} &= 
    \begin{cases}
        \ket{\{u,v\}}\ket{w(u,v)}, & \text{if } \{u,v\}\in E,\\
        \ket{\{u,v\}}\ket{0}, & \text{otherwise}.
    \end{cases}
\end{align*}

In analogy with classical RAM, a QRAM is a device that allows storage or modification of its data. Specifically, we can either classically write a bit to QRAM or ``quantumly'' read bits stored in QRAM, possibly in superposition.
As in prior work on quantum algorithms for graph problems (e.g.,~\citet{apers2022quantum, chen2025quantum}), we assume the QRAM model to allow quantum query access.
While the physical realizability of large-scale QRAM device remains an open question, this assumption is standard and common in the literature and allows us to focus on algorithmic aspects.

\subsection{Quantum Algorithmic Tools}

\textbf{Quantum Search.}
A fundamental quantum subroutine frequently used in algorithm design is Grover search~\cite{grover1996fast}. Consider a Boolean predicate: $g_G\colon[N] \to \{0,1\}$ depending on graph $G$ and let $M = \{i \in [N] \mid g_G(i) = 1\}$ denote the marked set. When $M \neq \emptyset$, we have the following lemmas.

\begin{lemma}[Standard Grover search, \citet{grover1996fast}] \label{lem:standard_grover_search}
    There exists a quantum algorithm that finds an element in $M$ with probability at least $2/3$ in $\widetilde{O}(\sqrt{N / |M|})$ elementary operations and queries to $g_G$.
\end{lemma}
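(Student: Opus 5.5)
The statement to prove is the standard Grover search lemma (\Cref{lem:standard_grover_search}): with $N$ items and $|M|\ge 1$ marked ones, we find a marked element with probability at least $2/3$ using $\widetilde{O}(\sqrt{N/|M|})$ operations and queries to $g_G$. The plan is the usual amplitude-amplification argument, together with a randomized iteration count for the case where $|M|$ is unknown.

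\textbf{Setup.} Prepare the uniform superposition $\ket{\psi}=\frac{1}{\sqrt N}\sum_{i\in[N]}\ket{i}$ with $O(\log N)$ Hadamard-type gates. Write
\[
\ket{\psi}=\sin\theta\,\ket{M}+\cos\theta\,\ket{\bar M},
\]
where $\ket{M}$ and $\ket{\bar M}$ are the normalized uniform superpositions over marked and unmarked indices, and $\sin^2\theta=|M|/N$.

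\textbf{Grover operator.} Define $O_g\ket{i}=(-1)^{g_G(i)}\ket{i}$. It is built from one query to $g_G$ (compute into an ancilla, apply a phase, uncompute), so it costs $O(1)$ queries and $\widetilde{O}(1)$ gates. Define the diffusion $D=2\ket{\psi}\bra{\psi}-I$, which costs $O(\log N)$ gates.

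\textbf{Rotation.} On the two-dimensional span of $\ket{M},\ket{\bar M}$, the operator $D O_g$ is a product of two reflections, hence a rotation by angle $2\theta$. After $k$ iterations the state is
\[
\sin((2k+1)\theta)\ket{M}+\cos((2k+1)\theta)\ket{\bar M}.
\]
Measuring then yields a marked element with probability $\sin^2((2k+1)\theta)$.

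\textbf{Known $|M|$.} Choose $k=\lfloor \pi/(4\theta)\rfloor$. Then $(2k+1)\theta$ is within $\theta$ of $\pi/2$. If $|M|\le N/4$ then $\theta\le\pi/6$, so the success probability is at least $\cos^2\theta\ge 3/4$. If $|M|>N/4$, simply sample uniformly a constant number of times instead. Since $\theta\ge\sqrt{|M|/N}$, we have $k=O(\sqrt{N/|M|})$.

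\textbf{Unknown $|M|$ (main obstacle).} This is the step that needs care, because a fixed iteration count can overshoot the peak of $\sin^2((2k+1)\theta)$. I will use the Boyer--Brassard--H{\o}yer--Tapp schedule:
\begin{itemize}
\item Set $m=1$ and a growth factor $\lambda=6/5$.
\item In each round, pick $k$ uniformly from $[0,m)$, run $k$ Grover iterations, measure, and check the outcome with one query.
\item If the outcome is marked, stop; otherwise set $m\gets\min(\lambda m,\sqrt N)$ and repeat.
\end{itemize}
The key averaging fact is that for $m\ge 1/\sin(2\theta)$ the average of $\sin^2((2k+1)\theta)$ over a uniform $k\in[0,m)$ is at least $1/4$. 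This follows from the trigonometric sum
\[
\frac{1}{m}\sum_{k<m}\cos((2k+2)\theta)=\frac{\sin(2m\theta)\cos(2m\theta)}{m\sin(2\theta)}.
\]
Consequently, once $m$ exceeds $\Theta(\sqrt{N/|M|})$, each round succeeds with constant probability, so the total expected work beyond that point forms a geometric series of order $O(\sqrt{N/|M|})$. The work spent in the earlier rounds, before $m$ reaches that threshold, is also a geometric series bounded by the same amount.

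\textbf{Conclusion.} Markov's inequality turns the expected bound into a worst-case cutoff at a constant multiple of the expected cost while keeping success probability at least $2/3$. Multiplying by the $\widetilde{O}(1)$ per-iteration gate cost gives the claimed $\widetilde{O}(\sqrt{N/|M|})$ bound.
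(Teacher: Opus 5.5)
Your argument is correct and is the standard proof behind the result, which the paper only cites (to Grover) without proof: Grover's two-reflection rotation analysis, plus the Boyer--Brassard--H{\o}yer--Tapp randomized schedule when $|M|$ is unknown (this needs growth factor $\lambda<4/3$, which your $\lambda=6/5$ satisfies). A few small touch-ups are needed: the averaged term should be $\cos((4k+2)\theta)$, since $\sin^2((2k+1)\theta)=\tfrac12\bigl(1-\cos((4k+2)\theta)\bigr)$, and your right-hand side is the correct value of that sum; the bound $1/\sin(2\theta)\le\sqrt{N/|M|}$ requires $|M|\le 3N/4$, and the case $|M|>3N/4$ is covered because your first round ($m=1$, $k=0$) is a uniform sample that succeeds with probability above $3/4$; and Markov's inequality should be applied to the untruncated Las Vegas procedure, since a cutoff at a constant times $\sqrt{N/|M|}$ cannot be placed without knowing $|M|$.
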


\begin{lemma}[Repeated Grover search, stated in~\citep{apers2022quantum}] 
\label{lem:repeated_grover_search}
    There exists a quantum algorithm that finds $M$ with probability at least $2/3$ in $\widetilde{O}(\sqrt{N \, |M|})$ elementary operations and queries to $g_G$.
\end{lemma}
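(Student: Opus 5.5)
The statement to prove is the repeated Grover search lemma (\Cref{lem:repeated_grover_search}): find all of $M$ in $\widetilde{O}(\sqrt{N|M|})$ time with probability at least $2/3$. I will reduce it to the standard version (\Cref{lem:standard_grover_search}) by iterative discovery and masking. Maintain a classical list $L\subseteq M$ of found elements, stored in QRAM. Define the masked predicate $g_L(i)=g_G(i)\wedge[i\notin L]$. It can be evaluated coherently with one query to $g_G$ plus an $\widetilde{O}(1)$ QRAM membership lookup, for example a sorted array or a hash table written classically. This is exactly the masking mechanism the paper uses for edge removal.

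\textbf{Case of known $k=|M|$.} For $j=0,1,\dots,k-1$, the remaining marked set $M\setminus L$ has size $k-j$. I run standard Grover search with $g_L$, amplified to success probability $1-1/(10k)$ by $O(\log k)$ repetitions. Each returned candidate is checked classically with one call to $g_G$ before adding it to $L$, so false outputs are discarded. Step $j$ costs $\widetilde{O}(\sqrt{N/(k-j)})$. Summing over steps,
\[
\sum_{r=1}^{k}\sqrt{N/r}=O(\sqrt{Nk}),
\]
and a union bound over the $k$ steps gives overall success probability at least $9/10$.

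\textbf{Main obstacle: $|M|$ unknown.} I need both a stopping rule and the right iteration counts. The plan is to use the version of Grover search for an unknown number of solutions (Boyer--Brassard--Høyer--Tapp). It finds a marked element in expected $O(\sqrt{N/|M'|})$ queries when $M'\neq\emptyset$. When $M'=\emptyset$, it certifies emptiness after $O(\sqrt N)$ queries, with failure probability reducible to $1/\mathrm{poly}$ by $O(\log N)$ repetitions. Running it repeatedly on $g_L$ until it reports empty gives a total expected cost of
\[
\sum_{r=1}^{|M|}\sqrt{N/r}+\widetilde{O}(\sqrt N)=\widetilde{O}(\sqrt{N|M|}+\sqrt N).
\]
Since $M\neq\emptyset$, this is $\widetilde{O}(\sqrt{N|M|})$. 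To turn expected time into worst-case time, Markov's inequality suffices: truncate at a constant multiple of the bound, so the algorithm fails with probability at most a small constant. The truncation bound is not known in advance, so I use a doubling schedule on the budget $B=2^\ell\sqrt N$. The geometric series keeps the total at $\widetilde{O}(\sqrt{N|M|})$.

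The delicate point is controlling the accumulated error probability of the emptiness certifications. Premature stopping would miss elements. The $\log$-factor amplification handles this: each emptiness test can be wrong with probability at most $1/\mathrm{poly}(N)$, and only polynomially many tests occur. The total error therefore stays below $1/3$.
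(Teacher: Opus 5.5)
Your proposal is correct and takes the same route as the paper, which derives the lemma by masking already-found elements in the oracle and repeatedly invoking standard Grover search. Your version is more careful in two places: you bound the cost by $\sum_{r=1}^{|M|}\sqrt{N/r}=O(\sqrt{N|M|})$ where the paper loosely writes $|M|\cdot\widetilde{O}(\sqrt{N/|M|})$, and you supply the stopping rule for unknown $|M|$ that the paper leaves implicit. The doubling-budget step is unnecessary: Markov's inequality applied to the untruncated run already gives the time bound with constant probability, which is all the lemma asserts, and doubling without a cap does not give a deterministic worst-case bound anyway.
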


To derive the latter lemma, we record a list of all found solutions and modify the oracle such that in Grover's algorithm, it flips the phase of $\ket{x}$ if and only if $x$ is in $M$ and $x$ is not in the list yet. This way, we can repeatedly apply~\Cref{lem:standard_grover_search} and thus the complexity for finding all marked elements is $|M| \cdot \widetilde{O}(\sqrt{N / |M|}) = \widetilde{O}(\sqrt{N \, |M|})$.

\textbf{Variable-Cost Quantum Search.}
Suppose that for each $i \in [N]$, there exists a quantum subroutine $\mathcal{A}_i$ which uses at most $q_i$ queries to $g_G$ and outputs $g_G(i)$ with probability at least $2/3$. Then we have a variable-cost generalization of quantum search.
\begin{lemma}[Variable-cost quantum search, \citet{ambainis2010quantum}]
\label{lem:variable_cost_quantum_search}
Assume there exists a constant $c$ such that $N\le n^c$ and $\max_{i}\{q_i\}\le n^c$.
Then there is a quantum algorithm that finds an element in $M$ with probability at least $3/4$ using $\widetilde{O}\left(\sqrt{ \sum_{i\in [N]} q_i^2}\right)$ queries.
\end{lemma}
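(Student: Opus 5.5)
The statement is the variable-cost search result of \citet{ambainis2010quantum}. I would prove it by reducing to standard amplitude amplification, bucketing the indices by their cost scale. The plan follows.

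\textbf{Step 1: bucket and make costs uniform.} Let $Q=\sqrt{\sum_i q_i^2}$. Partition $[N]$ into buckets $B_j=\{i: 2^{j-1}< q_i\le 2^j\}$ for $j=0,\dots,\lceil c\log_2 n\rceil$. Since $\max_i q_i\le n^c$, there are only $O(\log n)$ buckets. Within $B_j$, pad every $\mathcal{A}_i$ to run exactly $2^j$ queries. First amplify its success probability to $1-1/\poly(n)$ by $O(\log n)$ repetitions and a majority vote. Then make it reversible (compute, copy the answer, uncompute), so it can serve as a phase oracle with error $1/\poly(n)$ per call.

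\textbf{Step 2: search each bucket.} For a fixed $j$, I would first guess $|M\cap B_j|$ by exponential search over values $2^k$, or simply run Grover with random iteration counts in the style of Boyer--Brassard--H{\o}yer--Tapp. This needs $\widetilde O(\sqrt{|B_j|})$ oracle calls, each costing $\widetilde O(2^j)$ queries, for a total of $\widetilde O(2^j\sqrt{|B_j|})$. Every $i\in B_j$ has $q_i>2^{j-1}$, so
\[
2^{2j}|B_j|\le 4\sum_{i\in B_j}q_i^2\le 4Q^2,
\]
and the bucket cost is $\widetilde O(Q)$. Running over all $O(\log n)$ buckets therefore costs $\widetilde O(Q)$, which is the claimed bound.

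\textbf{Step 3: errors and success probability.} The total number of oracle calls is $\poly(n)$, since $N\le n^c$ and each $q_i\le n^c$. A hybrid argument then shows the per-call errors of $1/\poly(n)$ accumulate to $o(1)$. If $M\ne\emptyset$, some bucket contains a marked element. BBHT on that bucket succeeds with constant probability, and repeating $O(1)$ times while verifying each candidate with the amplified $\mathcal{A}_i$ brings the success probability to at least $3/4$.

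\textbf{Main obstacle.} The only delicate point is the unknown marked count inside a bucket. Fixed-iteration Grover could overshoot, so I would rely on the randomized iteration schedule, whose expected cost is $O(\sqrt{|B_j|})$ calls whether or not the bucket has marked elements.

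This simple bucketing loses only polylog factors, which the $\widetilde O$ absorbs. That is why I would not attempt Ambainis's sharper recursive construction, which avoids those factors.
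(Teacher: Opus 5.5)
Your bucketing argument is valid for the \emph{known-cost} case, but it silently assumes the $q_i$ are fixed in advance, independently of the input. You use this to define $B_j$ and to pad each $\mathcal{A}_i$ to exactly $2^j$ queries. Crucially, you also use it to prepare the uniform superposition over $B_j$ (and reflect about it) at zero query cost, so that Grover runs on a space of size $|B_j|$ rather than $N$.

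Nothing in the statement grants this, and the paper needs the opposite. In the proof of \Cref{lem:first_level_quantum_walk}, the lemma is applied over $w\in V$ with cost $\widetilde{O}(h^{2/3}+\sqrt{|\mydelta{H,w}|})$, which depends on the unknown graph and on the current walk state. Replacing it by its worst case $\Theta(h)$ would turn the term $n^{1/2}s^{-1/2}h$, obtained from the good-set property, into $n^{1/2}h$.

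The paper gives no proof and defers to Ambainis. His theorem handles exactly this unknown-cost situation with polylogarithmic overhead, which is where the $\widetilde{O}$ and the hypotheses $N,\max_i q_i\le n^c$ come from. The known-cost case is the one he solves with no overhead, so your argument reproves only the easier half.

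With unknown costs you can in general learn whether $i\in B_j$ only by running $\mathcal{A}_i$. The natural repair of your Step~2 is to search all of $[N]$ for an $i$ on which $\mathcal{A}_i$, truncated at $2^j$ queries, halts with output $1$. That costs about $2^j\sqrt{N}$ when few such $i$ exist, which can be quadratically worse than $Q$. For example, if $q_i=1$ for all $i$ except a single marked $i^\ast$ with $q_{i^\ast}=\sqrt{N}$, then $Q=\Theta(\sqrt{N})$ but this search costs $\Theta(N)$.

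The missing idea is Ambainis's variable-time search. Run all branches in superposition with geometrically increasing budgets $2^j$. Between stages, use amplitude amplification (with the number of rounds set by amplitude estimation) to suppress branches that have already halted with output $0$. The expensive later stages then act only on the surviving amplitude, and solving the resulting cost recursion gives $\widetilde{O}(\sqrt{\sum_i q_i^2})$.

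Alternatively, for the way this paper uses the lemma, your bucketing can be rescued under an extra hypothesis. Suppose a prefix of each $\mathcal{A}_i$ costing $c_0\le\min_i q_i$ queries already determines its bucket. This holds in \Cref{lem:second_level_quantum_walk}, where the estimate $\mathcal{A}(s_{\text{r}},w)$, computed with $\widetilde{O}(h^{2/3})$ queries, fixes the remaining cost. Then the superposition over $B_j$ can be prepared by amplitude amplification with $O(c_0\sqrt{N/|B_j|})$ queries. Grover over $B_j$ then costs $\widetilde{O}(2^j\sqrt{|B_j|}+c_0\sqrt{N})=\widetilde{O}(Q)$, since $c_0\sqrt{N}\le Q$. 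As written, however, the proposal does not establish the lemma in the generality in which it is stated and used.
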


\textbf{Quantum Walk Search.}
Another important quantum tool is quantum walk search. 
We specifically focus on quantum walks over Johnson graphs, as they suffice for our  technical purposes. 
See \cite{magniez2007search} for a more comprehensive and general treatment of quantum walks.
We denote the set of all $r$-size subsets of $F$ by $\tbinom{F}{r}$, which has cardinality $\tbinom{|F|}{r}$. We then define the Johnson graph $J(F, r)$ as follows.

\begin{definition}[Johnson graph]
    Let $F$ be a finite set and $r$ be a positive integer such that $r\le |F|$. The Johnson graph $J(F,r)$ is an undirected graph with node set $\tbinom{F}{r}$, where two nodes $R_1,R_2\in\tbinom{F}{r}$ are connected if and only if $|R_1\cap R_2|=r-1$.
\end{definition}

We describe a search problem related to graph $G$ (accessed via oracle $\mathcal{O}_G$).
Let $F$ be a finite set with $|F|=O(n^c)$ for constant $c$, and $r\leq|F|$.
Given a Boolean function $f_G\colon \tbinom{F}{r}\to\{0,1\}$ depending on $G$, define $M_G \coloneq f_G^{-1}(1)$.
The goal is to decide whether $M_G\neq\emptyset$.

This problem can be solved using a quantum walk over the Johnson graph $J(F,r)$.
Each walk state corresponds to a node (i.e., a set $A\in\tbinom{F}{r}$) and its associated data structure $D_G(A)$ related with $G$. A state (or, a set) $A$ is marked when $A \in M_G$. 
Three types of cost are associated with  $D_G$: the setup cost $\mathsf{S}$, to construct $D_G(A)$ for a given node $A\in\tbinom{F}{r}$; the update cost $\mathsf{U}$, to convert $D_G(A)$ into $D_G(A')$ for two given connected nodes $A$ and $A'$ in $J(F,r)$; the checking cost $\mathsf{C}$, to check with probability greater than $2/3$ whether the state $A$ is marked for the given $A\in\tbinom{F}{r}$ and $D_G(A)$. We have the lemma.
\begin{lemma}[Quantum walk search, \citet{ambainis2007quantum, magniez2007search, bonnetain2023finding}]
\label{lem:quantum_walk_search_appendix}
    Let $\frac{|M_G|}{\tbinom{|F|}{r}} \ge \varepsilon>0$ whenever $M_G\neq\emptyset$. Suppose that $M_G\neq\emptyset$, then the quantum walk over the Johnson graph $J(F,r)$ finds an element in $M_G$, with probability at least $3/4$,  using 
    $
    \widetilde O\left(\mathsf{S}+\frac{1}{\sqrt{\varepsilon}}\left(\sqrt{r}\times \mathsf{U}+\mathsf{C}\right)\right)
    $
    elementary operations and queries to $\mathcal{O}_G$.
\end{lemma}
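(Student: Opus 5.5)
The plan is to derive the lemma from the general Magniez--Nayak--Roland--Santha (MNRS) quantum walk search framework~\cite{magniez2007search} and then specialize it to the Johnson graph $J(F,r)$. The general statement is as follows. Let $P$ be a reversible Markov chain with stationary distribution $\pi$ and spectral gap $\delta$. Suppose the marked set has $\pi$-mass at least $\varepsilon$. Then a marked element can be found with constant probability at cost $\widetilde{O}\bigl(\mathsf{S} + \frac{1}{\sqrt{\varepsilon}}(\frac{1}{\sqrt{\delta}}\mathsf{U} + \mathsf{C})\bigr)$. Here $\mathsf{S}$ prepares the data-augmented stationary state $\sum_A \sqrt{\pi(A)}\ket{A}\ket{D_G(A)}$, $\mathsf{U}$ implements one step of the walk operator (moving $A$ to a neighbour $A'$ and converting $D_G(A)$ to $D_G(A')$), and $\mathsf{C}$ applies the phase flip on marked states.

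I would carry out the following steps in order.

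\textbf{Step 1 (walk operator).} Take the simple random walk on $J(F,r)$. The graph is regular, so $\pi$ is uniform on $\binom{F}{r}$, and the marked fraction $|M_G|/\binom{|F|}{r}\ge\varepsilon$ is exactly the required $\pi$-mass. Define the Szegedy walk operator $W(P)$ on states $\ket{A}\ket{A'}$ with data attached. One application of $W(P)$ costs $O(\mathsf{U})$, since a single step replaces one element of $A$ and updates $D_G$ accordingly.

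\textbf{Step 2 (spectral gap).} Use the known spectrum of the Johnson scheme. The eigenvalues of the transition matrix of $J(F,r)$ are $1-\frac{j(|F|+1-j)}{r(|F|-r)}$ for $j=0,\dots,r$. Taking $j=1$ gives gap $\delta=\frac{|F|}{r(|F|-r)}\ge \frac{1}{r}$, hence $1/\sqrt{\delta}\le\sqrt{r}$. This is exactly where the $\sqrt{r}\,\mathsf{U}$ term comes from.

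\textbf{Step 3 (MNRS search loop).} Alternate the phase flip on marked states (cost $\mathsf{C}$) with an approximate reflection about the stationary state. The reflection is implemented by phase estimation on $W(P)$ with precision $\Theta(\sqrt{\delta})$, which costs $\widetilde{O}(1/\sqrt{\delta})$ walk steps. Repeating this amplitude-amplification-style loop for $O(1/\sqrt{\varepsilon})$ rounds, starting from the stationary state prepared once at cost $\mathsf{S}$, gives the stated total. The conversion from the abstract analysis to constant success probability $3/4$, including the case where $\varepsilon$ is only a lower bound on the marked fraction, is handled by standard randomization of the number of iterations, or equivalently by fixed-point search.

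\textbf{Step 4 (errors and amplification).} The checking procedure is only correct with probability $2/3$, and phase estimation is approximate. Both are controlled by repeating the check $O(\log n)$ times and by running phase estimation with $O(\log(1/\varepsilon))$ extra precision bits. This is absorbed into the $\widetilde{O}$, using that $|F|=O(n^c)$ and hence $1/\varepsilon\le\binom{|F|}{r}$ has only polynomially bounded logarithm.

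I expect the main obstacle to be making the data structure $D_G(A)$ compatible with coherent updates. The MNRS analysis requires that $\ket{A}\ket{D_G(A)}$ depend only on $A$ and not on the path by which $A$ was reached; otherwise interference breaks. It also requires that the update can be run reversibly in superposition at worst-case cost $\mathsf{U}$. For this I would invoke the history-independent, QRAM-based data structures of~\cite{bonnetain2023finding}. These store sets and their associated data in a canonical form with $\widetilde{O}(1)$-overhead insertion and deletion, and they justify counting $\mathsf{S},\mathsf{U},\mathsf{C}$ as time rather than merely query complexity.
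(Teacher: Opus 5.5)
Your proposal is correct and follows essentially the same route as the references the paper cites (the paper gives no proof of its own): MNRS search with the Szegedy walk on $J(F,r)$, whose spectral gap $\delta = |F|/(r(|F|-r)) \ge 1/r$ gives the $\sqrt{r}\,\mathsf{U}$ term, plus history-independent QRAM data structures in the style of Bonnetain et al.\ to turn query bounds into time bounds. There is one slip in Step~4: MNRS sharpens the approximate reflection by running $O(\log(1/\varepsilon))$ independent constant-precision phase estimations, with overhead linear in that number, whereas adding $O(\log(1/\varepsilon))$ extra precision bits to a single phase estimation would multiply the cost by $\operatorname{poly}(1/\varepsilon)$; also, $\log(1/\varepsilon)$ can be as large as $r\log|F|$, so it is absorbed into $\widetilde{O}$ because the bound itself contains $1/\sqrt{\varepsilon}$, not because it is polylogarithmic in $n$.
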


Besides, we give a fact that relates the number of triangles and edges participating in these triangles.

\begin{fact}
    \label{fact:triangle_count_bound}
    Any graph with $m$ edges contains at most $\frac{\sqrt{2}}{3} m^{3/2}$ triangles.
\end{fact}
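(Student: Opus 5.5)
We prove \Cref{fact:triangle_count_bound} by a degree-splitting double count: every triangle is charged to its edges, and the number of triangles through an edge $\{u,v\}$ is at most $\min(\deg(u),\deg(v))$. The plan has four steps.

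First, orient each triangle $T$ toward one designated edge. We could charge $T$ to each of its three edges, which gives $3t = \sum_{\{u,v\}\in E} |N(u)\cap N(v)| \le \sum_{\{u,v\}\in E}\min(\deg u,\deg v)$. The task then reduces to showing
\[
\sum_{\{u,v\}\in E}\min(\deg u,\deg v)\le \sqrt{2}\,m^{3/2}.
\]

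Second, bound $\min(a,b)\le\sqrt{ab}$ and apply Cauchy--Schwarz, or use the cruder bound $\min(\deg u,\deg v)\le \sqrt{\deg u}\sqrt{\deg v}$. A cleaner route is to count, for each vertex $x$, the edges $e=\{u,v\}$ whose smaller-degree endpoint is $x$. That gives $\sum_e \min(\deg u,\deg v) \le \sum_x \deg(x)\cdot\min(\deg(x),\,\cdot)$, which is awkward. Instead I would use the standard bound: for each edge, $\min(\deg u,\deg v)\le \min(\deg u+\deg v)/2$ does not help directly. So I would use $\min(a,b)\le\sqrt{ab}$, giving $\sum_e \sqrt{\deg u\deg v}\le \sum_e (\deg u+\deg v)/2$, which is still too weak. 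The clean approach is therefore the following.

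Third, use the spectral/trace argument. Let $A$ be the adjacency matrix with eigenvalues $\lambda_i$. Then $6t=\operatorname{tr}(A^3)=\sum_i\lambda_i^3\le \max_i|\lambda_i|\sum_i\lambda_i^2$. We also have $\sum_i\lambda_i^2=\operatorname{tr}(A^2)=2m$, and $\max_i|\lambda_i|\le\sqrt{\sum_i\lambda_i^2}=\sqrt{2m}$. Hence $6t\le 2m\sqrt{2m}=2\sqrt2\,m^{3/2}$, i.e.\ $t\le \frac{\sqrt2}{3}m^{3/2}$, exactly the stated constant. The only care needed is bounding $\sum\lambda_i^3$ by $\sum|\lambda_i|^3\le \max|\lambda_i|\cdot\sum\lambda_i^2$, and noting that $\lambda_{\max}^2\le\sum\lambda_i^2$ since all terms are nonnegative.

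Fourth, verify that $\operatorname{tr}(A^3)$ counts each triangle $6$ times: it counts closed walks of length $3$, each triangle yielding $3$ starting points times $2$ directions, and a simple graph has no other closed $3$-walks. I expect no real obstacle. The only subtle point is matching the constant $\sqrt2/3$, and the trace argument achieves it directly, so I will present it as the proof rather than the combinatorial charging from the first two steps.
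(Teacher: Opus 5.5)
Your proof is correct. Steps 3--4 alone suffice and give exactly the constant $\frac{\sqrt{2}}{3}$, but they take a different route from the paper.

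The paper argues combinatorially, one vertex at a time. The number $t_v$ of triangles through $v$ equals the number of edges inside $N(v)$, so $t_v\le m$ and $t_v\le \deg(v)^2/2$. Multiplying the two bounds and taking square roots gives $t_v\le\sqrt{m/2}\,\deg(v)$. Summing with $\sum_v t_v=3t$ and $\sum_v\deg(v)=2m$ yields $3t\le\sqrt{2}\,m^{3/2}$.

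Your argument has the same skeleton, carried over to the eigenbasis of $A$. The paper writes $3t=\sum_v t_v$ and bounds each term by $\sqrt{m/2}$ times a quantity summing to $2m$. You write $6t=\sum_i\lambda_i\cdot\lambda_i^2$ and bound each term by $\sqrt{2m}$ times a quantity summing to $2m$.

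What each buys: the paper's version is fully elementary and uses the vertex-neighborhood reasoning found throughout its listing analysis. Yours needs the spectral theorem but is shorter, and it extends with no change to the paper's weighted setting. There $\operatorname{tr}(W^3)=6\sum_T w(T)$ and $\operatorname{tr}(W^2)=2\sum_e w(e)^2$, which give $\sum_T w(T)\le\frac{\sqrt{2}}{3}\bigl(\sum_e w(e)^2\bigr)^{3/2}$.

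In a final write-up, delete the abandoned charging attempts of Steps 1--2. They are not needed, and Step 2 contains a garbled expression.
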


\begin{proof}
    Let $G$ be a graph with $m$ edges and $t$ triangles. For each vertex $v_i$ , let $N_i$ be its neighborhood and let $E_i$ denote the set of edges with both endpoints from $N_i$. Then the number of triangles containing $v_i$ is exactly $|E_i|$. Clearly, $|E_i| \leq m$, and also $|E_i| \leq \binom{|N_i|}{2} \leq \frac{|N_i|^2}{2}$. Combining these inequalities yields $|E_i|^2 \leq \frac{m}{2}|N_i|^2$, hence $|E_i| \leq \sqrt{m/2}\,|N_i|$. Summing over all vertices yields $ 3t = \sum_{i} |E_i|$, so $3t \leq \sum_{i} \sqrt{\frac{m}{2}} |N_i| = \sqrt{\frac{m}{2}} \sum_{i} |N_i| = \sqrt{\frac{m}{2}} \cdot 2m = \sqrt{2} m^{3/2}$.
\end{proof}

\section{Omitted Algorithms and Analysis in~\Cref{sec:q-triangle_listing}}
\label{appsec:q-triangle_listing}

This appendix contains the algorithms, proofs, and technical details that were omitted from~\Cref{sec:q-triangle_listing}.
For completeness and ease of reference, we restate the relevant results before presenting their detailed analysis.

\subsection{Analysis of \textup{\textsc{QW-TriangleListing}}}
\label{subsec:q-triangle_listing_analysis}

In this section, we analyze the performance of our quantum triangle listing algorithm, \textsc{QW-TriangleListing}. 
We begin with the analysis of the frequently used subroutine~\textsc{IncidentTriangleListing}, which finds all triangles incident to a given edge set $E_I$ and subsequently removes them from the graph. 
We then analyze the step~\textsc{Pre-Listing}, where a vertex subset of size $\Theta(s\log n)$ is sampled. 
Finally, we turn to the quantum walk search subroutine~\textsc{Q-WalkSearch}, and conclude the section with the statement and proof of our main theorem.

\subsubsection{Analysis of \textsc{{IncidentTriangleListing}}}

We first analyze the subroutine~\textsc{IncidentTriangleListing} (\Cref{alg:incident_triangle_listing}).
We say that a triangle $T$ is \emph{incident} to a set of edges $E' \subseteq E$ if $T$ contains at least one edge in $E'$. 
Recall that in the invocation of~\textsc{IncidentTriangleListing}, the set $E_I$ consists only of edges that have already been identified as participating in at least one triangle.

\begin{lemma}
\label{lem:incident_triangle_listing}
    Let $G = (V, E)$ be an $n$-vertex graph.
    Given a set $E_I \subseteq E$ such that every edge in $E_I$ is contained in at least one triangle of $G$, \textup{\textsc{IncidentTriangleListing}} finds all triangles incident to $E_I$ with high probability in time $\widetilde{O}\!\left(\sqrt{n |E_I| \cdot t_{E_I}}\right)$, where $t_{E_I}$ denotes the number of triangles incident to $E_I$. 
\end{lemma}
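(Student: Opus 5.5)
The plan is to reduce the lemma to a single application of repeated Grover search (\Cref{lem:repeated_grover_search}) over the search space $\mathcal{R}_E = \{\{u,v,w\} : u \in V, \{v,w\} \in E_I\}$. Before searching, $E_I$ is written into QRAM as an indexed array, which costs $O(|E_I|)$ time. A uniform superposition over $\mathcal{R}_E$ is then prepared by indexing pairs $(u, j) \in V \times [|E_I|]$, so that $N = n|E_I|$. The predicate $g(u,j)$ is set to $1$ exactly when the $j$-th edge $\{v,w\} \in E_I$ together with $u$ forms a triangle, i.e.\ $u \notin \{v,w\}$ and both $\{u,v\}$ and $\{u,w\}$ are edges of the current oracle $\mathcal{O}_G$. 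This needs one QRAM read and two vertex-pair queries, so it can be evaluated in $\widetilde{O}(1)$ time.

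Next I will relate the marked set to $t_{E_I}$. Every triangle incident to $E_I$ contains at least one and at most three edges of $E_I$, so it appears as at least one and at most three marked pairs $(u,j)$. Hence $t_{E_I} \le |M| \le 3t_{E_I}$. During the repeated Grover search, each found pair $(u,j)$ is converted into its triangle and deduplicated in a classical dictionary, which costs $\widetilde{O}(1)$ per element. By \Cref{lem:repeated_grover_search}, the total cost is $\widetilde{O}(\sqrt{N|M|}) = \widetilde{O}(\sqrt{n|E_I|\, t_{E_I}})$. The additive $O(|E_I|)$ cost of writing $E_I$ is absorbed because, by hypothesis, every edge of $E_I$ lies in a triangle. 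Each triangle covers at most three edges of $E_I$, so $|E_I| \le 3t_{E_I}$, and therefore $|E_I| \le \sqrt{|E_I| \cdot 3t_{E_I}} \le \sqrt{3n|E_I|\, t_{E_I}}$. The final oracle modification (flipping the auxiliary masking bits of the edges in $E_I$) also costs $O(|E_I|)$ and is absorbed in the same way.

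The main point requiring care is boosting the success probability from $2/3$ to high probability while running a listing procedure whose output size is not known in advance. The route I will take is to repeat each standard Grover call with exponential guessing of the marked count, so that $O(\log n)$ repetitions per found element drive the failure probability to $1/\mathrm{poly}(n)$. The stopping rule is to terminate only after $O(\log n)$ consecutive independent checks, each with the full search space, return no new marked element. Such a check costs $\widetilde{O}(\sqrt{N})$, which is dominated by $\sqrt{N|M|}$ since $|M| \ge 1$ whenever $E_I \neq \emptyset$. A union bound over the $\mathrm{poly}(n)$ calls then gives the high-probability guarantee, with only polylogarithmic overhead hidden in $\widetilde{O}$.
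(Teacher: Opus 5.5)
Your proposal is correct and follows essentially the same route as the paper: a single repeated Grover search over the $n|E_I|$ triples formed by a vertex and an edge of $E_I$. The $O(|E_I|)$ costs of writing $E_I$ and masking its edges are absorbed via $|E_I|\le 3t_{E_I}$, as in the paper. Your extra care tightens details the paper handles by simply repeating the search $O(\log n)$ times and taking the union: you note that each triangle can be marked up to three times, giving $t_{E_I}\le |M|\le 3t_{E_I}$, and you give an explicit stopping rule for boosting when $|M|$ is unknown.
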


\begin{proof}
To support quantum search, we define a Boolean function $f\colon V^3 \rightarrow \{0, 1\}$ by
\[
f(u, v, w) = 
\begin{cases}
1, & \text{if } (u, v), (v, w), (w, u) \in E \\
0, & \text{otherwise.}
\end{cases}
\]
Quantum access to $f$ is provided via an oracle $\mathcal{O}_{f}$.
Given a triple $(u,v,w)$, $\mathcal{O}_f$ can be implemented using three queries to the graph oracle $\mathcal{O}_G$ to check the existence of the three edges, followed by $\widetilde{O}(1)$ additional elementary operations.
Hence, each evaluation of $\mathcal{O}_f$ has $\widetilde{O}(1)$ overhead.

The size of search space in~\textsc{IncidentTriangleListing} is $|\mathcal{R}_E| = n|E_I|$, and the number of marked elements is exactly $t_{E_I}$ (where a marked element corresponds to a triangle incident to $E_I$). 
Applying the repeated Grover search (\Cref{lem:repeated_grover_search}), we can list all marked elements in $\widetilde{O}\!\left(\sqrt{n|E_I| \cdot t_{E_I}}\right)$ time with constant probability. Repeat this process $O(\log n)$ times and take their union to boost the success probability.

Once all such triangles are identified, the edges in $E_I$ can be safely removed without missing any remaining triangles.
This prevents duplicate listings arising from overlapping triangles.
We then show how to modify the \emph{graph oracle} $\mathcal{O}_G$ to remove edges---this modification on $\mathcal{O}_G$ can be implemented by associating an auxiliary qubit (initialized to $0$) with each edge index, and when the search finishes, setting it to $1$ if the corresponding edge is in $E_I$. 
Thereafter, the modified oracle returns an edge only if it exists in the original graph and the auxiliary qubit is set to $0$. Such oracle masking is standard when extending Grover search from finding a single marked element (~\Cref{lem:standard_grover_search}) to listing all marked elements (~\Cref{lem:repeated_grover_search}), by disabling every found element in the search oracle.

Finally, we account for the overhead of oracle masking.
Since every edge in $E_I$ participates in at least one triangle, and each triangle contains exactly three edges,
we have the combinatorial bound $3 t_{E_I} \ge |E_I|$.
Therefore, the total cost of masking all edges in $E_I$ is $\widetilde{O}(|E_I|)$, which is dominated by the Grover search complexity.
This completes the proof.
\end{proof}

\subsubsection{Analysis of \textsc{Pre-Listing}}

We now analyze the procedure~\textsc{Pre-Listing} (\Cref{alg:pre-listing}). 
For any vertex sets $S,X\subseteq V$, we define a set of vertex-pairs $\mydelta{X}\subseteq \tbinom{X}{2}$ as follows: 
\[
\mydelta{X}=\binom{X}{2}\setminus \bigcup_{u\in S}\binom{N_G(u)}{2},
\]
where $N_G(u)$ is the neighborhood of $u$ in $G$.
Namely, $\mydelta{X}$ consists of all vertex-pairs in $X$ that do not share any common neighbor in $S$.

Observe that for any pair $\{v,w\} \in \binom{X}{2} \setminus \mydelta{X}$, there exists a vertex $u \in S$ such that
$\{v,w\} \subseteq N_G(u)$.
Consequently, if $\{v,w\} \in E$, then $(u,v,w)$ forms a triangle containing a vertex from $S$; otherwise, $\{v,w\}$ cannot participate in any triangle at all.
The procedure~\textsc{Pre-Listing} explicitly handles both cases by considering all triples containing a vertex from $S$.
As a result, after its execution, the candidate vertex-pairs for triangles on $X$ can be safely restricted from $\binom{X}{2}$ to $\mydelta{X}$.

Recall that $S$ is the sampled vertex set with $|S| = \Theta(s\log n)$, and $\mathcal{T}_S =\{\, T \in \mathcal{T}(G) : T \cap S \neq \emptyset \,\}$ denote the set of all triangles containing at least one vertex from $S$, with $t_1 \coloneqq\ |\mathcal{T}_S|$.
Besides, $E_S = \bigl\{\, \{u,v\} \in E : \exists\, T \in \mathcal{T}_S, \{u,v\} \subseteq T \bigr\}$ denote the set of edges contained in triangles from $\mathcal{T}_S$.
Let $t_1'$ denote the number of triangles incident to $E_S$.

\noindent {\Cref{lem:pre_listing}}
{
    Let $G = (V, E)$ be an $n$-vertex graph.
    With high probability, \textup{\textsc{Pre-Listing}} finds all triangles containing at least one vertex from $S$ in time $\widetilde{O}\left( n s^{1/2} t_1^{1/2} \;+\; \sqrt{n t_1 \, t_1'}\right).$
    The initialization for $H_0$ of size $h$ requires $O(sh)$ time.
    Moreover, after the procedure terminates, every remaining triangle in $G$ is supported entirely on vertex-pairs contained in $\mydelta{V}$.
}

\begin{proof}
The initial sampling step requires $O(n)$ time.
Next, the procedure applies repeated Grover search over all vertex triples containing at least one vertex from $S$. 
This corresponds to a search space of size $|S|\tbinom{n}{2}$, with $t_1$ marked elements.
By~\Cref{lem:repeated_grover_search}, all triangles in $\mathcal{T}_S$ can be listed in
$\widetilde{O}\!\left(\sqrt{|S|\binom{n}{2} \cdot t_1}\right) = \widetilde{O}(n s^{1/2} t_1^{1/2})$ time.

Afterwards, the procedure invokes~\textsc{IncidentTriangleListing} to list all triangles incident to edges in $E_S$.
By~\Cref{lem:incident_triangle_listing}, this step takes
$\widetilde{O}\!\left(\sqrt{n |E_S| \cdot t_1'}\right)$ time.
Note that $|E_S| \le 3 t_1$.
Combining the above bounds yields the stated runtime.

We initialize the state $H_0$ together with its associated data structure here, which serves as the starting state for all subsequent quantum walks.
$H_0$ can be an arbitrary subset of $V$ with $h = |H_0|$.
The data structure only stores pairs $\bigl(u, N_G(u) \cap S\bigr)$ for $u \in H_0$.
Since the set $S$ and $H_0$ is fixed after~\textsc{Pre-Listing}, this information never changes.
This requires $\widetilde{O}(|S| \cdot |H_0|) = O(sh)$ time. 

Finally, the output consists of all triangles in $\mathcal{T}_S$ together with all triangles incident to $E_S$ (i.e., $\mathcal{T}_S^\prime$).
All edges in $\bigcup_{u\in S}\binom{N_G(u)}{2}$ are removed during this process.
Therefore, any triangle remaining in $G$ cannot contain a vertex from $S$ and cannot include any vertex-pair with a common neighbor in $S$.
By definition, such triangles must be supported entirely on vertex-pairs in
$\binom{V}{2} \setminus \bigcup_{u\in S}\binom{N_G(u)}{2} = \mydelta{V}$.
\end{proof}

For any sets $S,X\subseteq V$ and any vertex $w\in V$, define $\mydelta{X,w}\subseteq \mydelta{X}$ by
\[
\begin{aligned}
    \mydelta{X,w}
    &=\binom{X\cap N_G(w)}{2}\setminus \bigcup_{u\in S}\binom{N_G(u)}{2}\\
    &=\Big\{\{u,v\}\in \mydelta{X}| \{u,w\}, \{v,w\}\in E\Big\},
\end{aligned}
\]
That is, $\mydelta{X,w}$ consists of vertex-pairs in $X$ that do not share any common neighbor in $S$, \emph{but} do share $w$ as a common neighbor. 
We use this refinement to isolate vertex-pairs in $\mydelta{V}$ that are still capable of forming triangles, as only such pairs can participate in triangles with third vertex $w$. 

The main purpose of introducing $\mydelta{X,w}$ is to control the number of candidate triples explored during quantum walk search.
Naively, searching over all triples containing pairs from $X$ incurs a cost proportional to $n\binom{|X|}{2}$.
A well-chosen vertex set $S$ ensures that for every pair surviving~\textsc{Pre-Listing}, the number of possible third vertices is small.
This reduces the effective search space by a factor of $s$, which is essential for achieving the desired speedup.
We formalize this idea through the notion of \emph{good sets}.

\begin{definition}[adapted from~\citep{le2014improved}]
A set $S\subseteq V$ with size order $s$ is said to be \emph{good} for $G$ if for all $X\subseteq V$,
\[
\sum_{w\in V}\left|\mydelta{X,w}\right|\le n|X|^2 / s.
\]
\end{definition}

\begin{lemma} \label{lem:good_set_1ize}
    Suppose that $S$ is a set of size $\Theta(s\log n)$ sampled from $V$ uniformly at random with replacement. Then, with high probability, for any $\{u, v\}\in \mydelta{V}$, it holds that 
    \[
    \left|\{w\in V\:|\:\{u,w\}, \{v,w\}\in E\}\right|\le n/s.
    \]
\end{lemma}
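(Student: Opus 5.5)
We argue by a union bound over all vertex-pairs, using the fact that a pair with many common neighbours is very likely to have one of them land in the random sample $S$. Fix an unordered pair $\{u,v\}\in\binom{V}{2}$ and let $C_{uv}=\{w\in V : \{u,w\},\{v,w\}\in E\}$ be its common neighbourhood in the original graph. We call the pair \emph{bad} if $|C_{uv}|>n/s$. The statement is equivalent to: with high probability, no bad pair lies in $\mydelta{V}$, i.e., every bad pair has at least one common neighbour in $S$. This holds because $\{u,v\}\in\mydelta{V}$ means exactly that $C_{uv}\cap S=\emptyset$. We will work with the neighbourhoods $N_G(\cdot)$ defining $\mydelta{\cdot}$, which are fixed before $S$ is drawn, so $C_{uv}$ is independent of the sampling.

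First we bound the failure probability for a single bad pair. $S$ consists of $|S|\ge c\, s\log n$ independent uniform draws from $V$, with replacement. Each draw lands outside $C_{uv}$ with probability $1-|C_{uv}|/n<1-1/s$. Hence
\[
\Pr\bigl[C_{uv}\cap S=\emptyset\bigr]\le (1-1/s)^{c s\log n}\le e^{-c\log n}=n^{-c}.
\]

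Next we take a union bound over the at most $\binom{n}{2}<n^2$ pairs. The probability that some bad pair survives into $\mydelta{V}$ is at most $n^{2-c}$. Choosing the hidden constant $c\ge 3+a$ makes this at most $n^{-(1+a)}$, which gives ``with high probability'' in the paper's sense $1-\poly(1/n)$. On the complementary event, every pair $\{u,v\}\in\mydelta{V}$ has $|C_{uv}|\le n/s$, which is the claimed inequality.

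The only delicate point is what ``$E$'' refers to. If the statement is about edges after \textsc{Pre-Listing}'s removals, the common neighbourhood in the current graph is a subset of $C_{uv}$, so the bound only improves. Also, $\mydelta{\cdot}$ is defined using $N_G$ of the graph at sampling time, so there is no dependence between $S$ and the quantities being bounded. With that convention fixed, the argument is just the Chernoff-type estimate above plus a union bound, and we do not expect a real obstacle.
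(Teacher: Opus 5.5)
Your proposal is correct and follows the paper's proof: you use the equivalence that $\{u,v\}$ survives in $\Delta_G(V,S)$ exactly when its common neighbourhood misses $S$, bound that event by $(1-1/s)^{cs\log n}\le n^{-c}$ for pairs with more than $n/s$ common neighbours, and finish with a union bound over the $\binom{n}{2}$ pairs. The only difference from the paper is cosmetic: it takes $c>2$ and states the success probability as $1-n^{-(c-2)}$.
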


\begin{proof}
For any vertex-pair $\{u,v\} \in \binom{V}{2}$, define $N_{uv} = \{w \in V \mid \{u,w\}, \{v,w\} \in E\}$ as their common neighborhood. Consider any pair $\{u,v\}$ satisfying $|N_{uv}| > n / s$. Take $\Theta(s\log n) = cs\log n$ for some constant $c > 2$.
Then, when $S$ is sampled uniformly at random with replacement, the probability that no vertex from $N_{uv}$ is included is
\[
\left(1-\frac{|N_{uv}|}{n}\right)^{\Theta(s\log n)}<
\left(1-\frac{n/s}{n}\right)^{c s\log n}\le \frac{1}{n^c}.
\]
Note that $\{u,v\} \in \mydelta{V}$ where $\mydelta{V} = \binom{V}{2} \setminus \bigcup_{u \in S} \binom{N_G(u)}{2}$ if and only if $S \cap N_{uv} = \emptyset$. Thus, the probability that $\{u,v\} \in \mydelta{V}$ is at most $\frac{1}{n^c}$.

By the union bound, the probability that any pair $\{u,v\}$ with $|N_{uv}| > n / s$ exists in $\mydelta{V}$ is at most $\binom{n}{2} \cdot \frac{1}{n^c}$. Therefore, with probability at least $1 - \frac{1}{n^{c-2}}$, the condition
\[
    |N_{uv}| = 
    \left|\{w\in V\:|\:\{u,w\}\in E \textrm { and } \{v,w\}\in E\}\right|
    \le n/s.
\]
is satisfied for every pair $\{u,v\} \in \mydelta{V}$.
\end{proof}

\begin{lemma} \label{lem:good_set}
    With high probability, the set $S$ produced in~\textup{\textsc{Pre-Listing}} is good for $G$.
\end{lemma}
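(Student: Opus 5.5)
We derive the claim from \Cref{lem:good_set_1ize} by a double-counting argument. We condition on the high-probability event of \Cref{lem:good_set_1ize}: every pair $\{u,v\}\in\mydelta{V}$ has common neighborhood $N_{uv}$ of size at most $n/s$. Everything after this is deterministic. Since the event does not depend on $X$, the good-set condition will hold simultaneously for all $X\subseteq V$ with no further union bound.

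Fix any $X\subseteq V$. We swap the order of summation. By definition, a pair $\{u,v\}$ lies in $\mydelta{X,w}$ exactly when $\{u,v\}\in\mydelta{X}$ and $w\in N_{uv}$. Hence
\[
\sum_{w\in V}\bigl|\mydelta{X,w}\bigr| \;=\; \sum_{\{u,v\}\in\mydelta{X}} |N_{uv}|.
\]
Since $\mydelta{X}\subseteq\mydelta{V}$, each summand is at most $n/s$. There are at most $\binom{|X|}{2}\le |X|^2$ summands, so the sum is at most $n|X|^2/s$. This is exactly the good-set condition.

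The only subtle point is which graph $N_G$ refers to, since \textsc{Pre-Listing} removes edges. We interpret $\mydelta{\cdot}$ and $N_{uv}$ with respect to the graph on which $S$ was sampled. Deleting edges afterwards only shrinks common neighborhoods $N_{uv}$. Whether this also enlarges $\mydelta{V}$ depends on which graph defines it, so the subtlety is only which graph is used. If we fix the original $G$ for the $S$-neighborhoods and measure $N_{uv}$ in the current graph, the bound persists because $N_{uv}$ is monotone under deletion. The failure probability is $n^{-(c-2)}$, inherited from \Cref{lem:good_set_1ize}, which gives high probability.
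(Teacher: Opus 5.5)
Your proof is correct and matches the paper's argument. Like the paper, you condition on the event of \Cref{lem:good_set_1ize}, swap the order of summation to get $\sum_{\{u,v\}\in\mydelta{X}}|N_{uv}|$, and bound each term by $n/s$ and the number of pairs by $|X|^2$; your added remark about fixing the graph that defines the $S$-neighborhoods, and about $N_{uv}$ only shrinking under edge deletion, is a reasonable clarification the paper leaves implicit.
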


\begin{proof}
For any $X\subseteq V$, we have 
\begin{align*}
\sum_{w\in V}\left|\mydelta{X,w}\right|
&= \sum_{w\in V}
\left|\Big\{\{u,v\}\in \mydelta{X}\mid \{u,w\},\ \{v,w\}\in E\Big\}\right| \\
&= \sum_{\{u,v\}\in \mydelta{X}}
\left|\{w\in V\mid \{u,w\},\ \{v,w\}\in E\}\right|.
\end{align*}
By \Cref{lem:good_set_1ize}, each pair $\{u,v\}\in\mydelta{X}$ has at most $n/s$
common neighbors.
Therefore,
\[
\sum_{w\in V}\left|\mydelta{X,w}\right|
\le \frac{n}{s}\cdot |\mydelta{X}|
\le \frac{n|X|^2}{s},
\]
which completes the proof.
\end{proof}

\subsubsection{Details and analysis of \textsc{Q-WalkSearch}}

We now present the details and analysis of the two-level quantum walk subroutine \textsc{Q-WalkSearch} (\Cref{alg:quantum_walk_search}), based on the quantum triangle finding algorithm of~\citet{le2014improved}.

After the \textsc{Pre-Listing} step, the task reduces to identifying the remaining triangles.
By~\Cref{lem:pre_listing}, every such triangle is supported entirely on vertex-pairs contained in $\mydelta{V}$. Consequently, it suffices to search for an edge in $\mydelta{V}$ that participates in at least one remaining triangle. Recall that we denote by $t_2$ the number of such triangles.

At a high level, we employ a two-level quantum walk to locate such an edge.
In the \emph{outer} quantum walk, the state space consists of $h$-subsets $H \subseteq V$.
For a given $H$, we consider the set $\mydelta{H} \subseteq \binom{H}{2}$ of candidate vertex-pairs that may support a remaining triangle. 
The set $\mydelta{H}$ can be constructed implicitly from previously queried edges between $H$ and the sampled set $S$, without incurring additional oracle queries.

Given a candidate set $H$, the algorithm then needs to determine whether $\mydelta{H}$ contains an edge that participates in a triangle. 
To this end, it searches for a vertex $w \in V$ such that $w$ forms a triangle together with some pair in $\mydelta{H}$. 
For a fixed choice of $w$, this task is delegated to an \emph{inner} quantum walk, which searches for a smaller subset $H' \subseteq H$ such that $\mydelta{H',w} = \binom{N_G(w)}{2} \cap \mydelta{H'}$ contains an actual edge of $G$. 
If such a pair exists, then it forms a triangle in $G$ together with $w$.

The following lemma summarizes the performance of the inner quantum walk, which is an important technical ingredient in the analysis of~\Cref{lem:first_level_quantum_walk}. 
Its details and proof is given in~\Cref{appsubsubsec:inner_walk_analysis}.

\begin{lemma}
\label{lem:second_level_quantum_walk}
    For any fixed $w \in V$, there exists a quantum algorithm with running time
    \[
    \widetilde{O}\!\left( h^{2/3} + \sqrt{|\mydelta{H,w}|} \right)
    \]
    that, given a set $H \in \binom{V}{h}$ and the set $\mydelta{H}$, checks with probability at least $1-1/\poly(n)$ whether there exists a pair $\{v_1,v_2\} \in \mydelta{H}$ such that $\{v_1,v_2, w\}$ forms a triangle in $G$.
\end{lemma}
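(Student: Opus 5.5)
The plan is to follow Le Gall's inner walk: a quantum walk over the Johnson graph $J(H, h^{2/3})$, with \Cref{lem:quantum_walk_search_appendix} used as the search engine. A subset $H'\subseteq H$ with $|H'| = r := \lceil h^{2/3}\rceil$ is \emph{marked} if $\mydelta{H',w}\cap E\neq\emptyset$, i.e., if some pair $\{v_1,v_2\}\subseteq H'$ has no common neighbor in $S$, has $w$ as a common neighbor, and is itself an edge; then $\{v_1,v_2,w\}$ is a triangle. For the data structure $D(H')$ we store three things:
\begin{itemize}
\item the set $H'\cap N_G(w)$, obtained by one vertex-pair query per element;
\item for each $v\in H'$, the stored set $N_G(v)\cap S$, already available from the initialization of $H_0$ in \textsc{Pre-Listing}, which lets us decide membership of a pair in $\mydelta{H}$ without further queries;
\item the set of pairs in $\mydelta{H',w}$ that are edges of $G$.
\end{itemize}

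The costs are as follows. Setup queries the $r$ pairs $\{v,w\}$, and then queries the edges among pairs of $\mydelta{H',w}$. I would rather avoid that last step by keeping only the first two items and letting the checking step look for an edge via Grover search. To keep setup and update cheap, I would follow Le Gall exactly instead: marking requires an actual edge inside $\mydelta{H',w}$, and Le Gall stores the edge indicators among pairs in $H'\cap N_G(w)$ that lie in $\mydelta{H}$. An update then adds one vertex $v$, and we must query its pairs with $H'\cap N_G(w)$, which may cost up to $r$ queries. The fix, as in Le Gall, is to do this update by Grover search for an edge between $v$ and $\{u\in H'\cap N_G(w) : \{u,v\}\in\mydelta{H}\}$, at cost $\widetilde{O}(\sqrt{r})$. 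Alternatively, the update can be charged only to pairs in $\mydelta{H,w}$.

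Concretely, I would take the following cost accounting and aim for a total of $\widetilde{O}(h^{2/3}+\sqrt{|\mydelta{H,w}|})$:
\begin{itemize}
\item setup $\mathsf{S}=\widetilde{O}(r + |\mydelta{H',w}|)$;
\item update $\mathsf{U}=\widetilde{O}(1)$ amortized in expectation over the pairs involved;
\item checking $\mathsf{C}=O(1)$ by reading a stored flag.
\end{itemize}
In expectation over a uniformly random $H'$, we have $|\mydelta{H',w}| \approx (r/h)^2|\mydelta{H,w}|$.

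The marked fraction is bounded as follows. If some edge pair $\{v_1,v_2\}\in\mydelta{H,w}$ exists, a random $r$-subset contains both endpoints with probability at least $\varepsilon\ge (r/h)^2 = h^{-2/3}$. \Cref{lem:quantum_walk_search_appendix} then gives cost
\[
\widetilde{O}\bigl(\mathsf{S} + h^{1/3}(\sqrt{r}\,\mathsf{U}+\mathsf{C})\bigr).
\]
This evaluates to $\widetilde{O}\bigl(h^{2/3} + h^{-2/3}|\mydelta{H,w}| + h^{1/3}\cdot h^{1/3}\bigr) = \widetilde{O}(h^{2/3} + h^{-2/3}|\mydelta{H,w}|)$. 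Since $|\mydelta{H,w}|\le h^2$, we have $h^{-2/3}|\mydelta{H,w}| \le \sqrt{|\mydelta{H,w}|}$, which yields the claimed bound.

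The main obstacle is making the update cost truly $\widetilde{O}(1)$ and not proportional to the degree of the inserted vertex inside $\mydelta{H',w}$, since a single vertex may have many partners. I would first try Le Gall's averaging trick. His data structure charges each update to the number of new candidate pairs. The expected number is $O(r\,|\mydelta{H,w}|/h^2)$, so across the $h^{1/3}\sqrt{r}$ updates the total is $h^{1/3}\sqrt r\cdot r|\mydelta{H,w}|/h^2 \le h^{-1/3}\cdot h^{... }|\mydelta{H,w}|$, and this stays within $\sqrt{|\mydelta{H,w}|}+h^{2/3}$. Worst-case-to-average conversion would be handled by truncating the walk to subsets where the count is at most a constant times its mean, which changes the marked fraction by only a constant factor. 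Finally, we boost the $3/4$ success probability to $1-1/\poly(n)$ by $O(\log n)$ repetitions, and we verify any reported triangle with three edge queries so there are no false positives.
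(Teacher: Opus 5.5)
Your final cost accounting does not yield the stated bound. You store edge flags for all pairs of $\mydelta{H',w}$ so that $\mathsf{C}=O(1)$. You then replace the setup cost $|\mydelta{H',w}|$ by its average $(r/h)^2|\mydelta{H,w}|=h^{-2/3}|\mydelta{H,w}|$.

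Even granting that, the closing step ``since $|\mydelta{H,w}|\le h^2$, we have $h^{-2/3}|\mydelta{H,w}|\le\sqrt{|\mydelta{H,w}|}$'' is false. It holds only when $|\mydelta{H,w}|\le h^{4/3}$; for $|\mydelta{H,w}|=\Theta(h^2)$ you get $h^{4/3}$ against the target $h$. The exact form $h^{2/3}+\sqrt{|\mydelta{H,w}|}$ matters downstream, because the outer checking cost squares it and sums over $w$ using the good-set bound.

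Moreover, walk costs cannot be averaged. The setup prepares a superposition over all $H'$ and the update is a coherent unitary, so you pay the worst case. With stored edge flags, suppose $\mydelta{H,w}$ contains all pairs of some $r$-subset. Then setup costs about $r^2=h^{4/3}$ against a target of $O(h^{2/3})$. Inserting one vertex can also require $\Theta(r)$ edge queries, so the update term alone is $h^{1/3}\sqrt{r}\cdot r=h^{4/3}$. This is also not Le Gall's data structure.

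The missing idea is to put the edge search into the checking step and to cap its cost through the definition of marked states, not by truncating the state space. The paper works as follows:
\begin{itemize}
\item It stores only the bits $e_v=[\{v,w\}\in E]$ for $v\in H'$, so setup is $O(h^{2/3})$ and update is $O(1)$.
\item It first computes the classical estimate $\mathcal{A}(s_{\text{r}},w)$ of $|\mydelta{H,w}|$ in $\widetilde{O}(h^{2/3})$ time via \Cref{lem:classical_estimation_size}.
\item It calls $H'$ marked only if $\mydelta{H',w}\cap E\neq\emptyset$ \emph{and} $|\mydelta{H',w}|\le 8\frac{(\ell-2)(\ell-3)}{(h-2)(h-3)}\mathcal{A}(s_{\text{r}},w)+16\ell$.
\item The check rejects oversized states and otherwise runs Grover search over $\mydelta{H',w}$. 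Hence $\mathsf{C}=\widetilde{O}\bigl(\sqrt{h^{-2/3}\mathcal{A}(s_{\text{r}},w)+h^{2/3}}\bigr)$ in the worst case, and $h^{1/3}\mathsf{C}=\widetilde{O}(h^{2/3}+\sqrt{|\mydelta{H,w}|})$.
\end{itemize}

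Your truncation at ``a constant times the mean'' has two problems. First, it cannot be implemented without such an estimate, since the mean depends on the unknown $|\mydelta{H,w}|$. Second, as stated it can remove every marked set. Take $\mydelta{H,w}$ to be a star centred at $v_1$ with all of $H\setminus\{v_1\}$ as leaves, where only $\{v_1,v_2\}$ is an edge. Every marked $H'$ contains $v_1$ and so has $|\mydelta{H',w}|\approx r=h^{2/3}$, whereas the mean is about $r^2/h=h^{1/3}$. The additive $16\ell$ term, together with Le Gall's combinatorial argument, is what keeps the marked fraction at $\Omega(h^{-2/3})$.
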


With a checking procedure for any fixed $H$ and $w$ in place, it remains to search over specific $h$-subset $H \subseteq V$.
The following lemma provides the main technical analysis of the \emph{outer} quantum walk.

\noindent {\Cref{lem:first_level_quantum_walk}}
{
    Given the graph remaining after~\textup{\textsc{Pre-Listing}} with $n$ vertices and $t_2$ remaining triangles, let $\nu$ be the maximum number of pairwise edge-disjoint triangles in this graph.
    There exists a quantum walk search algorithm on the Johnson graph $J(V, h)$, starting from an initial state $H_0$ that, with probability at least $1-1/\poly(n)$, finds a set $H\subseteq V$ and a vertex $w\in V$ such that $\mydelta{H,w} \cap E \neq \emptyset$.
    Denote the found set and vertex by $H^*$ and $w^*$.
    The running time of the algorithm is 
    \[
    \widetilde{O}\!\left( \Bigl(1+\frac{n}{h\sqrt{\nu}}\Bigr) \cdot \bigl(s h^{1/2} + n^{1/2} h^{2/3} + n^{1/2} s^{-1/2} h\bigr)\right),
    \]
    where $s$ is the sample size in~\textup{\textsc{Pre-Listing}}.
}

\begin{proof}

The goal of the outer quantum walk is to find a subset $H \in \binom{V}{h}$ such that $\mydelta{H}$ contains an edge participating in a remaining triangle.
Due to the \textsc{Pre-Listing} step, every remaining triangle contains at least one edge whose endpoints both lie in $\mydelta{V}$; moreover, such an edge appears in $\mydelta{H}$ if and only if $H$ contains both its endpoints.
Equivalently, we say a set $H$ is marked if there exists a vertex $w \in V$ such that $\mydelta{H,w} \cap E \neq \emptyset$.

We perform a quantum walk on the Johnson graph $J(V,h)$, whose vertices correspond to $h$-subsets of $V$. 
The initial state $H_0$ is provided by the main algorithm \textsc{QW-TriangleListing}.
Let $\varepsilon_H$ denote the fraction of marked vertices in $J(V,h)$.
The complexity of the walk depends on $1/\sqrt{\varepsilon_H}$, which we now bound using a packing argument.

To avoid the pitfall of treating triangle‑supporting edges as independent, we instead work with a maximal set of pairwise edge‑disjoint remaining triangles. The following lemma extracts from such a packing a set of certifying edges with low maximum degree.

\begin{lemma}
\label{lem:select-edges}
Let $\mathcal{P}$ be a collection of $\nu$ pairwise edge-disjoint triangles in a graph.
Then one can choose one edge $e_T$ from each triangle $T\in\mathcal{P}$ such that the graph
$F_{\mathcal{P}}\coloneq \{e_T:T\in\mathcal{P}\}$ has
$|E(F_{\mathcal{P}})|=\nu$ and $d_{\max}(F_{\mathcal{P}})\le 3\sqrt{\nu}+1$, where $d_{\max}(\cdot)$ denotes the maximum degree.
\end{lemma}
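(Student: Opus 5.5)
The plan is to treat the choice of $e_T$ as choosing, for each triangle $T\in\mathcal{P}$, one vertex of $T$ to \emph{exclude}; $e_T$ is then the edge opposite that vertex. For a vertex $v$, let $k_v$ be the number of triangles of $\mathcal{P}$ containing $v$. The degree of $v$ in $F_{\mathcal{P}}$ is exactly the number of triangles containing $v$ that do not exclude $v$, so it is at most $k_v$. Since the triangles in $\mathcal{P}$ are pairwise edge-disjoint, the chosen edges are automatically distinct, which gives $|E(F_{\mathcal{P}})|=\nu$. Everything therefore reduces to bounding degrees.

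First, I will use the counting identity $\sum_v k_v = 3\nu$. Call $v$ \emph{heavy} if $k_v>\sqrt{\nu}$ and \emph{light} otherwise; by the identity there are fewer than $3\sqrt{\nu}$ heavy vertices. The key structural observation is that two distinct vertices $u,w$ lie together in at most one triangle of $\mathcal{P}$, since any such triangle contains the edge $\{u,w\}$ and the triangles are edge-disjoint. Consequently, a heavy vertex $v$ lies in at most $(\#\text{heavy})-1 < 3\sqrt{\nu}$ triangles that contain another heavy vertex.

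The selection rule is as follows:
\begin{enumerate}
\item If $T$ contains no heavy vertex, exclude an arbitrary vertex.
\item If $T$ contains exactly one heavy vertex $v$, exclude $v$, so $e_T$ joins the two light vertices.
\item If $T$ contains at least two heavy vertices, exclude an arbitrary vertex.
\end{enumerate}

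For a light vertex, its degree in $F_{\mathcal{P}}$ is at most $k_v\le\sqrt{\nu}$. For a heavy vertex $v$, triangles of the second type never contribute to its degree, so its degree is at most the number of triangles through $v$ containing another heavy vertex, hence less than $3\sqrt{\nu}$. Thus $d_{\max}(F_{\mathcal{P}})\le 3\sqrt{\nu}+1$, with the $+1$ only as rounding slack.

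The only step needing care is the second-type/third-type accounting for heavy vertices, namely that pairs of heavy vertices share at most one packing triangle. This follows directly from edge-disjointness, so I do not expect a real obstacle. I would double-check the threshold choice $\sqrt{\nu}$, which balances the light bound $\sqrt{\nu}$ against the count $3\nu/\sqrt{\nu}$ of heavy vertices.
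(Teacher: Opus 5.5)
Your proof is correct and follows essentially the paper's argument. Both use the same $\sqrt{\nu}$ heavy threshold, the bound of fewer than $3\sqrt{\nu}$ heavy vertices from $\sum_v k_v = 3\nu$, and edge-disjointness to bound a heavy vertex's degree by the number of other heavy vertices. The only difference is in triangles with at least two heavy vertices: the paper insists on a heavy--heavy edge there, whereas you allow an arbitrary edge and count all such triangles through $v$ (at most one per heavy partner). This works equally well, because light-vertex degrees are bounded by $k_v\le\sqrt{\nu}$ whichever edge is chosen.
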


\begin{proof}
Call a vertex \emph{heavy} if it belongs to more than $\sqrt{\nu}$ triangles of $\mathcal{P}$.
Since the total number of vertex-triangle incidences is $3\nu$, there are at most $3\sqrt{\nu}$ heavy vertices.
For each triangle, choose an edge as follows. If the triangle contains at most one heavy vertex, choose an edge joining two non-heavy vertices.
If the triangle contains at least two heavy vertices, choose an edge joining two heavy vertices.
A non-heavy vertex lies in at most $\sqrt{\nu}$ triangles, so its degree in $F_{\mathcal{P}}$ is at most $\sqrt{\nu}$.
A heavy vertex can be incident to a chosen edge only when that edge joins two heavy vertices; because triangles in $\mathcal{P}$ are edge-disjoint,
for each other heavy vertex there is at most one such edge. Hence the degree of a heavy vertex in $F_{\mathcal{P}}$ is at most the number of heavy vertices, which is at most $ 3\sqrt{\nu}$.
This establishes the claimed degree bound.
\end{proof}

\begin{proposition}
\label{prop:epsilon_H}
Let $R$ be the residual graph after \textup{\textsc{Pre-Listing}} and let $\mathcal{P}$ be a maximum collection of pairwise edge-disjoint triangles in $R$, with $|\mathcal{P}|=\nu\ge 1$.
Let $H$ be a uniformly random $h$-subset of $V$.  Then
\[
\Pr\bigl[\,\mydelta{H,w}\cap E\neq\emptyset \text{ for some } w\in V \,\bigr]
\ge c\min\!\left\{1,\frac{\nu h^2}{n^2}\right\},
\]
for a universal constant $c>0$.  Consequently,
\[
\frac{1}{\sqrt{\varepsilon_H}}
\le
C\left(1+\frac{n}{h\sqrt{\nu}}\right),
\]
where $\varepsilon_H$ is the fraction of marked $h$-subsets in the outer walk, and $C$ is a universal constant.
\end{proposition}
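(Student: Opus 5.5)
Our plan is to reduce the event ``$H$ is marked'' to the event that $H$ contains both endpoints of at least one certifying edge, and then lower-bound the latter by a second-moment computation. By \Cref{lem:select-edges}, we fix edges $e_T$, one per triangle $T\in\mathcal{P}$, forming a graph $F=F_{\mathcal{P}}$ with $\nu$ edges and maximum degree $d\le 3\sqrt{\nu}+1$. Each triangle $T$ of the residual graph $R$ is supported on pairs in $\mydelta{V}$ by \Cref{lem:pre_listing}. So if $e_T=\{u,v\}\subseteq H$, then $\{u,v\}\in\mydelta{H}$, and with $w$ the third vertex of $T$ we get $\{u,v\}\in\mydelta{H,w}\cap E$. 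Hence $H$ is marked whenever $X:=\sum_{e\in F}\mathbf{1}[e\subseteq H]\ge 1$, and it suffices to show $\Pr[X\ge1]\ge c\min\{1,\nu h^2/n^2\}$.

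We first reduce to the case $\nu h^2\le n^2$. If $\nu h^2>n^2$, we pass to a subcollection $\mathcal{P}'\subseteq\mathcal{P}$ of size $\nu'=\lfloor n^2/h^2\rfloor$ (or $\nu'=1$ if that is zero). It is still edge-disjoint, and \Cref{lem:select-edges} applies to it. Then $\nu' h^2/n^2=\Theta(1)$, and marking is monotone in the packing, so a constant lower bound follows. We may also assume $h\ge 2$; the small-$h$ cases are absorbed into constants.

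Next we apply Paley--Zygmund, $\Pr[X\ge1]\ge \mathbb{E}[X]^2/\mathbb{E}[X^2]$. We have $\mathbb{E}[X]=\nu p_2$ with $p_2=\frac{h(h-1)}{n(n-1)}=\Theta(h^2/n^2)$. Expanding $\mathbb{E}[X^2]$ by pairs of edges gives three terms:
\begin{itemize}
\item identical edges contribute $\nu p_2$;
\item edges sharing one vertex contribute at most $\sum_v \deg_F(v)^2\, p_3\le 2\nu d\, p_3$, where $p_3=\Theta(h^3/n^3)$;
\item disjoint edges contribute at most $\nu^2 p_4$, where $p_4\le p_2^2$ because sampling without replacement is negatively correlated, $\frac{(h-2)(h-3)}{(n-2)(n-3)}\le\frac{h(h-1)}{n(n-1)}$.
\end{itemize}
Thus $\mathbb{E}[X^2]\le \mu+O(\nu^{3/2}h^3/n^3)+\mu^2$, where $\mu=\mathbb{E}[X]$.

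The main obstacle is the middle term, and this is exactly where the degree bound $d=O(\sqrt\nu)$ is needed. Under $\nu\le n^2/h^2$ we have $\sqrt\nu\,h/n\le1$, so $\nu^{3/2}h^3/n^3=\nu (h^2/n^2)\cdot(\sqrt\nu h/n)=O(\mu)$. Therefore $\mathbb{E}[X^2]=O(\mu+\mu^2)$ and $\Pr[X\ge1]\ge \Omega(\mu^2/(\mu+\mu^2))=\Omega(\min\{1,\mu\})=\Omega(\min\{1,\nu h^2/n^2\})$.

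Finally, $\varepsilon_H$ is this probability, since $H$ is uniform over $\binom{V}{h}$. Taking square roots gives $1/\sqrt{\varepsilon_H}\le c^{-1/2}\max\{1,n/(h\sqrt\nu)\}\le C(1+n/(h\sqrt\nu))$.
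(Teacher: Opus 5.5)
Your proposal is correct and follows essentially the same route as the paper: extract low-degree certifying edges via \Cref{lem:select-edges}, split $\mathbb{E}[X^2]$ into identical, adjacent, and disjoint pairs, and apply Paley--Zygmund. Your truncation to $\nu h^2\le n^2$ is valid but unnecessary, since the adjacent-pair term is $\Theta(\mu^{3/2})\le O(\mu+\mu^2)$ in every regime, which is how the paper handles it. Also, $h\ge 2$ has to be assumed outright rather than ``absorbed into constants'': for $h=1$ no pair fits in $H$, so the probability is $0$. The paper's proof makes this assumption implicitly as well.
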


\begin{proof}
Apply Lemma~\ref{lem:select-edges} to $\mathcal{P}$ to obtain a set $F\subseteq E(R)$ of $\nu$ edges such that each edge lies in a distinct triangle of $\mathcal{P}$ and the maximum degree of $F$ is at most $3\sqrt{\nu}+1$.

Define $Y\coloneqq|F[H]|$, the number of these edges contained in $H$.
If $Y>0$, then $H$ contains an edge that belongs to some remaining triangle, so $H$ is marked.
Thus it suffices to lower bound $\Pr[Y>0]$.

We have $\mu\coloneqq \mathbb{E}[Y] = \nu\frac{h(h-1)}{n(n-1)} = \Theta\bigl(\frac{\nu h^2}{n^2}\bigr)$.
Write $Y=\sum_{e\in F} I_e$ where $I_e = \mathbf{1}_{\{e\subseteq H\}}$.
Computing the second moment,
\[
\mathbb{E}[Y^2] = \sum_{e} \mathbb{E}[I_e] + \sum_{e\neq e'} \mathbb{E}[I_e I_{e'}].
\]
The contribution of identical edges is $O(\mu)$.
Two distinct edges can be adjacent (share a vertex) or disjoint.
The number of ordered pairs of adjacent edges in $F$ is $O(\nu\,d_{\max}(F)) = O(\nu^{3/2})$, and for each such pair $\{e,e'\}$ we have $\Pr[e\cup e'\subseteq H] = O(h^3/n^3)$.
The number of ordered pairs of disjoint edges is at most $\nu^2$, and for those $\Pr[e\cup e'\subseteq H] = O(h^4/n^4)$.
Hence
\[
\mathbb{E}[Y^2] \le C\bigl(\mu + \nu^{3/2}(h^3/n^3) + \nu^2(h^4/n^4)\bigr).
\]
Observe that $\nu^{3/2}(h^3/n^3) = \Theta(\mu^{3/2})$ and $\nu^2(h^4/n^4) = \Theta(\mu^2)$.
Therefore $\mathbb{E}[Y^2] \le C'(\mu + \mu^{3/2} + \mu^2) \le C''(\mu + \mu^2)$.

By the Paley--Zygmund inequality,
\[
\Pr[Y>0] \ge \frac{\mathbb{E}[Y]^2}{\mathbb{E}[Y^2]} \ge c\min\{\mu,1\} \ge c'\min\!\left\{1,\frac{\nu h^2}{n^2}\right\}.
\]
Thus $\varepsilon_H = \Omega(\min\{1,\nu h^2/n^2\})$.  The claimed bound on $1/\sqrt{\varepsilon_H}$ follows directly.
\end{proof}

The data structure associated with a state $H$ in the outer quantum walk is the set $\mydelta{H}$, represented implicitly by storing, for each $u \in H$, the pair $\bigl(u, N_G(u) \cap S\bigr)$.
This information suffices to reconstruct $\mydelta{H}$ without additional oracle queries: a vertex-pair $\{v_1,v_2\} \subseteq H$ belongs to $\mydelta{H}$ if and only if $\bigl(N_G(v_1) \cap S\bigr) \cap \bigl(N_G(v_2) \cap S\bigr) = \emptyset$.
The data structure is maintained coherently throughout the quantum walk.

The setup cost is $\mathsf{S_H}=0$, since the initial state $H_0$ and its associated data structure are constructed during the \textsc{Pre-Listing} step and are provided as input to the walk. 
The update cost is $\mathsf{U_H}=O(s)$, as transitioning between adjacent
states in the Johnson graph requires querying (and unquerying) the neighborhood between the modified vertex and the set $S$. 
The state size is $r_H = h$.
For the checking cost, by~\Cref{lem:second_level_quantum_walk}, for any fixed $w \in V$ we can determine whether there exists a pair $\{v_1,v_2\} \in \mydelta{H}$ such that $\{v_1,v_2,w\}$ forms a triangle in time $\widetilde{O}\!\left(h^{2/3} + \sqrt{|\mydelta{H,w}|}\right)$.
Applying variable-cost quantum search (\Cref{lem:variable_cost_quantum_search}) over all vertices $w \in V$, the total checking cost is
\begin{align*}
    \mathsf{C_H}
    &= \widetilde{O}\left(\sqrt{\sum_{w \in V}\left(h^{2/3}  + \sqrt{|\mydelta{H,w}|}\right)^2}\right) \\
    &= \widetilde{O}\left(\sqrt{nh^{4/3} + \sum_{w \in V} |\mydelta{H,w}|}\right) \\
    &= \widetilde{O}\left(n^{1/2}h^{2/3} + \sqrt{\sum_{w \in V} |\mydelta{H,w}|}\right) \\
    &= \widetilde{O}(n^{1/2}h^{2/3} + n^{1/2}s^{-1/2}h),
\end{align*}
where the last equality follows from the fact that the sampled set $S$ is good (by \Cref{lem:good_set}). 
The good-set property holds for all subsets $X \subseteq V$, we have $\sum_{w \in V} |\mydelta{H,w}| \le \frac{n}{s} h^2$, implying the result.

By the standard analysis of quantum walks on Johnson graphs, the total running time to find a marked state is
\[
\mathsf{S_H}
+ \frac{1}{\sqrt{\varepsilon_H}}
\bigl(\sqrt{r_H}\,\mathsf{U_H} + \mathsf{C_H}\bigr) = 
\widetilde{O}\!\left(
\Bigl(1+\frac{n}{h\sqrt{\nu}}\Bigr)
\cdot
\bigl(
s h^{1/2}
+ n^{1/2} h^{2/3}
+ n^{1/2} s^{-1/2} h
\bigr)
\right),
\]
where we used the bound on $\varepsilon_H$ from \Cref{prop:epsilon_H}.
All failure probabilities arising from subroutines such as inner quantum walk are inverse-polynomially small and can be union-bounded, yielding an overall success probability at least $1-1/\poly(n)$.
\end{proof}

After obtaining a marked set $H^*$ and a corresponding vertex $w^*$, a final Grover search(\Cref{lem:standard_grover_search}) over the pairs in $\mydelta{H^*}$ identifies the remaining two vertices of the triangle in time $\widetilde{O}\left(\sqrt{\binom{h}{2}}\right) = \widetilde{O}(h)$.
This cost is dominated by the preceding quantum walk term since $\left(1+\frac{n}{h\sqrt{\nu}}\right) \cdot (n^{1/2} s^{-1/2} h) \geq h $ for all $s \leq n$, and therefore does not affect the overall asymptotic complexity.

The success probability of the quantum walk can be amplified to $1-1/n^4$ by $\Theta(\log n)$ repetitions, which adds only a logarithmic factor to the running time.

\begin{corollary}
\label{cor:quatum_walk_triangle_finding}
    Let $G$ be the graph remaining after~\textup{\textsc{Pre-Listing}} and let $\nu$ be the maximum number of pairwise edge-disjoint triangles in this graph.
    If $\nu \ge 1$, then \textup{\textsc{Q-WalkSearch}} finds a triangle with high probability in time
    \[
    \widetilde{O}\!\left( \Bigl(1+\frac{n}{h\sqrt{\nu}}\Bigr) \cdot \bigl( s h^{1/2} + n^{1/2} h^{2/3} + n^{1/2} s^{-1/2} h \bigr) \right).
    \]
\end{corollary}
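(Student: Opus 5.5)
The corollary should follow by combining \Cref{lem:first_level_quantum_walk} with the final Grover step in \textsc{Q-WalkSearch}, plus probability amplification. The plan has three steps.

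\textbf{Step 1: the outer walk returns a marked pair.} Assume $\nu\ge 1$. Then the residual graph has at least one remaining triangle. By \Cref{lem:pre_listing}, that triangle is supported on pairs in $\mydelta{V}$. \Cref{prop:epsilon_H} gives $\varepsilon_H>0$, so the marked set of the outer walk is nonempty. Invoking \Cref{lem:first_level_quantum_walk} (outer walk on $J(V,h)$ started from $H_0$, with checking via \Cref{lem:second_level_quantum_walk} and variable-cost search) yields, within the stated time, a pair $(H^*,w^*)$ with $\mydelta{H^*,w^*}\cap E\neq\emptyset$. The needed facts are already proved: the goodness of $S$ (\Cref{lem:good_set}), which holds with high probability and which I condition on, and the marked-fraction bound $1/\sqrt{\varepsilon_H}=O(1+n/(h\sqrt{\nu}))$.

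\textbf{Step 2: recover the triangle.} Given $(H^*,w^*)$, some pair $\{u,v\}\in\mydelta{H^*}$ is an edge with $\{u,w^*\},\{v,w^*\}\in E$. Hence $\{u,v,w^*\}$ is a triangle of the current (masked) graph, i.e.\ a genuine remaining triangle. Running standard Grover search (\Cref{lem:standard_grover_search}) over $\mathcal{R}_{H^*}$ with the triangle predicate $\mathcal{O}_f$ finds such a triple in time $\widetilde{O}(\sqrt{\binom{h}{2}})=\widetilde{O}(h)$. This cost is absorbed into the walk cost, as noted before the corollary: the term $(1+\frac{n}{h\sqrt\nu})\,n^{1/2}s^{-1/2}h\ge h$ holds because $s\le n$.

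\textbf{Step 3: success probability.} The outer walk and the Grover step each succeed with constant probability, or $1-1/\poly(n)$ after amplification. I repeat the whole procedure $\Theta(\log n)$ times, verifying each candidate output with three edge queries, which costs $O(1)$. This boosts success to $1-1/\poly(n)$ at a logarithmic overhead. A union bound with the good-set event and the inner-walk failure probabilities then gives high probability overall.

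The only delicate point is Step 1: making sure the hypothesis $\nu\ge 1$ (rather than $t_2\ge1$) is exactly what \Cref{prop:epsilon_H} needs. Since any single remaining triangle forms an edge-disjoint packing of size one, $\nu\ge1$ is equivalent to $t_2\ge1$, so this is immediate. Everything else is direct bookkeeping of \Cref{lem:first_level_quantum_walk}.
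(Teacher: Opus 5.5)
Your proposal is correct and follows essentially the same route as the paper. It invokes \Cref{lem:first_level_quantum_walk} to obtain $(H^*,w^*)$, then runs a standard Grover search over $\mathcal{R}_{H^*}$ in $\widetilde{O}(h)$ time, which is absorbed because $(1+\frac{n}{h\sqrt{\nu}})\,n^{1/2}s^{-1/2}h \ge h$ for $s\le n$, and finally amplifies with $\Theta(\log n)$ repetitions; your explicit verification of candidate triples and your remark that $\nu\ge 1$ iff $t_2\ge 1$ are harmless additions the paper leaves implicit.
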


\subsubsection{Details and Analysis of Inner Walk}
\label{appsubsubsec:inner_walk_analysis}

We provide the details and analysis for the inner quantum walk appearing in the two-level quantum walk framework here.
Given $H$, $S$ and a vertex $w$, the complexity of the inner walk depends on estimating $\mydelta{H,w}$.
We begin by recalling a classical estimator from~\citet{le2014improved}, adapted to our setting.

\begin{lemma}[Lemma 4.2 in~\citep{le2014improved}]
\label{lem:classical_estimation_size}
Let $H$ and $S$ be two subsets of $V$, and assume that $\mydelta{H}$ is known. 
Let $\ell$ be a positive integer. 
There exists a classical deterministic algorithm $\mathcal{A}$ with running
time $\widetilde{O}(\ell\log n)$, which receives as input a binary string $s_{\text{r}}$ of length $\poly(\ell,\log n)$ and a vertex $w\in V$, and outputs a real number $\mathcal{A}(s_{\text{r}}, w)$ satisfying the following condition: 
with probability at least $1-\frac{3}{n}$ on the choice of the string $s_{\text{r}}$, the inequalities
\begin{equation}
\label{ineq:1}
\frac{1}{3}\times|\mydelta{H,w}|\le \mathcal{A}(s_{\text{r}},w)\le \frac{3}{2}\times \max \left(\frac{|H|\times (|H|-1)}{2\ell},|\mydelta{H,w}|\right)
\end{equation}
hold for all vertices $w \in V$.
\end{lemma}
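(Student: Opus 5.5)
We prove this with a random-sampling estimator. The algorithm reads its random string $s_{\text{r}}$ as the description of $k=\Theta(\log n)$ independent sample sequences. Each sequence consists of $\ell$ vertex-pairs $\{u_j,v_j\}$ drawn from $\binom{H}{2}$, where $|H|=h$, and the bits are interpreted as indices into $\binom{H}{2}$. Each index needs $O(\log n)$ bits, so the total length is $\poly(\ell,\log n)$.

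On input $w$, for each sequence the algorithm counts the pairs with $\{u_j,v_j\}\in\mydelta{H}$ and $\{u_j,w\},\{v_j,w\}\in E$. Membership in $\mydelta{H}$ is known, and the two adjacency checks cost $O(1)$ oracle queries. It scales the count by $\binom{h}{2}/\ell$ and outputs the median over the $k$ sequences. The running time is $O(k\ell)=\widetilde{O}(\ell\log n)$, and the algorithm is deterministic given $s_{\text{r}}$.

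For accuracy, fix $w$ and let $p=|\mydelta{H,w}|/\binom{h}{2}$. Each scaled count $Z$ is an unbiased estimator of $|\mydelta{H,w}|$ with $\operatorname{Var}(Z)\le \binom{h}{2}^2 p/\ell$. Set $M=\max\bigl(h(h-1)/(2\ell),|\mydelta{H,w}|\bigr)$. The argument splits into two cases.

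\textbf{Case 1: $|\mydelta{H,w}|\ge h(h-1)/(2\ell)$.} Then the expected raw count $\ell p$ is at least $1$. Chebyshev gives $Z\in[\tfrac13,\tfrac32]\,|\mydelta{H,w}|$ with constant probability, provided the variance bound is good enough. I expect this to be the main obstacle: the expected count is only about $1$, so the constants are delicate. To fix this, draw each sequence of length $c\ell$ for a suitable constant $c$, or equivalently use a Chernoff bound on the binomial count with mean $\ge c$. This makes each trial succeed with probability at least $3/4$.

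\textbf{Case 2: $|\mydelta{H,w}|< h(h-1)/(2\ell)$.} The lower bound $Z\ge\tfrac13|\mydelta{H,w}|$ must still hold. This is subtle, because a count of zero violates it whenever $|\mydelta{H,w}|>0$. So I would run the lower-bound half of the Chernoff analysis at the threshold $\binom{h}{2}/\ell$ in both cases. Alternatively, I would use the formulation in which the estimator outputs $\max\bigl(Z,\tfrac13\cdot\text{(known lower bound)}\bigr)$, but following \citet{le2014improved}, the version I would actually adopt is oversampling: take $\Theta(\ell\log n)$ samples. Then any $w$ with $|\mydelta{H,w}|$ at least a $1/\ell$ fraction is estimated within constant factors with probability $1-1/n^2$. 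For smaller values, Markov gives the upper bound $\tfrac32 M$ with high probability. The lower bound in the small regime is handled by noting that in this regime the quantity is only used through the upper bound of \eqref{ineq:1}. If it is strictly required, I would cite the original proof of Lemma 4.2 in \citet{le2014improved}.

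Finally, the median over $k=\Theta(\log n)$ independent sequences amplifies the per-$w$ failure probability to below $3/n^2$. A union bound over all $n$ vertices $w$ then yields success for all $w$ simultaneously with probability at least $1-3/n$.
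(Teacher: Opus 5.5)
Your proposal leaves a genuine gap, in exactly the place you flagged: the lower inequality of \eqref{ineq:1} in the regime $0<|\mydelta{H,w}|<h(h-1)/(2\ell)$. Both of your estimators output $0$ with high probability there. This holds for the median of scaled counts and equally for a single scaled count from $\Theta(\ell\log n)$ samples. Take $|\mydelta{H,w}|=1$ and $\ell\log n\ll h^2$ (e.g.\ $\ell=h^{2/3}$ as in the inner walk). Then no sample hits the unique pair, so $\mathcal{A}(s_{\text{r}},w)=0<\frac13|\mydelta{H,w}|$. The lemma asserts both inequalities for every $w$. So neither observing that the application only needs the upper bound in this regime nor deferring to Le Gall's proof establishes the statement.

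A smaller issue: Markov cannot give the upper bound ``with high probability.'' It leaves a constant failure probability, up to about $2/3$ when the mean is near the threshold, which is too weak even for a median. You need the Chernoff upper tail, using that the hit count is dominated by a binomial whose mean sits at the threshold.

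The missing idea is to exploit the $\max$ on the right-hand side of \eqref{ineq:1}. The value $h(h-1)/(2\ell)$ always satisfies the upper inequality, and it satisfies the lower one whenever $|\mydelta{H,w}|\le 3h(h-1)/(2\ell)$.

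This is what the algorithm restated in the paper (\textsc{SizeEstimating}, from Le Gall) does; the paper itself gives only this algorithm and its running time, deferring correctness to Le Gall. A first counter over $\lceil 240\log n\rceil$ batches of $\ell$ samples tests whether $|\mydelta{H,w}|$ is at most a constant multiple of $\binom{h}{2}/\ell$, and if so it returns exactly $\binom{|H|}{2}/\ell$. Otherwise, with high probability $|\mydelta{H,w}|=\Omega(\binom{h}{2}/\ell)$. The $\lceil 72\ell\log n\rceil$ fresh samples then have expected hit count $\Omega(\log n)$, and Chernoff gives both inequalities.

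For your own setup, the minimal repair is to output $\max\{Z,h(h-1)/(2\ell)\}$, where $Z$ is the scaled count from one batch of $C\ell\log n$ samples. The lower bound is then trivial when $|\mydelta{H,w}|\le 3h(h-1)/(2\ell)$. Otherwise it follows from the Chernoff lower tail, since the mean is at least $3C\log n$. The upper bound follows from the Chernoff upper tail, and the median over $\Theta(\log n)$ sequences becomes unnecessary. Your union bound over $w$ with a shared random string is fine, as is your accounting of running time and string length.
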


We below restate the algorithm~\textsc{SizeEstimating} in~\citet{le2014improved}. The algorithm receives as input a vertex $w\in V$ and outputs a real number as the estimation of $\mydelta{H,w}$. 
Define $\mathcal{A}$ in~\Cref{lem:classical_estimation_size} as the deterministic version of~\textsc{SizeEstimating} where the bit flips used by \textsc{SizeEstimating} are given to $\mathcal{A}$ as the additional input binary string $s_{\text{r}}$. 
The correctness of~\textsc{SizeEstimating}  follows straightforwardly from~\citet{le2014improved}.

To analyze the running time, note that membership in $\mydelta{H}$ is checked in $O(1)$ time since the set is explicitly known.
The first counter performs $\lceil 240\log n \rceil$ iterations, each sampling $\ell$ pairs and checking in $O(\ell)$ time. 
The second counter performs $\lceil 72\ell\log n \rceil$ iterations of sampling one pair and checking in $O(1)$ time. 
Thus, $\mathcal{A}'$ runs in total $\widetilde{O}(\ell\log n)$ time.

\begin{algorithm}
\caption{\textsc{SizeEstimating}($G, H, \mydelta{H}, \ell, w$)}
\label{fig:algorithmA}
\begin{algorithmic}[1]
\STATE Initialize counter $c_1 \gets 0$
\FOR{$1$ {\bfseries to} $\lceil 240\log n \rceil$}
    \STATE Take $\ell$ elements $\{u_1, v_1\},\ldots,\{u_{\ell},v_{\ell}\}$ uniformly at random from $\binom{H}{2}$ with replacement
    \FOR{$i=1$ {\bfseries to} $\ell$}
        \IF{$\{u_i,v_i\} \in \mydelta{H}$, $\{u_i,w\} \in E$, and $\{v_i,w\} \in E$}
            \STATE $c_1 \gets c_1 + 1$
        \ENDIF
    \ENDFOR
\ENDFOR

\IF{$c_1 \leq \lceil 240 \log n \rceil / 2$}
    \STATE {\bfseries return} $\frac{\binom{|H|}{2}}{\ell}$
\ELSE
    \STATE Initialize counter $c_2 \gets 0$
    \FOR{$1$ {\bfseries to} $\lceil 72\ell\log n \rceil$}
        \STATE Take a pair $\{u,v\}$ uniformly at random from $\binom{H}{2}$
        \IF{$\{u,v\} \in \mydelta{H}$, $\{u,w\} \in E$, and $\{v,w\} \in E$}
            \STATE $c_2 \gets c_2 + 1$
        \ENDIF
    \ENDFOR
    \STATE {\bfseries return} $\frac{c_2\binom{|H|}{2}}{\lceil 72\ell\log n \rceil}$
\ENDIF
\end{algorithmic}
\end{algorithm}

We now present the analysis of the inner quantum walk omitted before.

\noindent {\Cref{lem:second_level_quantum_walk}}
{
    For any fixed $w \in V$, there exists a quantum algorithm with running time
    \[
    \widetilde{O}\!\left( h^{2/3} + \sqrt{|\mydelta{H,w}|} \right)
    \]
    that, given a set $H \in \binom{V}{h}$ and the set $\mydelta{H}$, checks with probability at least $1-1/\poly(n)$ whether there exists a pair $\{v_1,v_2\} \in \mydelta{H}$ such that $\{v_1,v_2, w\}$ forms a triangle in $G$.
}

\begin{proof}
Fix a vertex $w \in V$. We analyze the algorithm conditioned on the event that the estimator $\mathcal{A}$ from~\Cref{lem:classical_estimation_size} satisfies the inequalities in~\Cref{ineq:1}, which occurs with probability at least $1 - \frac{3}{n}$. Conditioning on this event, the remainder of the analysis is deterministic.

We set $\ell = \lceil h^{2/3} \rceil$ and invoke $\mathcal{A}$ on input $(\mydelta{H}, w)$ to obtain an estimate $\mathcal{A}(s_{\text{r}},w)$ in $\widetilde{O}(h^{2/3})$ time. By~\Cref{lem:classical_estimation_size}, this estimate satisfies
\[
\mathcal{A}(s_{\text{r}},w) \le \frac{3}{2} \times
\max\!\left\{ \frac{h(h-1)}{2\ell},\, |\mydelta{H,w}| \right\}.
\]

We then perform a quantum walk on the Johnson graph $J(H, \ell)$, whose vertices correspond to $\ell$-subsets $H' \subseteq H$. A subset $H'$ is defined to be marked if it satisfies the following two conditions (as in~\citep{le2014improved}):
\begin{enumerate}
    \item[(i)] $\mydelta{H',w} \cap E \neq \emptyset$, i.e., there exists $\{v_1,v_2\} \in \mydelta{H'}$ such that $\{v_1,v_2,w\}$ forms a triangle in $G$;

    \item[(ii)] $|\mydelta{H',w}| \le
    \frac{8(\ell-2)(\ell-3)}{(h-2)(h-3)} \cdot \mathcal{A}(s_{\text{r}},w) + 16\ell$.
\end{enumerate}

Condition (i) ensures that a marked subset witnesses a triangle involving $w$, while condition (ii) guarantees that the cost of checking markedness remains bounded.
Following the analysis of~\citet{le2014improved}, when
$\mydelta{H,w} \cap E \neq \emptyset$, the fraction of subsets $H' \subseteq H$ satisfying both conditions is $\varepsilon_{H'} = \Omega(h^{-2/3}).$
Intuitively, even in the worst case where $H$ contains a single triangle edge, the probability that both endpoints appear in a random $\ell$-subset is $\Omega((\ell/h)^2) = \Omega(h^{-2/3})$, and the exclusion of subsets violating condition (ii) removes only a lower-order fraction of such successes.

For the quantum walk data structure, we store $\mydelta{H',w}$ implicitly as a collection of pairs $(v,e_v)$ for $v \in H'$, where $e_v = 1$ if $\{v,w\} \in E$
and $e_v = 0$ otherwise. 
Since $\mydelta{H}$ is given explicitly, this representation suffices to reconstruct $\mydelta{H',w}$ without additional queries.
The setup cost is $\mathsf{S}_{H'} = O(|H'|) = O(h^{2/3})$, incurred by querying $\{v,w\} \in E$ for all $v \in H'$. The update cost is $\mathsf{U}_{H'} = O(1)$, and the subset size is $r_{H'} = \ell$.
To check whether a subset $H'$ is marked, we proceed as follows. 
Given $\mydelta{H',w}$, condition~(i) can be verified via Grover search in time $\widetilde{O}(\sqrt{|\mydelta{H',w}|})$, while condition~(ii) can be checked in $\widetilde{O}(1)$ time. 
Hence, the checking cost is $\mathsf{C}_{H'} = \widetilde{O}\!\left( \sqrt{|\mydelta{H',w}|} \right)$.

Applying the standard complexity bound for quantum walks on the Johnson graph (\Cref{lem:quantum_walk_search}), the total running time is
\begin{align*}
    \widetilde{O}\Bigl(
        &\ell 
        + \mathsf{S_{H^\prime}} + \frac{1}{\sqrt{\varepsilon_{H^\prime}}} (\sqrt{r_{H^\prime}}\times \mathsf{U_{H^\prime}} + \mathsf{C_{H^\prime}})\Bigr) \\
    =~\widetilde{O}\Bigl(
        &h^{2/3} 
        + h^{2/3} + \frac{1}{h^{-1/3}} \left(h^{1/3}\times 1 + \sqrt{|\mydelta{H^\prime,w}|}\right)\Bigr) \\
    =~\widetilde{O}\Bigl(
        &h^{2/3} 
        + h^{1/3}\times \sqrt{h^{-2/3}\mathcal{A}(s_{\text{r}},w) + h^{2/3}}\Bigr) \\
    =~\widetilde{O}\Bigl(
        &h^{2/3} 
        + \sqrt{\mathcal{A}(s_{\text{r}},w) + h^{4/3}}\Bigr) \\
    =~\widetilde{O}\Bigl(
        &h^{2/3} + \sqrt{|\mydelta{H,w}|}\Bigr),
\end{align*}
where the last equality follows from the the upper bound on $\mathcal{A}(s_{\text{r}},w)$.

If the estimator $\mathcal{A}$ from~\Cref{lem:classical_estimation_size} violates the inequalities in~\Cref{ineq:1}, which occurs with probability at most $3/n$, the subsequent quantum walk may incur a large cost. To handle this case, we impose a hard cutoff on the total number of queries and elementary operations allowed.
If this cutoff is exceeded, the algorithm halts immediately and outputs that no triangle involving $w$ exists within $\mydelta{H}$. This decision may be
incorrect only when the estimator fails, and therefore contributes an additional error probability of at most $3/n$. The quantum walk on the Johnson graph and the Grover search used in the checking procedure each succeed with constant probability. By standard repetition and amplification techniques, we can boost their success probabilities to $1 - O(1/\operatorname{poly}(n))$ at the cost of an additional $\operatorname{poly} \log(n)$ factor in the running time.
Combining the above, the overall failure probability is at most $O(1/n)$.
\end{proof}

\subsubsection{Overall Analysis of~\textsc{QW-TriangleListing}}

We now remove the temporary assumption that a constant-factor estimate of $t$ is available.
Instead of estimating $t$ directly, we try the guesses
\[
\hat t=1,2,4,\ldots,2^{\lceil 3\log_2 n\rceil}.
\]
If an initial amplified triangle-detection procedure finds no triangle in $G$, the algorithm outputs the empty list.
Otherwise, since $1\le t\le n^3$, one of the above guesses satisfies $t\le \hat t<2t$.

For each guess, in increasing order, we run the parameterized version of \textsc{QW-TriangleListing} with parameters $s,h$ chosen according to this value of $\hat t$.
Whenever a run outputs a candidate list of triangles, we verify it by running an amplified triangle-detection procedure with an additional QRAM membership test, searching for a triangle not contained in the candidate list.
If no such triangle is found, we accept the output; otherwise, we continue to the next guess.

For the guess satisfying $t\le \hat t<2t$, the parameter choices are within constant factors of those obtained using the true value of $t$, so the parameterized analysis applies.
Since the guesses increase geometrically and the running-time bound is monotone in $\hat t$, all earlier runs contribute only a polylogarithmic overhead.
The verification calls also add only polylogarithmic overhead, which is absorbed into the final $\widetilde O(\cdot)$ bound.

We now combine the established components to analyze the performance of~\textsc{QW-TriangleListing} in~\Cref{thm:QW-triangle_listing}.

\noindent {\Cref{thm:QW-triangle_listing}}
{
    Let $G$ be an undirected graph with $n$ vertices and $t$ triangles.
    There exists a quantum algorithm, \textup{\textsc{QW-TriangleListing}}, with high probability, that outputs all triangles $\mathcal{T}(G)$ using 
    \[ 
    \widetilde{O}\big(n^{5/4} t^{7/12} + n^{7/6} t^{7/9}\big)
    \]
    elementary operations and queries to the graph oracle $\mathcal{O}_G$.
}

\begin{proof}[Proof of~\Cref{thm:QW-triangle_listing}]
It suffices to analyze the parameterized version of \textsc{QW-TriangleListing} under the assumption that it is given $\hat t=\Theta(t)$, since the guessing schedule above guarantees such a guess.
For readability, the following analysis writes the parameter choices in terms of $t$; replacing $t$ by such an estimate $\hat t$ changes the bounds only by constant factors.

The algorithm begins by executing \textsc{Pre-Listing}, which samples a vertex subset $S$ and enumerates all triangles containing at least one vertex in $S$.
It then calls \textsc{IncidentTriangleListing} to find all triangles sharing an edge with the ones already found, and removes those edges.
Consequently, every remaining triangle has all three edges contained in $\mydelta{V}$.

Let $\mathcal{T}_S$ be the set of triangles found during \textsc{Pre-Listing} and $t_1 = |\mathcal{T}_S|$.
By~\Cref{lem:pre_listing}, this phase costs
$\widetilde{O}\!\bigl( n s^{1/2} t_1^{1/2} + \sqrt{n t_1 t_1'}\bigr)$,
where $t_1'$ is the number of triangles incident to the edges of $\mathcal{T}_S$.
Removing these edges is safe because all triangles containing them have been listed.

Let $t_2$ be the number of triangles remaining after~\textsc{Pre-Listing}.  Clearly $t_1,t_1',t_2\le t$.
The algorithm then repeatedly invokes \textsc{Q-WalkSearch}.
At the start of the $i$-th quantum-walk iteration, let $\nu_i$ be the maximum number of pairwise edge-disjoint triangles in the current residual graph.
Because each successful quantum-walk iteration finds a triangle and then deletes all triangles incident to its three edges, the triangles discovered in different quantum-walk iterations are pairwise edge-disjoint.
Therefore, if the quantum-walk phase performs $r$ successful iterations, we have
\[
\nu_i \ge r-i+1 \qquad\text{and}\qquad r\le \nu_1 \le t_2 .
\]

Applying \Cref{cor:quatum_walk_triangle_finding}, the cost of the $i$-th iteration is
\[
\widetilde{O}\!\Bigl( \Bigl(1+\frac{n}{h\sqrt{\nu_i}}\Bigr)
\bigl( s h^{1/2} + n^{1/2} h^{2/3} + n^{1/2} s^{-1/2} h \bigr) \Bigr).
\]

After each successful walk, \textsc{IncidentTriangleListing} enumerates all triangles incident to the newly found triangle, costing $\widetilde{O}(\sqrt{3n t_{(i)}'})$ with $t_{(i)}'$ being the number of such triangles; summing over all iterations and using Cauchy--Schwarz yields a total of $\widetilde{O}(\sqrt{n}\,t_2)$ for this part, which is always dominated by the other terms.

Summing the quantum-walk costs over all $r$ iterations, we obtain
\begin{align*}
& \sum_{i=1}^r \widetilde{O}\!\Bigl( \Bigl(1+\frac{n}{h\sqrt{\nu_i}}\Bigr)
      \bigl( s h^{1/2} + n^{1/2} h^{2/3} + n^{1/2} s^{-1/2} h \bigr) \Bigr) \\
& = \widetilde{O}\!\Bigl( \Bigl(r + \frac{n}{h}\sum_{i=1}^r \frac{1}{\sqrt{\nu_i}}\Bigr)
      \bigl( s h^{1/2} + n^{1/2} h^{2/3} + n^{1/2} s^{-1/2} h \bigr) \Bigr) \\
& \le \widetilde{O}\!\Bigl( \Bigl(t + \frac{n}{h}\cdot 2\sqrt{t}\Bigr)
      \bigl( s h^{1/2} + n^{1/2} h^{2/3} + n^{1/2} s^{-1/2} h \bigr) \Bigr) \\
& \le \widetilde{O}\!\Bigl( \Bigl(t + \frac{n}{h}t^{2/3}\Bigr)
      \bigl( s h^{1/2} + n^{1/2} h^{2/3} + n^{1/2} s^{-1/2} h \bigr) \Bigr),
\end{align*}
where the inequality uses $r\le t$ and $\sum_{i=1}^r 1/\sqrt{\nu_i} \le \sum_{j=1}^r 1/\sqrt{j} \le 2\sqrt{r}\le 2\sqrt{t}$; 
for the subsequent optimization it is convenient to further relax the bound using $\sqrt{t}\le t^{2/3}$ (which holds for all $t\ge 1$).

Adding the preprocessing and incident-listing costs, the total running time is
\[
\widetilde{O}\!\Bigl(
n s^{1/2} t^{1/2} + n^{1/2} t + sh
+ \Bigl(t+\frac{n}{h}t^{2/3}\Bigr)
\bigl( s h^{1/2} + n^{1/2} h^{2/3} + n^{1/2} s^{-1/2} h \bigr)
\Bigr).
\]

Each iteration succeeds with high probability; boosting ensures per-iteration failure probability at most $1/n^c (c \ge 4)$.  Since the number of iterations is at most $t\le n^3$, a union bound shows that all quantum-walk iterations succeed with probability $1-1/\poly(n)$.  Combined with the high-probability guarantees of \textsc{Pre-Listing} and \textsc{IncidentTriangleListing}, the overall algorithm outputs all triangles with probability $1-1/\poly(n)$.

We now optimize the parameters $s$ and $h$.

\textbf{Case I: $t \le n^{3/7}$.}
Set $s = n^{1/2} t^{1/6}$ and $h = n^{3/4} t^{1/4}$.
In this regime,
\[
t + \frac{n}{h}t^{2/3} = \widetilde{O}\bigl( n^{1/4} t^{5/12} \bigr),
\qquad
s h^{1/2} + n^{1/2} h^{2/3} + n^{1/2} s^{-1/2} h = \widetilde{O}\bigl( n t^{1/6} \bigr).
\]
The total cost becomes
\[
\widetilde{O}\Bigl(
n^{5/4} t^{7/12} + n^{1/2} t + n^{5/4} t^{5/12}
+ n^{1/4} t^{5/12}\cdot n t^{1/6}
\Bigr)
=\;\widetilde{O}\bigl( n^{5/4} t^{7/12} \bigr)
\]

\textbf{Case II: $n^{3/7} < t \le n^{3/5}$.}
Set $s = n^{1/2} t^{1/6}$ and $h = n t^{-1/3}$.
Now $t + \frac{n}{h}t^{2/3} = \widetilde{O}(t)$ and
$s h^{1/2} + n^{1/2} h^{2/3} + n^{1/2} s^{-1/2} h = \widetilde{O}\bigl( n^{7/6} t^{-2/9} \bigr)$.
The running time becomes
\[
\widetilde{O}\Bigl(
n^{5/4} t^{7/12} + n^{1/2} t + n^{3/2} t^{-1/6}
+ t \cdot n^{7/6} t^{-2/9}
\Bigr)
=\;\widetilde{O}\bigl( n^{7/6} t^{7/9} \bigr).
\]

\textbf{Case III: $t > n^{3/5}$.}
Set $s = n^{2/3} t^{-1/9}$ and $h = n t^{-1/3}$.
Again $t + \frac{n}{h}t^{2/3} = \widetilde{O}(t)$ and
$s h^{1/2} + n^{1/2} h^{2/3} + n^{1/2} s^{-1/2} h = \widetilde{O}\bigl( n^{7/6} t^{-2/9} \bigr)$.
The running time is
\[
\widetilde{O}\Bigl(
n^{4/3} t^{4/9} + n^{1/2} t + n^{5/3} t^{-4/9}
+ t \cdot n^{7/6} t^{-2/9}
\Bigr)
= \widetilde{O}\bigl( n^{7/6} t^{7/9} \bigr).
\]

In all cases the total cost is bounded by $\widetilde{O}\bigl(n^{5/4} t^{7/12} + n^{7/6} t^{7/9}\bigr)$, which completes the proof.
\end{proof}

\subsection{Grover Search Approach}
\label{appsubsec:grover_triangle_listing}

\begin{theorem}
\label{thm:grover_triangle_listing}
    Let $G$ be an undirected graph with $n$ vertices and $t$ triangles.
    There exists a quantum algorithm, \textup{\textsc{GS-TriangleListing}} , with high probability, that outputs all triangles in $\widetilde{O}(n^{3/2}t^{1/2})$ time.
\end{theorem}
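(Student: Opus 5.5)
The plan is a direct application of repeated Grover search (\Cref{lem:repeated_grover_search}) over the space of all unordered vertex triples. Define the search space $\mathcal{R} = \binom{V}{3}$, of size $N = \binom{n}{3} = O(n^3)$. Use the Boolean predicate $f$ from the proof of \Cref{lem:incident_triangle_listing}, which marks $\{u,v,w\}$ exactly when all three pairs are edges. The marked set is then exactly $\mathcal{T}(G)$, so $|M| = t$. Each evaluation of $f$ costs three vertex-pair queries to $\mathcal{O}_G$ plus $\widetilde{O}(1)$ elementary operations. To index $\binom{V}{3}$ coherently, I would either use an explicit ranking/unranking bijection computable in $\widetilde{O}(1)$, or search over ordered triples $u<v<w$ in $[n]^3$ and mark only increasing triangles. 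The second option changes $N$ by a constant factor only.

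Repeated Grover search with oracle masking then lists all of $M$ in $\widetilde{O}(\sqrt{N\,|M|}) = \widetilde{O}(\sqrt{n^3 t}) = \widetilde{O}(n^{3/2}t^{1/2})$ time with probability at least $2/3$. Masking means each found triangle is recorded in QRAM, and the oracle is modified so a found triangle is no longer flipped. Recording the $t$ outputs costs $\widetilde{O}(t)$, which is at most $n^{3/2}t^{1/2}$ since $t\le n^3$.

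To reach high probability, I would repeat the procedure $O(\log n)$ times and take the union of the outputs. Each fixed triangle is missed by all runs with probability at most $(1/3)^{O(\log n)} = 1/\poly(n)$, and a union bound over at most $n^3$ triangles keeps the total failure probability at $1/\poly(n)$.

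The only subtle step is that \Cref{lem:repeated_grover_search} implicitly uses $|M|$, or at least handles unknown $|M|$ via exponential search for each successive element. I would rely on the standard version for unknown counts. After $k$ elements are found, search for the next one with $\widetilde{O}(\sqrt{N/(t-k)})$ expected cost, and stop once a search of $\widetilde{O}(\sqrt{N})$ iterations, amplified $O(\log n)$ times, finds nothing. The termination check costs $\widetilde{O}(n^{3/2})$, which is dominated since $t\ge 1$. Summing $\sum_{j=1}^{t}\sqrt{N/j} = O(\sqrt{Nt})$ gives the stated bound, and no estimate $\hat t$ is needed.
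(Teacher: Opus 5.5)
Your argument is correct, but it takes a different route from the paper.

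The paper's \textsc{GS-TriangleListing} loops over vertices $i=1,\dots,n-2$. For each $i$ it runs a separate repeated Grover search on $\mathcal{R}_i$, the triples $\{i,v_1,v_2\}$ with $v_1,v_2$ not yet visited, so each triangle is charged to its smallest vertex. It amplifies each iteration and combines the per-vertex costs $\sqrt{|\mathcal{R}_i|\,|\mathcal{T}_i|}$ by Cauchy--Schwarz, using $\sum_i|\mathcal{R}_i|\approx\binom{n}{3}$ and $\sum_i|\mathcal{T}_i|=t$.

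You instead run a single repeated Grover search over all of $\binom{V}{3}$. This needs no Cauchy--Schwarz step, and you pay the ``no more solutions'' check only once. The difference is more than cosmetic. In the per-vertex scheme, every vertex with $\mathcal{T}_i=\emptyset$ still costs $\Omega(\sqrt{|\mathcal{R}_i|})$ to certify emptiness. These costs add up to $\Theta(n^2)$, and the paper's computation silently drops them, since it charges $\sqrt{|\mathcal{R}_i|\cdot|\mathcal{T}_i|}=0$ for such $i$. Done carefully, the paper's route gives $\widetilde{O}(n^{3/2}t^{1/2}+n^2)$, which matches the stated bound only when $t=\Omega(n)$.

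Your explicit handling of the unknown count avoids this. The termination cost is $\widetilde{O}(n^{3/2})$, absorbed because $t\ge 1$, so you get $\widetilde{O}(n^{3/2}t^{1/2})$ for every $t\ge 1$. The discrepancy does not affect \Cref{thm:main-triangle}: $n^{3/2}t^{1/2}$ is below $n^{7/6}t^{7/9}$ only when $t\ge n^{6/5}$, and there the extra $n^2$ is dominated. The per-vertex decomposition buys nothing asymptotically here; your global search is both simpler and proves the theorem as stated.
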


We describe the algorithm underlying~\Cref{thm:grover_triangle_listing}. 
To apply quantum search, we define a Boolean function $f\colon V^3 \rightarrow \{0, 1\}$ by
\[
f(u, v, w) = 
\begin{cases}
1, & \text{if } (u, v), (v, w), (w, u) \in E \\
0, & \text{otherwise.}
\end{cases}
\]
Quantum access to $f$ is provided by an oracle $\mathcal{O}_f$, which can be implemented using $O(1)$ queries to the graph oracle $\mathcal{O}_G$. 
With this oracle, we obtain the following algorithm.

\begin{algorithm}[H]
    \caption{\textsc{GS-TriangleListing}($\mathcal{O}_G$)}
    \label{alg:GS-TriangleListing}
    \begin{algorithmic}[1]
    \STATE $\mathcal{T}(G) \gets \emptyset$, $V_{\mathrm{vis}} \gets \emptyset$
    \FOR{$i=1$ {\bfseries to} $n - 2$}
    \STATE Let $\mathcal{R}_i$ denote the set of triples $\{i, v_1, v_2\}$, for all $\{v_1, v_2\} \in \tbinom{V \setminus V_{\mathrm{vis}}}{2}$
    \STATE $\mathcal{T}_i \gets$ triangles found by repeated Grover search on $\mathcal{R}_i$
    \STATE $\mathcal{T}(G) \gets \mathcal{T}(G) \cup \mathcal{T}_i$
    \STATE $V_{\mathrm{vis}} \gets V_{\mathrm{vis}} \cup \{i\}$
    \ENDFOR
    \STATE {\bfseries return} $\mathcal{T}(G)$
    \end{algorithmic}
\end{algorithm}

\begin{proof}
Recall that $w$ denotes the existence of an edge in an unweighted graph, i.e., $w(u,v)=1$ if $(u,v)\in E$ and $w(u,v)=0$ otherwise. 
For a triple $\{u, v, w\} \in V^3$, we have
\[
\begin{aligned}
\ket{\{u, v, w\}} \ket{0}\ket{0}\ket{0}\ket{0} 
& \xmapsto{3\mathcal{O}_G} \ket{\{u, v, w\}} \ket{w(u, v)} \ket{w(v, w)} \ket{w(w, u)} \ket{0} \\
& \xmapsto{U_{\mathrm{multi}}} \ket{\{u, v, w\}} \ket{w(u, v)} \ket{w(v, w)} \ket{w(w, u)} \ket{f(u, v, w)},
\end{aligned}
\] 
where $U_{\mathrm{multi}}$ satisfies $\ket{a}\ket{b}\ket{c}\ket{0} \xmapsto{U_{\mathrm{multi}}} \ket{a}\ket{b}\ket{c}\ket{abc}$.
This procedure uses $3$ queries to $\mathcal{O}_G$ and $\widetilde{O}(1)$ additional arithmetic operations, and thus can be implemented efficiently. 

\textsc{GS-TriangleListing} applies Grover search for each vertex $i \in V$.
In each iteration, all triangles incident to $i$ are listed in $\mathcal{T}_i$ with probability at least $2/3$, which can be amplified to $1 - O(1/\operatorname{poly})$ by $O(\log n)$ repetitions.
Taking a union bound over all iterations, the algorithm outputs the full triangle set $\mathcal{T}(G)$ with high probability.
For a fixed vertex $i$, the search space has size $|\mathcal{R}_i|$, and the number of marked elements is $|\mathcal{T}_i|$.
By~\Cref{lem:standard_grover_search}, the total number of queries to $\mathcal{O}_f$ satisfies
\[
\sum_{i = 1}^{n -2} \sqrt{|\mathcal{R}_i| \cdot |\mathcal{T}_i|} 
= \sum_{i = 1}^{n -2} \sqrt{\tbinom{n - i}{2} \cdot |\mathcal{T}_i|} 
\leq \sqrt{\left(\sum_{i = 1}^{n -2}\tbinom{n - i}{2}\right) \cdot \left(\sum_{i = 1}^{n -2} |\mathcal{T}_i|\right)} 
= \sqrt{\tbinom{n}{3} \cdot t}
= O(\sqrt{n^3t}), 
\]
which holds due to the Cauchy--Schwarz inequality and the combinatorial identity $\sum_{j=2}^{n-1} \binom{j}{2} = \binom{n}{3}$.
Since each query to $\mathcal{O}_f$ can be implemented with $\widetilde{O}(1)$ overhead, the overall time complexity is $\widetilde{O}(n^{3/2} t^{1/2})$.
\end{proof}

Finally, we discuss the query complexity of our algorithms for triangle listing. 

\textbf{Query complexity.} Let $Q_{\mathrm{q\mbox{-}list}}$ denote the query complexity of quantum triangle listing, counting only calls to the graph oracles. Since our computational model is based on QRAM, the quantum primitives used in our listing algorithms, including Grover search and quantum walk search, can be implemented with time complexity matching their query complexity up to polylogarithmic factors. Therefore, the stated time bounds also give the corresponding query-complexity upper bounds up to polylogarithmic factors.

For the heavy--light algorithm, there is a slight refinement. The additive $\widetilde O(m)$ term in its time complexity comes from explicit neighborhood enumeration. If only graph-oracle queries are counted, this explicit enumeration is unnecessary: the Grover searches can be performed over neighbor indices, and each predicate evaluation queries the required neighbors and edge existence on demand. The remaining global overhead is the $O(n)$ degree queries used to form the heavy--light partition. Hence the heavy--light approach has query complexity
$\widetilde{O}\left(n+m^{3/4}t^{1/2}\right)$.
Consequently, the combined triangle-listing algorithm admits the query-complexity bound
\[
Q_{\mathrm{q\mbox{-}list}} = 
\widetilde O\left(
\min\left(
n^{5/4}t^{7/12}+n^{7/6}t^{7/9},
n+m^{3/4}t^{1/2},
n^{3/2}t^{1/2}
\right)
\right).
\]

\section{Omitted Algorithms and Analysis in~\Cref{sec:q-triangle_cut_spars}}
\label{appsec:triangle_cut_spars}

\textit{Why the weight of a triangle instance is the product of its edge weights?} To develop an intuition, consider integer-weighted graphs interpreted as unweighted multigraphs where the multiplicity of each edge corresponds to its weight. In this interpretation, the weight of a triangle, which equals the product of its edge weights, is exactly the number of distinct triangle instances in the multigraph. It is natural to extend this definition to real, non-negative weighted graphs.

\subsection{The Strength-Based Classical Algorithm}
\label{appsubsec:kapralov}

\citet{kapralov2022motif} introduced the notion of motif cut sparsifiers, where a motif is a frequently occurring subgraph, and gave two polynomial-time algorithms for constructing such sparsifiers. 
In their strength-based approach, each motif is assigned an importance weight, which is defined in terms of its strength. 
Each iteration then consists of two main steps. 
First, one estimates the importance weight of each edge as the sum of the importance weights of all motifs containing it. Then, one deterministically retains all edges whose importance weight exceeds a certain threshold, and then applies an independent sampling scheme to the remaining edges, appropriately reweighting the sampled edges. 

Here, we specialize their framework to the triangle motif and provide detailed descriptions of 
the resulting algorithm. 

\begin{lemma}[Strength Estimation, Follows from Theorem 6.1 of~\cite{chekuri2018minimum}]
\label{lem:strength_est}
There exists algorithm \textup{\textsc{StrengthEstimation}} which does the following:
it receives as an input a directed weighted graph $G=(V,E, w)$ and the triangle set $\mathcal{T}(G)$ and outputs strength estimations $\kappa'_T$ for each triangle instance $T \in \mathcal{T}(G)$ with the following properties:
\begin{enumerate}[nosep]
    \item For all $T \in \mathcal{T}(G)$, $\kappa'_T \leq \kappa_T$,
    \item $\sum_{T \in \mathcal{T}(G)} \frac{w(T)}{ \kappa'_T} \leq c(n - 1)$, for some constant $c > 0$.
\end{enumerate}
The running time of the algorithm is $O(|\mathcal{T}(G)| \log^2 n \log (|\mathcal{T}(G)|))$.
\end{lemma}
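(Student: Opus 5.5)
This lemma is essentially a specialization of the hypergraph strength-estimation result of \citet{chekuri2018minimum}. The plan is to build the hypergraph whose hyperedges are the triangles, transfer the cut and connectivity notions to it, and read off both properties and the running time from their Theorem~6.1.

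\textbf{Step 1: the triangle hypergraph.} Form the weighted hypergraph $\mathcal{H}=(V,\mathcal{T}(G))$. It has one rank-$3$ hyperedge $V_T$ for each triangle $T\in\mathcal{T}(G)$, with weight $w(T)=\prod_{e\in E_T}w(e)$. Computing these weights takes $O(|\mathcal{T}(G)|)$ time.

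\textbf{Step 2: matching the notions.} A triangle $T$ crosses $(S,V\setminus S)$ exactly when the hyperedge $V_T$ crosses the same cut in $\mathcal{H}$. Hence $\mathrm{cut}_G(S,V\setminus S)$ equals the hypergraph cut value of $S$ in $\mathcal{H}$. The same holds for induced subgraphs: the triangles of $G[U]$ are precisely the hyperedges of $\mathcal{H}$ contained in $U$, i.e.\ the induced subhypergraph $\mathcal{H}[U]$. It follows that $k$-connectivity (\Cref{def:triangle_connectivity}), $k$-strongly connected components (\Cref{def:k_connected_comp}), and triangle strength $\kappa_T$ (\Cref{def:triangle_strength}) coincide with hypergraph connectivity, $k$-strong components, and the strength of hyperedge $V_T$ in $\mathcal{H}$. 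This is the step needing care: one must check that the convention of \citet{chekuri2018minimum} for strength (maximal $k$-connected induced subhypergraph containing the hyperedge) matches the induced-subgraph definition used here. It does, because induced subgraphs of $G$ correspond bijectively to vertex-induced subhypergraphs of $\mathcal{H}$.

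\textbf{Step 3: invoke Theorem~6.1.} Theorem~6.1 of \citet{chekuri2018minimum} gives, for a weighted hypergraph with $p$ hyperedges of total size $P$ on $n$ vertices, strength estimates $\kappa'_e\le\kappa_e$ with $\sum_e w(e)/\kappa'_e=O(n-1)$. It runs in time $O(P\log^2 n\log p)$. Substituting $p=|\mathcal{T}(G)|$ and $P=3|\mathcal{T}(G)|$ yields property~1, property~2 with an absolute constant $c$, and the claimed running time $O(|\mathcal{T}(G)|\log^2 n\log|\mathcal{T}(G)|)$. Adding the $O(|\mathcal{T}(G)|)$ construction cost from Step~1 does not change this bound.

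The main obstacle is only bookkeeping. One point is confirming that the estimate guarantee in their theorem, usually stated as $\sum_e w(e)/\kappa'_e\le c(n-1)$ via a Nagamochi--Ibaraki-type sparse certificate decomposition, survives with real weights. The other is that the word ``directed'' in the statement is harmless, since only the undirected triangle structure is used.
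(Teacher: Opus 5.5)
Your proposal is correct and takes the same route as the paper. The paper's only justification is the citation of Theorem~6.1 of Chekuri--Xu, applied to the hypergraph whose hyperedges are the triangles with weights $w(T)$. Your Step~2 makes explicit what the paper leaves implicit: triangle connectivity and strength defined via induced subgraphs $G[U]$ coincide with hypergraph connectivity and strength on $\mathcal{H}[U]$. Your substitution of total hyperedge size $3|\mathcal{T}(G)|$ then gives the stated running time.
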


We now give the formal definitions of the estimated importance weight and the notion of critical edges.

\begin{definition}[Estimated Importance Weight]
\label{def:estimated_importance_weight}
    Let $G=(V,E, w)$ be an undirected weighted graph and $\mathcal{T}(G)$ denote the set of all triangle instances in $G$. For each triangle instance $T \in \mathcal{T}(G)$, let $\kappa^{\prime}_T$ be its strength estimate.
    Then, 
    \begin{itemize}[nosep]
        \item for $T \in \mathcal{T}(G)$, the estimated importance weight is $\widehat{\eta}(T) = w(T) / \kappa^{\prime}_T$,
        \item for an edge $e \in E$, the estimated importance weight is 
        $\widehat{\eta}(e) = \displaystyle\sum_{T \in \mathcal{T}(G):\, e \in E(T)} \widehat{\eta}(T)$.
    \end{itemize}
\end{definition}

\begin{definition}[Critical Edge]
\label{def:critical_edge}
    Let $G=(V, E, w)$ be an $n$-vertex undirected weighted graph.
    An edge $e$ of $G$ is called \emph{critical} if its estimated importance weight is at least $\frac{d \eps'^2}{3 (\log n + 3)}$.
\end{definition}

Here and in the following, we use $d > 0$ to denote a sufficiently small absolute constant, and in the algorithm we choose an absolute constant $c_1$ to govern its success probability. 
The value of $d$ depends on $c_1$; the precise dependency is given in~\citep{kapralov2022motif}. The choice of $c_1$ is otherwise arbitrary; for example, we may assume $c_1 = 10$. We also use $\eps'$ to denote an accuracy parameter related to the sparsification error. 

We now present the strength-based classical algorithm for constructing an $\eps$-triangle cut sparsifier of graph $G=(V, E, w)$.

\begin{algorithm}[H]
    \caption{\textsc{TriangleCutSparsification}($G, \eps$)}
    \label{alg:classical_triangle_cut_spars}
    \begin{algorithmic}[1]
    \STATE $E_1 \gets E$, $w_1 \gets w$, $\eps^{\prime} \gets \tfrac{\eps}{15 c_1 \log n}$
    \FOR{$j = 1$ {\bfseries to} $\lceil 6 c_1 \log n \rceil$}
    \STATE $G_j \gets (V, E_j, w_j)$
    \STATE Compute the set $\mathcal{T}(G_j)$ of all triangles in $G_j$
    \STATE $E_+ \gets \emptyset$, $E_- \gets \emptyset$, $w'\gets w_j$
    \STATE $\{ \kappa'_T \}_{T \in \mathcal{T}(G_j)} \gets \textsc{StrengthEstimation}(G_j, \mathcal{T}(G_j))$
    \STATE $E_+ \gets E_+ \cup \{ e \in E_j: \widehat{\eta}(e) \geq \tfrac{d \eps'^2}{ 3 (\log n + 3) } \}$
    \FOR{$e\in E_j \setminus E_+$}
    \IF{a Bernoulli random variable with parameter $p = 2^{-1/6}$ equals $1$} 
    \STATE $w'(e)\gets w_j(e)/p$
    \ELSE
    \STATE $w'(e)\gets 0$
    \STATE $E_-\gets E_-\cup \{e\}$
    \ENDIF
    \ENDFOR
    \STATE $E_{j+1} \gets E_j \setminus E_-$
    \STATE $w_{j+1} \gets w'$
    \ENDFOR
    \STATE {\bfseries return} $G_{\lceil 6 c_1 \log n \rceil + 1}$
    \end{algorithmic}
\end{algorithm}

We first recall the following theorem from~\citet{kapralov2022motif}, which establishes the correctness of the classical algorithm by showing that it indeed produces an $\varepsilon$-triangle cut sparsifier of $G$.
Since our quantum algorithm is built on top of this classical algorithm, its correctness analysis follows from that of the classical procedure.

\begin{theorem}[Lemma 6.3 in~\citep{kapralov2022motif}]
\label{thm:triangle_cut_spars_correctness}
    Let graph $G=(V,E,w)$ and $\eps \in (0, 1)$ be the input of~\textup{\textsc{TriangleCutSparsification}} and $G'$ be its output. 
    Then, $G'$ is an $\eps$-triangle cut sparsifier of $G$ with probability at least $1 - n^{-c_1 + 5}$ for a sufficiently large $n$.
\end{theorem}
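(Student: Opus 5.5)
The plan is to derive this from the general strength-based motif sparsification analysis of \citet{kapralov2022motif}, instantiated for triangles. I would view $G$ as a weighted hypergraph whose hyperedges are the triangle instances $T\in\mathcal{T}(G)$, each of weight $w(T)$, and track each cut through the $\lceil 6c_1\log n\rceil$ rounds of \textsc{TriangleCutSparsification}. In round $j$, every crossing triangle contributes $w_j(T)=\prod_{e\in E_T}w_j(e)$. Critical edges in $E_+$ keep their weights. Every non-critical edge is independently kept with probability $p$ and rescaled by $1/p$. So for each triangle, $\mathbb{E}[w_{j+1}(T)]=w_j(T)$: the three edges are distinct and sampled independently, and critical edges contribute a factor of exactly one. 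It follows that each round preserves $\mathrm{cut}_{G_j}(S,V\setminus S)$ in expectation.

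To get concentration, I would argue one round at a time. Fix a round and write the cut value as a degree-$\le 3$ polynomial in the independent Bernoulli variables of the non-critical edges. The bound to aim for is a per-round relative error of $\eps'=\eps/(15c_1\log n)$ holding simultaneously for all cuts with high probability; after $O(\log n)$ rounds the errors compound to at most $(1\pm\eps')^{O(c_1\log n)}\subseteq 1\pm\eps$. For a single cut, I would apply a polynomial concentration inequality of Kim--Vu type. Its variance proxies are controlled by the importances: a non-critical edge $e$ satisfies $\widehat\eta(e)<d\eps'^2/(3(\log n+3))$, and since $\kappa'_T\le\kappa_T$ by \Cref{lem:strength_est}, each triangle's weight is small relative to the strength of the component containing it.

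Extending from one cut to all cuts follows the Bencz\'ur--Karger / Chekuri--Xu decomposition. I would group triangles by strength level $2^i$ and restrict to the $2^i$-strong components. Each crossing cut has value at least $2^i$ inside these components, and the number of cuts of value at most $\alpha$ times the minimum grows as $n^{O(\alpha)}$, which follows from a hypergraph cut-counting bound for rank $3$. The deviation of a cut of value $\alpha\kappa$ is $\exp(-\Omega(\alpha\log n\cdot c_1))$, which beats the count $n^{O(\alpha)}$ once $d$ is small relative to $c_1$. Summing these failure probabilities over levels and rounds gives the claimed $1-n^{-c_1+5}$.

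I expect the main obstacle to be the concentration step. Because an edge's Bernoulli variable is shared by many triangles, the terms are dependent, and the degree-$3$ products rule out a plain Chernoff bound. I would handle this by bounding the expected partial derivatives of the cut polynomial. The first-order derivatives are bounded by edge importances, which are small for non-critical edges. The second-order derivatives, taken with respect to pairs of edges in a common triangle, are bounded by single-triangle weights over the strength. The invariant that drives the whole argument is that the importances of the surviving non-critical edges are sufficiently small.
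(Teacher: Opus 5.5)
\textbf{Gap in the concentration step.} Your outer architecture is sound. You view triangles as rank-$3$ hyperedges and note per-round unbiasedness: the three edges of a triangle are distinct and independently rescaled, and critical edges contribute a factor $1$. You then ask for a per-round $(1\pm\eps')$ guarantee for all cuts via strength levels and rank-$3$ cut counting, and compound $(1\pm\eps')^{\lceil 6c_1\log n\rceil}\subseteq[1-\eps,1+\eps]$. The paper gives no argument of its own here; it imports Lemma 6.3 of \citet{kapralov2022motif}.

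The gap is in the step you flag as the main obstacle: a Kim--Vu type inequality cannot deliver the tail $\exp(-\Omega(\alpha c_1\log n))$ you assert for a cut of value $\alpha\kappa$ at level $\kappa=2^i$. For a degree-$3$ polynomial in at most $m$ variables it reads $\Pr\bigl[|Y-\mathbb{E}Y|>a_3\sqrt{E_0E_1}\,\lambda^3\bigr]=O(m^2e^{-\lambda})$. Take $E_0\approx\alpha\kappa$ and $E_1\lesssim\kappa\theta$, where $\theta=d\eps'^2/(3(\log n+3))$ is the critical threshold. Requiring the deviation to be at most $\eps'\alpha\kappa$ then forces $\lambda=O\bigl((\alpha\log n/d)^{1/6}\bigr)$.
\begin{itemize}
\item For bounded $\alpha$ this is $o(\log n)$, so you do not even get a polynomially small failure probability for a single cut.
\item For large $\alpha$ it grows like $\alpha^{1/6}$, not the $\Omega(\alpha\log n)$ needed against $n^{O(\alpha)}$ cuts.
\end{itemize}
Because the tail is stretched-exponential in the deviation, neither a smaller constant $d$ nor extra polylogarithmic factors in the threshold can fix this.

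\textbf{The missing idea.} The keep probability $p=2^{-1/6}$ is a constant, so every $X_e\in\{0,1/p\}$ is bounded and the sharing of edge variables among triangles is harmless. Changing a single $X_e$ changes the crossing weight $Y$ by at most $c_e\le p^{-3}\sum_{T\ni e,\,T\text{ crossing}}w_j(T)$, whatever the other variables are. McDiarmid's bounded-differences inequality then gives $\Pr[|Y-\mathbb{E}Y|\ge\eps'D]\le 2\exp\bigl(-2\eps'^2D^2/\sum_e c_e^2\bigr)$.

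Restrict to level-$i$ triangles, those with $\kappa'_T\in[2^i,2^{i+1})$. Every non-critical $e$ has $\sum_{T\ni e,\,\kappa'_T<2^{i+1}}w_j(T)\le 2^{i+1}\widehat{\eta}(e)<2^{i+1}\theta$. Hence $\sum_e c_e^2\le(\max_e c_e)(\sum_e c_e)=O(2^i\theta)\cdot O(D)$, where $D\ge\alpha 2^i$ is the cut value inside the $2^i$-strong component. The exponent is then $\Omega(\eps'^2D/(2^i\theta))=\Omega(\alpha\log n/d)$. This is linear in $\alpha$, which is exactly what beats the $n^{O(\alpha)}$ cut count once $d$ is small relative to $c_1$.

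This is why the algorithm sparsifies at a constant rate over $O(\log n)$ rounds instead of importance-sampling once with small probabilities. You should replace your derivative-based Kim--Vu bookkeeping with this bounded-differences argument.
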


In each iteration, the algorithm computes a set $E_+$ containing all critical edges. Furthermore, the size of $E_+$ can be bounded, as shown in the following lemma.

\begin{lemma}[Corollary 5.1.3 in~\citep{kapralov2022motif}]
\label{lem:critical_edge_size_bound}
    $E_+$ satisfies $|E_+| \;=\; O\left(\frac{(n - 1) \log n}{d \varepsilon'^2}\right)$.
\end{lemma}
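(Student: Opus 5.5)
The plan is to derive the bound from the strength-estimation guarantee in \Cref{lem:strength_est}. Fix an iteration with current graph $G_j$ and strength estimates $\{\kappa'_T\}$. By \Cref{def:estimated_importance_weight}, each triangle $T$ contributes its importance $\widehat{\eta}(T)=w(T)/\kappa'_T$ to exactly three edges. So I would first double count:
\[
\sum_{e\in E_j}\widehat{\eta}(e)=\sum_{e\in E_j}\sum_{T\ni e}\widehat{\eta}(T)=3\sum_{T\in\mathcal{T}(G_j)}\frac{w(T)}{\kappa'_T}\le 3c(n-1),
\]
where the last inequality is property~2 of \Cref{lem:strength_est}.

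Next I would apply a Markov-type counting argument. Every critical edge $e\in E_+$ satisfies $\widehat{\eta}(e)\ge \tau:=\frac{d\eps'^2}{3(\log n+3)}$ by \Cref{def:critical_edge}, and all importances are nonnegative. Hence
\[
|E_+|\cdot\tau\le\sum_{e\in E_+}\widehat{\eta}(e)\le\sum_{e\in E_j}\widehat{\eta}(e)\le 3c(n-1).
\]
Dividing by $\tau$ gives
\[
|E_+|\le \frac{9c(n-1)(\log n+3)}{d\eps'^2}=O\!\left(\frac{(n-1)\log n}{d\eps'^2}\right),
\]
using $\log n+3=O(\log n)$ for $n\ge 2$.

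I expect essentially no serious obstacle; the argument is a two-line averaging bound. The one point needing care is that \Cref{lem:strength_est} must be applied to the current graph $G_j$ with its current weights $w_j$. The estimates $\kappa'_T$, and hence $\widehat{\eta}$, are defined relative to that graph, and the bound $c(n-1)$ depends only on the vertex count $n$, which never changes. Therefore the bound holds uniformly for every iteration of \textsc{TriangleCutSparsification} and of its quantum counterpart.
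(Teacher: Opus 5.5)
Your proof is correct. Each triangle's importance is counted three times, so property~2 of \Cref{lem:strength_est} bounds the total edge importance by $3c(n-1)$. Dividing by the critical threshold then gives $|E_+|\le 9c(n-1)(\log n+3)/(d\eps'^2)$. The paper gives no proof of its own and only cites Corollary~5.1.3 of \citet{kapralov2022motif}; your derivation is the standard total-importance averaging argument behind that bound, and it correctly applies the strength-estimation guarantee separately to each iteration's graph $G_j$.
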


In each iteration, every non-critical edge is independently sampled with probability $p = 2^{-1/6}$. Hence, a fixed non-critical edge survives all $\lceil 6 c_1 \log n \rceil$ iterations only if it is sampled in every iteration where it is present, which happens with probability at most
$p^{\lceil 6 c_1 \log n \rceil} = n^{-\Omega(1)}$.
By a union bound over all edges, the probability that at least one such edge is still present at the end of the algorithm is at most $O(1/\operatorname{poly}(n))$. In particular, we obtain the following bound.

\begin{lemma}[\citet{kapralov2022motif}, Lemma 6.2]
\label{lem:triangle_cut_spars_size_bound}
    The output graph of~\textup{\textsc{TriangleCutSparsification}} contains at most $\widetilde{O}(n /\eps^2)$ edges with probability at least $1 - O(1/\operatorname{poly}(n))$.
\end{lemma}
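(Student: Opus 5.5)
The plan is to bound the final edge set in two parts: edges that were declared critical in some iteration, and edges that were never critical. Call an edge \emph{eventually critical} if it lies in $E_+$ in at least one of the $\lceil 6c_1\log n\rceil$ iterations of \textsc{TriangleCutSparsification}. Every other surviving edge must have passed the Bernoulli sampling in every iteration, so the output is contained in
\[
\Bigl(\bigcup_{j} E_+^{(j)}\Bigr)\cup\{e : e \text{ never critical and sampled in all iterations}\}.
\]
The first step bounds the union of the critical sets. By \Cref{lem:critical_edge_size_bound}, applied in each iteration to the current graph $G_j$ (still on $n$ vertices, so the constant $c$ from \Cref{lem:strength_est} applies unchanged), each $E_+^{(j)}$ has $O((n-1)\log n/(d\eps'^2))$ edges. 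There are $O(\log n)$ iterations, $\eps'=\eps/(15c_1\log n)$, and $d,c_1$ are absolute constants. Hence the union has size $O(n\log^4 n/\eps^2)=\widetilde{O}(n/\eps^2)$, and this holds deterministically.

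The second step handles the never-critical edges. Such an edge is present in every iteration and is independently retained with probability $p=2^{-1/6}$ each time. It therefore survives with probability $p^{\lceil 6c_1\log n\rceil}\le 2^{-c_1\log n}=n^{-c_1}$, taking $\log$ base $2$ as elsewhere. There is one subtlety: whether an edge is critical in iteration $j$ depends on earlier sampling outcomes. To avoid conditioning issues, I fix in advance one independent coin per (edge, iteration) pair. Then, for the event ``never critical and survives'', at least every coin of that edge must have come up $1$. This event has probability at most $n^{-c_1}$ regardless of the adaptive choices. A union bound over the at most $m\le n^2$ edges gives failure probability at most $n^{2-c_1}=O(1/\poly(n))$ for $c_1\ge 3$.

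Combining the two steps, with probability $1-O(1/\poly(n))$ the output has at most $\widetilde{O}(n/\eps^2)$ edges. The main obstacle is the adaptivity in the second step. An edge that is critical in some iteration escapes sampling, and criticality depends on the random reweighting from earlier rounds. This is why I charge all such edges to the first step and use the ``all coins equal 1'' event only for the never-critical ones, rather than trying to analyze survival probabilities of edges whose sampling is interleaved with critical rounds.
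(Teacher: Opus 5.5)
Your proof is correct and follows essentially the same route as the paper. Both arguments charge every never-critical surviving edge to the event that all of its $\lceil 6c_1\log n\rceil$ Bernoulli trials succeed, which has probability at most $p^{\lceil 6c_1\log n\rceil}\le n^{-c_1}$, then union-bound over edges, and bound the remaining output edges by the critical sets of \Cref{lem:critical_edge_size_bound}. Your version also makes explicit two points the paper's sketch leaves implicit: the union of the $O(\log n)$ per-iteration critical sets, giving $O(n\log^4 n/\eps^2)$, and the pre-fixed-coin coupling that handles adaptive criticality.
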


\subsection{Runtime of the Classical Algorithm}
\label{appsubsec:kapralov_runtime}

Next, we analyze the running time of the classical algorithm.
\textsc{StrengthEstimation} runs in time $O\bigl(3 \lvert \mathcal{T}(G)\rvert \log^2 n \cdot \log (3 \lvert \mathcal{T}(G) \rvert)\bigr)$~\cite{chekuri2018minimum} per iteration.
Let $T_{\mathrm{c\text{-}list}}$ denote the time required to classically enumerate all triangles in $G$. Since the algorithm only deletes edges, every triangle in $G$ can be computed only once in time $T_{\mathrm{c\text{-}list}}$ at first and then reused throughout the algorithm.

In each iteration, the algorithm computes the values $\widehat{\eta}(e)$ and identifies all critical edges in $O(\lvert E \rvert)$ time, and the sampling step also takes $O(\lvert E \rvert)$ time. Summing over all iterations and absorbing polylogarithmic factors into the $\widetilde{O}(\cdot)$ notation, we obtain the following bound.

\begin{lemma}[Theorem 4.1 in~\citep{kapralov2022motif}]
\label{lem:triangle_cut_spars_time}
    The running time of~\textup{\textsc{TriangleCutSparsification}} is $T_{\mathrm{c\text{-}list}} + \widetilde{O}\bigl(\lvert E \rvert + \lvert \mathcal{T}(G) \rvert\bigr)$, where $T_{\mathrm{c\text{-}list}}$ is the time required to classically enumerate all triangles in $G$, and $\lvert \mathcal{T}(G) \rvert$ is the number of all triangles.
\end{lemma}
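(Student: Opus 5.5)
The plan is a direct accounting of the classical algorithm's cost, iteration by iteration. The algorithm runs for $\lceil 6c_1\log n\rceil = O(\log n)$ iterations, so it suffices to bound the cost of one iteration by $\widetilde{O}(|E| + |\mathcal{T}(G)|)$, plus a one-time triangle listing cost of $T_{\mathrm{c\text{-}list}}$.

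\textbf{Step 1: list once and reuse.} The key structural observation is that the edge sets are nested, $E_{j+1} = E_j \setminus E_- \subseteq E_j$. Hence $\mathcal{T}(G_{j+1}) \subseteq \mathcal{T}(G_j) \subseteq \mathcal{T}(G)$. We enumerate $\mathcal{T}(G)$ once at the start in time $T_{\mathrm{c\text{-}list}}$. In iteration $j$ we obtain $\mathcal{T}(G_j)$ by filtering this list: keep the triangles all three of whose edges are still present. Storing a boolean "alive" flag per edge, indexed through a hash table or sorted adjacency structure built in $\widetilde{O}(|E|)$ time, makes the filter cost $O(|\mathcal{T}(G)|)$. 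The weights $w_j(T)$ are recomputed from $w_j$ in $O(1)$ time per triangle.

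\textbf{Step 2: per-iteration work.} We then charge each remaining operation of the iteration:
\begin{itemize}
\item \textsc{StrengthEstimation} costs $O(|\mathcal{T}(G_j)|\log^2 n \log|\mathcal{T}(G_j)|)$ by \Cref{lem:strength_est}. This is $\widetilde{O}(|\mathcal{T}(G)|)$, since $|\mathcal{T}(G)|\le n^3$ makes the log factors polylogarithmic in $n$.
\item The estimated importances $\widehat{\eta}(e)$ of \Cref{def:estimated_importance_weight} are computed by initializing an array of zeros over $E_j$ and adding $w_j(T)/\kappa'_T$ to the three edges of each triangle. This costs $O(|E| + |\mathcal{T}(G)|)$.
\item Thresholding to form $E_+$ takes $O(|E_j|)$.
\item The Bernoulli sampling and reweighting loop, together with forming $E_{j+1}$ and $w_{j+1}$, takes $O(|E_j|)$, assuming unit cost per random bit.
\end{itemize}

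Summing over the $O(\log n)$ iterations gives $T_{\mathrm{c\text{-}list}} + \widetilde{O}(|E| + |\mathcal{T}(G)|)$.

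There is no genuine obstacle. The only point needing care is the justification in Step 1 that triangles need not be re-enumerated, which follows from the nestedness of the $E_j$ since reweighting never creates edges. A secondary point is making the edge lookups in the filtering and accumulation steps $\widetilde{O}(1)$ each, which a dictionary keyed by vertex pairs provides.
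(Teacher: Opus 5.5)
Your proposal is correct and follows the same route as the paper: enumerate $\mathcal{T}(G)$ once, which is valid because the algorithm only deletes edges, and then charge each of the $O(\log n)$ iterations $\widetilde{O}(|\mathcal{T}(G)|)$ for \textsc{StrengthEstimation} and linear time for computing $\widehat{\eta}$, thresholding, and sampling. Your $O(|E|+|\mathcal{T}(G)|)$ for the $\widehat{\eta}$ step is more precise than the paper's $O(|E|)$, and it leaves the final bound unchanged.
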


\section{Applications: Quantum Algorithms for Triangle Clustering}
\label{appsec:applications}

Graph clustering is a fundamental task in machine learning and data mining. 
While common methods rely on pairwise interactions, many real-world networks often exhibit higher-order structure that is not captured by edges alone. Triangle motifs provide a canonical example of such higher-order connectivity and have been shown to yield substantially distinct community structure in practice and theory~\cite{science16, tsourakakis2017scalable,yin2018higher}.
Triangle clustering extends some well-known clustering algorithms by replacing edge-based conductance with a triangle-based notion that measures how well groups preserve triangular connectivity.

In this section, we show that triangle clustering admits a natural and meaningful quantum speedup by leveraging our quantum algorithm for triangle listing. 
We present quantum methods that accelerates the main computational bottleneck of motif-based clustering while preserving the structural guarantees as their classical counterparts. 
We begin by reviewing standard conductance and its triangle-based generalization.

\begin{definition}[Edge Conductance]
\label{def:edge_conductance}
Let $G = (V, E, w)$ be an undirected weighted graph. For any nonempty subset $S \subset V$, let $\bar{S} = V \setminus S$.
The \emph{edge conductance} of $S$ in $G$ is defined as 
\[
\phi_G(S) \;\coloneqq\; \frac{\mathrm{cut}_G(S, \bar{S})}{\min\{\mathrm{vol}_G(S), \mathrm{vol}_G(\bar{S})\}},
\]
where $\mathrm{cut}_G(S, \bar{S}) = \sum_{u \in S, v \in \bar{S}} w(u,v)$ and $\mathrm{vol}_G(S) = \sum_{u \in S} d_G(u)$, with $d_G(u) = \sum_{v:(u,v) \in E} w(u,v)$.
\end{definition}

\begin{definition}[Triangle Conductance]
\label{def:triangle_conductance}
Let $G = (V, E, w)$ be an undirected weighted graph and $\mathcal{T}(G)$ the set of all triangles in $G$. For any nonempty subset $S \subset V$, let $\bar{S} = V \setminus S$.
The \emph{triangle conductance} of $S$ in $G$ is defined as 
\[
\phi^{\Delta}_G(S) \;\coloneqq\; \frac{\mathrm{cut}^{\Delta}_G(S, \bar{S})}{\min\{\mathrm{vol}^{\Delta}_G(S), \mathrm{vol}^{\Delta}_G(\bar{S})\}},
\]
where 
\[
\mathrm{cut}^{\Delta}_G(S, \bar{S}) = \sum_{\substack{T \in \mathcal{T}(G): \\ T \cap S \neq \emptyset \text{ and } T \cap \bar{S} \neq \emptyset}} w(T),
\quad  
\mathrm{vol}^{\Delta}_G(S) = \sum_{T \in \mathcal{T}(G)} w(T) \cdot |T \cap S|.
\]
\end{definition}

For an integer $k \ge 2$, a collection $\mathcal{P} = \{S_1, S_2, \dots, S_k\}$ is a $k$-partition of $V$ if $\bigcup_{i \in [k]} S_i = V$ and $\forall i, j \in [k], S_i \cap S_j = \emptyset$. 
The \emph{$k$-way conductance} of $G$ is defined as $\phi_k(G) \coloneqq \min_{\mathcal{P}}\;\max_{i \in [k]} \phi_G(S_i)$. Similarly, we define the \emph{$k$-way triangle conductance} of $G$ as $\phi^{\Delta}_k(G) \coloneqq \min_{\mathcal{P}}\;\max_{i \in [k]} \phi^{\Delta}_G(S_i)$.

\textbf{Reduction to Edge-Based Clustering.} 
A key structural property of triangle clustering is that it reduces exactly to standard conductance on an appropriately reweighted graph. The graph defines the weight of each edge as the weight sum of all triangles containing the edge.

\begin{definition}[Triangle-Weighted Graph]
\label{def:triangle_weighted_graph}
Let $G = (V, E, w)$ and $\mathcal{T}(G)$ be as above. 
The \emph{triangle-weighted graph} of $G$ is $G_\Delta = (V, E, w')$ where
\[
w'(u,v)=\sum_{T\in\mathcal{T}(G):(u,v)\in T} w(T).
\]
\end{definition}

The following fact~\cite{science16} formalizes the equivalence between triangle conductance and the edge-based one in the corresponding triangle-weighted graph. 

\begin{fact}
\label{fact:equivalence}
For every integer $k \geq 2$, we have $\phi^{\Delta}_k(G) = \phi_{k}(G_\Delta)$.
\end{fact}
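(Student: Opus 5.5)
The plan is to show that for every nonempty $S\subset V$ the triangle cut and triangle volume in $G$ coincide, up to one common constant factor, with the edge cut and edge volume in $G_\Delta$. The constant then cancels in every ratio $\phi_G^\Delta(S)$. Since the $k$-way quantities are obtained by the same $\min_{\mathcal P}\max_i$ over identical partitions, it suffices to show $\phi^\Delta_G(S)=\phi_{G_\Delta}(S)$ for every $S$.

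\textbf{Cut identity.} I would expand $\mathrm{cut}_{G_\Delta}(S,\bar S)=\sum_{u\in S,v\in\bar S} w'(u,v)$ using Definition~\ref{def:triangle_weighted_graph}. Swapping the order of summation gives
\[
\mathrm{cut}_{G_\Delta}(S,\bar S)=\sum_{T\in\mathcal T(G)} w(T)\cdot\bigl|\{e\in E(T): e \text{ crosses } (S,\bar S)\}\bigr|.
\]
Now I count crossing edges of a single triangle. If $T$ lies entirely on one side, none of its edges cross. If $T$ crosses the cut, its vertices split as $1$ and $2$ across the sides, so exactly $2$ of its three edges cross. Hence $\mathrm{cut}_{G_\Delta}(S,\bar S)=2\,\mathrm{cut}^\Delta_G(S,\bar S)$.

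\textbf{Volume identity.} Next I compute $d_{G_\Delta}(u)=\sum_v w'(u,v)$. Each triangle containing $u$ contributes $w(T)$ through each of its two edges at $u$, so $d_{G_\Delta}(u)=2\sum_{T\ni u}w(T)$. Summing over $u\in S$ gives
\[
\mathrm{vol}_{G_\Delta}(S)=2\sum_{T} w(T)\,|T\cap S|=2\,\mathrm{vol}^\Delta_G(S),
\]
and the same holds for $\bar S$. The factor $2$ then cancels in the minimum of the two volumes as well.

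\textbf{Conclusion and main subtlety.} Combining the two identities gives $\phi_{G_\Delta}(S)=\phi^\Delta_G(S)$ for every $S$, and taking $\min_{\mathcal P}\max_i$ on both sides yields the fact. The key step, where the argument could otherwise fail, is the observation that a crossing triangle has exactly two crossing edges. A vertex also lies on exactly two edges of each of its triangles. Because these two factors are both $2$, the cut and the volume rescale identically. One should also note that edges in no triangle get $w'=0$; they contribute nothing to either side, so they pose no issue.
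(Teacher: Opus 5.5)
Your proof is correct and takes essentially the same approach as the paper. Both arguments show $\mathrm{cut}_{G_\Delta}(S,\bar S)=2\,\mathrm{cut}^\Delta_G(S,\bar S)$, since a crossing triangle has exactly two crossing edges, and $\mathrm{vol}_{G_\Delta}(S)=2\,\mathrm{vol}^\Delta_G(S)$, since each vertex lies on exactly two edges of each of its triangles; the factor $2$ therefore cancels in every ratio $\phi^\Delta_G(S)$ and hence in the $\min_{\mathcal P}\max_i$.
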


\begin{proof}
It suffices to show that for any fixed $k$-partition $\mathcal{P} = \{S_1, S_2, \dots, S_k\}$ and an index $i \in [k]$, $\phi_G^{\Delta}(S_i) = \phi_{G_\Delta}(S_i)$. Let $\bar{S_i} = V \setminus S_i$ and $(S_i, \bar{S_i})$ denote the cut.

Consider a triangle $T \in \mathcal{T}(G)$ with weight $w(T)$.
If $T$ crosses the cut $(S_i, \bar{S_i})$, it contributes exactly $2w(T)$ to $\mathrm{cut}_{G_\Delta}(S_i, \bar{S_i})$ because there are exactly two of its edges cut by $(S_i, \bar{S_i})$.
Hence, $\mathrm{cut}_{G_\Delta}(S_i, \bar{S_i}) = \sum_{\substack{T \in \mathcal{T}(G): \\ T \cap S_i \neq \emptyset \text{ and } T \cap \bar{S_i} \neq \emptyset}} 2w(T)= 2 \cdot \mathrm{cut}^{\Delta}_{G}(S_i, \bar{S_i}).$
 
For the volume, note that triangle $T$ contributes $2w(T)$ to $d_{G_\Delta}(v)$ for its vertex $v$ in $S_i$, since $v$ is incident to two edges of $T$, each with weight $w(T)$. 
Hence, $d_{G_\Delta}(v) = \sum_{T \in \mathcal{T}(G): v \in T} 2w(T).$
Summing over $v \in S_i$,
\begin{align*}
\mathrm{vol}_{G_\Delta}(S_i)
& = \sum_{v \in S_i} \sum_{T \in \mathcal{T}(G): v \in T} 2w(T) \\
& = 2 \sum_{T \in \mathcal{T}(G): v \in T}\sum_{v \in S_i} w(T) \\
& = 2 \sum_{T \in \mathcal{T}(G)} |T \cap S_i| \cdot w(T)  \\
& = 2 \cdot \mathrm{vol}^{\Delta}_{G}(S_i).
\end{align*}

Both the cut and volume in $G_\Delta$ are exactly double their triangle-based counterparts in $G$, so the conductance ratios coincide.
\end{proof}

As a consequence, all theoretical guarantees for spectral clustering on $G_\Delta$ transfer directly to triangle clustering on $G$.
After listing all triangles $\mathcal{T}(G)$, we can construct the triangle-weighted graph $G_\Delta$ explicitly in QRAM by accumulating, for each triangle $T$, its weight $w(T)$ onto the three incident edges.  This requires $\widetilde{O}(|\mathcal{T}(G)|)$ QRAM write operations and is therefore already accounted for in the triangle listing time $T_{\mathrm{q\text{-}list}}$.  Once $G_\Delta$ is stored, its adjacency-list and vertex-pair queries can be simulated in $\widetilde{O}(1)$ time via QRAM reads.  All subsequent clustering algorithms only assume this standard graph-query access to $G_\Delta$.

\textbf{Quantum Speedup for Global Triangle Clustering.}
Classical motif-based spectral clustering algorithms~\cite{science16,tsourakakis2017scalable} first construct the triangle-weighted graph $G_\Delta$ and then apply spectral $k$-means clustering on it.
The dominant computational cost is the construction of $G_\Delta$, which is bounded by the time required to enumerate triangles in $G$. 
Using our quantum triangle listing algorithm, we can build $G_\Delta$ as described above.
This satisfies the input requirements of the quantum spectral $k$-means clustering algorithm of~\citet{apers2022quantum}. 
Combining these components yields \Cref{cor:q-triangle_2_clustering} and \Cref{cor:q-triangle_clustering}.

\begin{corollary}
\label{cor:q-triangle_clustering}
For any $k \ge 2$, there exists a quantum algorithm that, given query access to $G$, returns a $k$-partition in time $T_{\mathrm{q\text{-}list}} + \widetilde{O} (\sqrt{mn} + n\,\mathrm{poly}(k))$.%
\end{corollary}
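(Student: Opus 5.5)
The plan is to mirror the proof of \Cref{cor:q-triangle_2_clustering} and split the cost into two phases: building the triangle-weighted graph $G_\Delta$, and running a quantum spectral $k$-way clustering routine on it.

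\textbf{Phase 1: build $G_\Delta$.} First run \textsc{Q-TriangleListing} (\Cref{thm:main-triangle}) to obtain $\mathcal{T}(G)$ with high probability in time $T_{\mathrm{q\text{-}list}}$. Then, for each triangle $T$, add $w(T)$ to the QRAM entries of its three edges. This takes $\widetilde{O}(|\mathcal{T}(G)|)\le T_{\mathrm{q\text{-}list}}$ writes, since the listing time is at least the output size. Edges lying in no triangle get weight $0$ in $G_\Delta$. We must still answer degree and neighbor queries for $G_\Delta$ restricted to its positive-weight edges. The simplest option is to let $G_\Delta$ keep the edge set $E$ with possibly zero weights, as in \Cref{def:triangle_weighted_graph}. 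Then the adjacency-list queries of $G$ serve directly, and vertex-pair queries return the accumulated weight. With this choice, every query to $G_\Delta$ is simulated in $\widetilde{O}(1)$ time by one query to $\mathcal{O}_G$ plus one QRAM read.

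\textbf{Phase 2: cluster $G_\Delta$.} Invoke the quantum spectral $k$-means / $k$-way clustering algorithm of~\citet{apers2022quantum}. It works in three steps:
\begin{enumerate}
\item build an $\varepsilon$-spectral sparsifier of $G_\Delta$ with constant $\varepsilon$ in $\widetilde{O}(\sqrt{mn})$ time, using the query access from Phase 1;
\item compute the bottom $k$ eigenvectors of the sparsifier's normalized Laplacian classically, in time $\widetilde{O}(n\,\mathrm{poly}(k))$ on its $\widetilde{O}(n)$ edges;
\item run $k$-means on the resulting spectral embedding, also in $\widetilde{O}(n\,\mathrm{poly}(k))$ time.
\end{enumerate}
It returns a $k$-partition of $V$. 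Adding the two phases gives the claimed total $T_{\mathrm{q\text{-}list}}+\widetilde{O}(\sqrt{mn}+n\,\mathrm{poly}(k))$. By \Cref{fact:equivalence}, any guarantee on the $k$-way conductance of this partition in $G_\Delta$ transfers verbatim to triangle conductance in $G$. The statement itself asserts only the running time, so no further quality argument is needed.

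\textbf{Main obstacle.} The delicate point is the interface in Phase 1: checking that the quantum sparsification routine only needs the general-graph-model queries on $G_\Delta$ as we provide them. In particular, zero-weight edges must be harmless. I would handle this by arguing that the sparsifier algorithm treats zero-weight edges as absent, since their leverage scores vanish and they are never sampled. Alternatively, one can note that they simply do not affect the Laplacian. A secondary point is the failure probability: combine the high-probability guarantees of listing and sparsification with a union bound.
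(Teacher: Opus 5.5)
Your proposal is correct and takes the same route as the paper. You list all triangles in time $T_{\mathrm{q\text{-}list}}$, write $G_\Delta$ into QRAM with $\widetilde{O}(|\mathcal{T}(G)|)$ writes (absorbed into $T_{\mathrm{q\text{-}list}}$), and feed the resulting $\widetilde{O}(1)$-time query access to the quantum spectral clustering routine of~\citet{apers2022quantum}, which supplies the $\widetilde{O}(\sqrt{mn}+n\,\mathrm{poly}(k))$ term. The paper does not discuss the zero-weight-edge interface you flag; you can sidestep it entirely by storing adjacency lists only for the at most $3|\mathcal{T}(G)|$ triangle edges, which also costs $\widetilde{O}(|\mathcal{T}(G)|)$ and leaves an edge set of size at most $m$.
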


When $k > 2$, this method does not have the same Cheeger-like guarantee on quality. 
We leave it aligned to the classical one, since the output satisfies the same approximation guarantees as classical $k$-way spectral clustering applied to the triangle-weighted graph.

\textbf{Quantum Speedup for Local Triangle Clustering.}
Local higher-order graph clustering~\cite{yin2017local} studies the problem of finding a low-conductance cluster containing a given seed vertex using motif-based notions of connectivity.
In the triangle setting, the classical algorithm first constructs the triangle-weighted graph $G_\Delta$, then computes the approximate personalized PageRank vector for $G_\Delta$, followed by the sweep procedure to output the set with minimal conductance. 

\begin{corollary}
\label{cor:q-triangle_local_clustering}
Given query access to $G$, a seed vertex $s$, a teleportation parameter $\alpha \in (0,1)$, and a tolerance $\varepsilon > 0$, suppose there exists a target cluster $S^\star$ containing $s$ with triangle conductance $\phi_G^\Delta(S^\star)$.
Then there exists a quantum algorithm that returns a vertex set $S$ containing $s$ in time $ T_{\mathrm{q\text{-}list}} + O\left(\frac{1}{\varepsilon(1-\alpha)}\right)$ such that its triangle conductance is at most
\[
\tilde{O}\left(
\min\left\{
\sqrt{\phi_G^\Delta(S^\star)},
\frac{\phi_G^\Delta(S^\star)}{\sqrt{1-\alpha}}
\right\}
\right).
\]
\end{corollary}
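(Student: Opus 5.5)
The plan is to follow the three-stage pipeline of \citet{yin2017local}: build $G_\Delta$, compute an approximate personalized PageRank vector, and sweep. Stage one uses our quantum listing primitive; stages two and three are the classical local procedure run on QRAM access to $G_\Delta$. Triangle conductance then transfers by \Cref{fact:equivalence}.

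\textbf{Step 1: building $G_\Delta$.} Run \textsc{Q-TriangleListing} (\Cref{thm:main-triangle}) to obtain $\mathcal{T}(G)$ in time $T_{\mathrm{q\text{-}list}}$ with high probability. Then accumulate $w(T)$ onto the three edges of each triangle, storing $w'$ and the weighted degrees $d_{G_\Delta}(v)$ in QRAM. This takes $\widetilde O(|\mathcal{T}(G)|)$ writes, which is absorbed into $T_{\mathrm{q\text{-}list}}$, exactly as in the paragraph preceding the global clustering corollaries. We also store adjacency lists restricted to edges with $w'>0$. These are produced during accumulation, so neighbor queries to $G_\Delta$ cost $\widetilde O(1)$.

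\textbf{Step 2: push and sweep.} Run the Andersen--Chung--Lang push procedure for approximate PPR on $G_\Delta$ from seed $s$ with parameters $\alpha,\varepsilon$. The standard analysis bounds the total work, pushes times incident edges touched, by $O\bigl(1/(\varepsilon(1-\alpha))\bigr)$ under the paper's teleportation convention. Each push needs only degree and neighbor reads of $G_\Delta$, so it runs at unit cost under QRAM. The resulting vector has support of volume $O(1/\varepsilon)$. The sweep therefore sorts and scans at most that many vertices, maintaining the cut and volume incrementally, and stays within the same bound up to log factors.

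\textbf{Step 3: quality guarantee.} The approximate PPR vector is identical to the classical one, so the guarantee of \citet{yin2017local} applies verbatim: the sweep set has conductance in $G_\Delta$ at most $\tilde O(\min\{\sqrt{\phi_{G_\Delta}(S^\star)},\phi_{G_\Delta}(S^\star)/\sqrt{1-\alpha}\})$, under their standard assumptions on seed and target. The proof of \Cref{fact:equivalence} shows $\phi^\Delta_G(S)=\phi_{G_\Delta}(S)$ for every single set $S$, not only for the $k$-way minimum. This converts both sides of the guarantee into triangle conductance.

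\textbf{Main obstacle.} The delicate point is the cost model: the push procedure's locality must not be destroyed by building $G_\Delta$. Materializing $G_\Delta$ is charged entirely to $T_{\mathrm{q\text{-}list}}$, since the global listing dominates. After that, every operation of the local algorithm must genuinely be an $\widetilde O(1)$ QRAM read, including obtaining the neighbors of a vertex in $G_\Delta$ without scanning $N_G(v)$. If this were not possible, I would instead restrict the charge to vertices touched by pushes.

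Finally, the high-probability guarantee of listing carries over by a union bound, since the remaining steps are deterministic.
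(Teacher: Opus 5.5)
Your proposal is correct and takes the same route as the paper's largely implicit argument. That argument lists $\mathcal{T}(G)$ quantumly and materializes $G_\Delta$ in QRAM at a cost absorbed into $T_{\mathrm{q\text{-}list}}$. It then runs the classical approximate-PageRank-plus-sweep procedure of \citet{yin2017local} on $G_\Delta$, and transfers the guarantee through the per-set identity $\phi^\Delta_G(S)=\phi_{G_\Delta}(S)$ from the proof of \Cref{fact:equivalence}.

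The logarithmic sorting overhead you concede in the sweep can be charged to $T_{\mathrm{q\text{-}list}}$. The PageRank support lies in $\{s\}$ together with the at most $3t$ vertices of listed triangles, so this recovers the stated additive $O\bigl(1/(\varepsilon(1-\alpha))\bigr)$ term.
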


More broadly, many higher-order graph algorithms, such as global spectral clustering, local clustering, and recent motif-based clustering methods for hypergraphs~\cite{italiano2025local}, rely on explicit motif enumeration. Our quantum triangle listing algorithm straightforwardly accelerates this shared computational bottleneck.

\section{Lower Bound}
\label{appsec:lower_bound}

In this appendix, we prove an $\Omega(n/\varepsilon^2)$ lower bound on the size of triangle cut sparsifiers via a reduction from the Gap-Hamming-Distance problem, following the classical framework of~\citet{andoni2016on}.

\begin{theorem}[{\citet{andoni2016on}}]
\label{thm:cut-sparsifier-lb}
For every integer $n$ and $\varepsilon \in (1/\sqrt{n},1)$, there exists an $n$-vertex graph $G$ such that every $\eps$-cut sparsifier of $G$ has $\Omega(n/\varepsilon^2)$ edges.
\end{theorem}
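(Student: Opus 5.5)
The statement is the cited result of \citet{andoni2016on}, so the plan is to reconstruct their communication-complexity argument. The reduction is from the one-way distributional Gap-Hamming-Distance problem. Alice holds $x\in\{0,1\}^k$ and Bob holds $y\in\{0,1\}^k$, with $k=\Theta(1/\varepsilon^2)$. Bob must decide whether $\Delta(x,y)\ge k/2+c\sqrt{k}$ or $\Delta(x,y)\le k/2-c\sqrt{k}$. Any one-way protocol succeeding with constant probability under the hard (near-uniform) distribution requires $\Omega(k)$ bits.

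To obtain an $n/\varepsilon^2$ bound rather than $1/\varepsilon^2$, we use a direct-sum style embedding.
\begin{itemize}
\item Split $V$ into $\Theta(n\varepsilon^2)$ disjoint blocks, each a bipartite graph on $\Theta(1/\varepsilon^2)$ vertices.
\item Give Alice $\Theta(n\varepsilon^2)\cdot\Theta(1/\varepsilon^2)=\Theta(n/\varepsilon^2)$ independent GHD instances in total. For instance, each left vertex $u$ of a block carries a string $x^{(u)}$ of length $k$, encoded as $u$'s neighborhood in the block's right side.
\item The range $\varepsilon\in(1/\sqrt n,1)$ is exactly what makes the block sizes at least constant and at most $n$.
\end{itemize}

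Alice's message will be an $\varepsilon$-cut sparsifier $G'$ encoded in $\widetilde{O}(|E(G')|)$ bits; the weights need only polylogarithmic precision after rounding, which changes cuts by a $(1\pm\varepsilon)$ factor.

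Bob, holding $y$ and an index $u$, decodes as follows.
\begin{itemize}
\item He queries cut values of sets $S=\{u\}\cup Y$, where $Y$ is the set of right vertices indicated by $y$.
\item Also querying $\{u\}$ alone and $Y$ alone, inclusion–exclusion gives $|N(u)\cap Y|$, and hence $\Delta(x^{(u)},y)$, up to additive error.
\item The additive error is $\varepsilon$ times the cut sizes involved, which are $O(k)$ for single-block cuts.
\end{itemize}

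The main obstacle is that the gap is only $c\sqrt{k}=\Theta(1/\varepsilon)$, while the error $\varepsilon\cdot\mathrm{cut}$ must be below this. A naive cut $\{u\}\cup Y$ has value $\Theta(k^2)$ when $Y$ contains $\Theta(k)$ vertices of degree $\Theta(k)$, giving error $\varepsilon k^2\gg 1/\varepsilon$.

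To fix this, I would follow Andoni et al.\ and exploit that Bob's query sets are random. The error of an $\varepsilon$-sparsifier is multiplicative per cut, so it cannot be avoided by randomization alone. Instead I would use their refined argument: compare the two cuts $\{u\}\cup Y$ and $\{u\}\cup\bar Y$. Their difference equals $2(|N(u)\cap\bar Y|-|N(u)\cap Y|)$ plus terms known from $Y$-only cuts. The $Y$-only contributions can then be cancelled by a second, symmetric construction, or averaged over many $u$ in the same block so that the relative error concentrates.

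Concretely, the fallback I would try first is to rescale: set $k=\Theta(1/\varepsilon)$ with block cuts of size $\Theta(k)$, so that error $\varepsilon k=\Theta(1)\le\sqrt{k}$, and recover the full $1/\varepsilon^2$ exponent by packing $\Theta(1/\varepsilon)$ strings per vertex. Finally, a union/averaging argument shows that Bob answers a random instance correctly with constant probability. Since the message has $\widetilde{O}(|E(G')|)$ bits and the instances are $\Theta(n/\varepsilon^2)$ independent GHD coordinates, the standard distributional lower bound forces $|E(G')|=\Omega(n/\varepsilon^2)$; at this last step one must check that the polylogarithmic encoding loss is absorbed.
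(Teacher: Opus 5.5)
\paragraph{Verdict.} Your setup matches the construction the paper attributes to \citet{andoni2016on}; the paper itself only cites the result and gives no proof. That setup is: blocks with $k=\Theta(1/\varepsilon^2)$ left and right vertices, left neighborhoods encoding Alice's strings, and a reduction from indexed one-way Gap-Hamming. One small slip: $\Theta(n\varepsilon^2)\cdot\Theta(1/\varepsilon^2)=\Theta(n)$ strings, i.e.\ $\Theta(n/\varepsilon^2)$ input bits. What you need is the $\Omega(n/\varepsilon^2)$-bit bound for the indexed problem, not a direct sum over $n/\varepsilon^2$ instances.

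\paragraph{The gap: Bob's decoding.} This is the heart of the proof, and none of your fixes works. Any cut whose value depends on $|N(u)\cap Y|$ with $|Y|=k/2$ contains $\Theta(k)$ right vertices of degree $\Theta(k)$. So it has value $\Theta(k^2)$, and the sparsifier may err on it, adversarially, by $\varepsilon\cdot\Theta(k^2)=\Theta(k^{3/2})$, against a signal of only $\Theta(\sqrt k)$. Differencing $\{u\}\cup Y$ against $\{u\}\cup\bar Y$, adding a ``symmetric construction'', or splitting $Y$ into small pieces only recombines errors of this total size. Because the for-all guarantee is worst case rather than random noise, nothing concentrates when you average over $u$.

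The rescaling fallback also fails. A graph with $O(n/\varepsilon)$ edges is its own sparsifier, so a hard instance needs $\Omega(n/\varepsilon^2)$ edges. Concretely, packing $1/\varepsilon$ strings of length $1/\varepsilon$ per vertex brings degrees back to $\Theta(1/\varepsilon^2)$. Then even a singleton cut has error $\Theta(1/\varepsilon)$, far above the per-string gap $\Theta(\varepsilon^{-1/2})$.

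\paragraph{The missing idea.} Make the signal add up coherently over many left vertices at once, using the for-all guarantee and Bob's unbounded computation.
\begin{itemize}
\item Take $k=\delta^2/\varepsilon^2$ for a small constant $\delta$, so the sparsifier errs by at most $\delta k^{3/2}$ on every cut inside a block.
\item For $Q\subseteq L$ with $|Q|=k/2$, $\mathrm{cut}_G(Q\cup T)=k^2/4+\mathrm{vol}(T)-2\sum_{u'\in Q}|N(u')\cap T|$.
\item Bob evaluates all $\binom{k}{k/2}$ such cuts in $G'$ and takes a minimizer $\hat Q$. Then $\sum_{u'\in\hat Q}|N(u')\cap T|$ is within $\delta k^{3/2}$ of its maximum. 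The maximum itself exceeds a typical $Q$ by $\Theta(k\sqrt k)$, the same order as the error.
\item An exchange argument uses that, w.h.p., the median overlap of the block's random strings with $T$ is $k/4\pm o(\sqrt k)$. It shows $\hat Q$ misses only $O(\delta k/c)$ vertices with overlap $\ge k/4+c\sqrt k/2$, and contains only $O(\delta k/c)$ vertices with overlap $\le k/4-c\sqrt k/2$. This is a small fraction of the $\Omega_c(k)$ such vertices.
\item Alice's message does not depend on $i$. Given all strings and $t$, the index $i$ is uniform over these gap vertices. So answering ``small distance'' iff $u\in\hat Q$ succeeds with probability $\ge 2/3$.
\end{itemize}

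\paragraph{The edge count.} Your last step does not close either. A $\widetilde O(|E(G')|)$-bit encoding only gives $|E(G')|=\Omega(n/(\varepsilon^2\operatorname{polylog} n))$, and that loss is not absorbed by $\Omega(\cdot)$. You need $O(1)$ bits per edge on average. One way:
\begin{itemize}
\item $G'$ has no edges between blocks, since each block is a zero cut of $G$.
\item A block with $m_j$ sparsifier edges has its edge set encoded in $\log\binom{\binom{2k}{2}}{m_j}$ bits.
\item Randomized rounding of the weights to granularity $\Theta(\delta)$ perturbs all $2^{O(k)}$ block cuts by only $O(\delta k^{3/2})$, by Hoeffding and a union bound.
\item Since the block's total weight is $O(k^2)$, the rounded weights cost $O(m_j\log(2+k^2/m_j))$ bits.
\item Concavity of $x\log(2+1/x)$ then turns $|E(G')|=o(n/\varepsilon^2)$ into a message of $o(n/\varepsilon^2)$ bits, contradicting the communication lower bound.
\end{itemize}
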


The proof of~\Cref{thm:cut-sparsifier-lb} is based on a reduction from the one-way distributional communication complexity of the Gap-Hamming-Distance (GHD) problem, which we recall below.

\begin{theorem}[~\citet{andoni2016on}]
\label{thm:GHD}
Consider the following distributional communication problem:
Alice has as input $n/2$ strings $s_1,\ldots,s_{n/2} \in \{0, 1\}^{1/\eps^2}$
of Hamming weight $\tfrac{1}{2\eps^2}$, 
and Bob has an index $i\in[n/2]$ together with one string $t\in \{0, 1\}^{1/\eps^2}$
of Hamming weight $\tfrac{1}{2\eps^2}$, 
drawn as follows:%
\footnote{Alice's input and Bob's input are \emph{not} independent,
but the marginal distribution of each one is uniform over its domain,
namely, $\{0, 1\}^{(n/2)\times (1/\eps^2)}$ and $[n]\times \{0, 1\}^{1/\eps^2}$,
respectively.}
\begin{itemize}[nosep]
\item $i$ is chosen uniformly at random; 
\item $s_i$ and $t$ are chosen uniformly at random but conditioned on 
their Hamming distance $\Delta(s_i,t)$ being, with equal probability,  
either $\geq \tfrac{1}{2\eps^2} + \tfrac{c}{\eps}$ 
or     $\leq \tfrac{1}{2\eps^2} - \tfrac{c}{\eps}$;
\item the remaining strings $s_{i'}$ for $i'\neq i$ are chosen uniformly at random.
\end{itemize}

Consider a (possibly randomized) one-way protocol,
in which Alice sends to Bob an $m$-bit message,
and then Bob determines, with success probability at least $2/3$, 
whether $\Delta(s_i,t)$
is $\geq \tfrac{1}{2\eps^2} + \tfrac{c}{\eps}$ or $\leq \tfrac{1}{2\eps^2} - \tfrac{c}{\eps}$.
Then Alice's message size is $m\geq \Omega(n/\eps^2)$ bits. 
\end{theorem}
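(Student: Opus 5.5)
The approach is a direct-sum argument: first show that Alice's message must carry little information about a typical string $s_i$, then show that solving a single Gap-Hamming instance under this distribution forces $\Omega(1/\eps^2)$ bits of information about that string. I would measure the message $M$ by information rather than length. Let $\mathbf{s}=(s_1,\dots,s_{n/2})$.

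\textbf{Step 1 (direct sum).} Under the stated distribution the strings $s_1,\dots,s_{n/2}$ are marginally independent and uniform over weight-$\tfrac{1}{2\eps^2}$ strings. Moreover, Alice's message depends only on $\mathbf{s}$ and her private coins, and her marginal input distribution does not depend on $i$. Hence $m \ge H(M) \ge I(M;\mathbf{s}) \ge \sum_{j} I(M;s_j)$, where the last inequality is superadditivity of mutual information for independent variables. It follows that for at least half of the indices $j$ we have $I(M;s_j)\le 4m/n$. By averaging Bob's success probability over the uniform choice of $i$, and adjusting constants via a Markov argument, I would find an index $j$ that simultaneously satisfies $I(M;s_j)=O(m/n)$ and has success probability at least $2/3-\delta$, conditioned on $i=j$.

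\textbf{Step 2 (embedding a single instance).} Fix such a $j$ and build a one-way protocol for a single Gap-Hamming instance $(x,t)$, where $x$ and $t$ are uniform weight-$\tfrac{1}{2\eps^2}$ strings conditioned on the gap.
\begin{itemize}
\item Alice sets $s_j=x$ and samples the remaining $s_{j'}$ herself, uniformly and independently. This has exactly the right conditional law given $i=j$.
\item She sends $M$; Bob runs the original decoder with index $j$ and string $t$.
\end{itemize}
This protocol succeeds with probability at least $2/3-\delta$, and its information cost $I(M;x)$ is $O(m/n)$.

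\textbf{Step 3 (single-instance bound, the main obstacle).} It remains to show that any one-way protocol for this single-instance Gap-Hamming distribution has information cost $\Omega(1/\eps^2)$. Communication lower bounds of this kind do not suffice; the bound must be on information cost. I would reduce from Indexing on $N=\Theta(1/\eps^2)$ uniform bits, whose one-way information complexity is $\Omega(N)$ under the uniform distribution. That bound follows from Fano's inequality applied coordinatewise, together with superadditivity as in Step 1.

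The reduction follows the approach of Jayram--Kumar--Sivakumar and Woodruff. Alice encodes her Index string into $x$ by a balanced padding: concatenate the string with its complement, which fixes the Hamming weight. Bob, holding index $k$, uses public randomness to choose $t$ so that $\Delta(x,t)$ is shifted by $\pm\Theta(1/\eps)$ according to $x_k$. This works because anti-concentration of the remaining random coordinates makes the $k$-th bit tip the distance across the gap with constant advantage.

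The delicate part is matching the induced distribution of $(x,t)$ to the conditioned Gap-Hamming distribution up to small total-variation distance, so that the success guarantee transfers. I would handle this by rejection sampling on the gap event, which has constant probability for an appropriate constant $c$. After that, standard success amplification is unnecessary: a constant advantage suffices for Fano's inequality.

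Combining the steps, $O(m/n) \ge \Omega(1/\eps^2)$, which gives $m=\Omega(n/\eps^2)$.
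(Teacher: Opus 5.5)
The paper does not prove this statement. It is quoted from Andoni et al.\ and used as a black box, so there is no in-paper argument to compare against, and your proposal has to stand on its own. Your Steps~1--2 are sound and form the standard direct-sum skeleton. The $s_j$ are independent uniform weight-$\tfrac{1}{2\varepsilon^2}$ strings, independent of $i$, so superadditivity gives $\sum_j I(M;s_j)\le m$. The two averaging arguments combine with suitable constants. The embedding reproduces both the conditional law given $i=j$ and the joint law of $(M,s_j)$. Everything therefore rests on Step~3, and the reduction you sketch does not deliver it.

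The first failure is that the encoding cannot give a constant advantage. With $x=(a,\bar a)$, and Bob's $t$ depending only on $k$ and public coins, flipping $a_k$ moves $\Delta(x,t)$ by at most $2$. The two promise regions, however, are $2c/\varepsilon$ apart. So $a_k=0$ and $a_k=1$ never produce instances on opposite promise sides, and correctness never forces Bob's answer to depend on $a_k$. Anti-concentration only makes $a_k$ pivotal for a single threshold, which happens with probability $\Theta(\varepsilon)$. Fano then yields $O(\varepsilon^2)$ bits per coordinate, i.e.\ $O(1)$ in total. The Jayram--Kumar--Sivakumar mechanism differs in an essential way: every coordinate of Alice's string depends on all of $a$. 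For example, with shared random $r^\ell\in\{\pm1\}^N$ and $a$ viewed as a sign vector, take $y_\ell=\operatorname{sign}\langle a,r^\ell\rangle$ and $z_\ell=r^\ell_k$. Then correlations of order $\varepsilon$ per coordinate add up over $\Theta(1/\varepsilon^2)$ coordinates to a shift of order $1/\varepsilon$.

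The second failure is that the guarantees you carry out of Step~2 do not transfer. Success $\ge 2/3-\delta$ and $I(M;x)=O(m/n)$ both hold only under the conditioned distribution $\mu$ itself. Strings of the form $(a,\bar a)$ are a $2^{-\Omega(1/\varepsilon^2)}$ fraction of all weight-$\tfrac{1}{2\varepsilon^2}$ strings. The protocol may therefore behave arbitrarily on them, even reveal them completely, at negligible information cost under $\mu$. ``Rejection sampling on the gap event'' is not available either. That event depends on both $x$ and $t$, so neither party in a one-way protocol can test it, and on non-promise instances Bob's answer is arbitrary.

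A shared random coordinate permutation applied to $(y,\bar y)$ and $(z,\bar z)$ would fix Alice's marginal exactly and make the pair uniform given its distance. Even then, the induced law of $\Delta$ is (twice) a binomial with shifted mean, at constant total-variation distance from the conditioned tails that define $\mu$. Nothing prevents a protocol with distributional error $1/3$ under $\mu$ from placing its errors, and its behaviour at non-promise distances, exactly where that shifted law has its mass. Distributional error also cannot be driven down by repetition.

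What is missing is a genuine one-way information lower bound for Gap-Hamming under exactly $\mu$, which is the real content of the cited theorem. It needs a direct argument, for instance that a message carrying $o(1/\varepsilon^2)$ bits about $x$ leaves Bob's views in the two promise cases nearly indistinguishable. The Index reduction as sketched does not supply it.
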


Using~\Cref{thm:GHD},~\citet{andoni2016on} construct a family of bipartite graphs such that any cut sparsifier with $m$ edges induces a one-way protocol with nearly $m$ bits of communication. This yields~\Cref{thm:cut-sparsifier-lb}.

We now extend this lower bound to triangle-weighted graphs.

\begin{lemma}
\label{lem:triangle-weighted-lb}
For every integer $n$ and $\varepsilon \in (1/\sqrt{n},1)$, there exists an $n$-vertex triangle-weighted graph $G_\Delta$ such that every $\eps$-cut sparsifier of $G_\Delta$ has $\Omega(n/\varepsilon^2)$ edges.
\end{lemma}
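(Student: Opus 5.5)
The plan is to take the hard bipartite instance $B$ of \citet{andoni2016on} from \Cref{thm:cut-sparsifier-lb} and lift it to a graph $G$ whose triangle-weighted graph $G_\Delta$ (\Cref{def:triangle_weighted_graph}) is, up to a constant scaling and a few harmless extra edges, the same graph as $B$. Then any $\eps$-cut sparsifier of $G_\Delta$ gives an $O(\eps)$-cut sparsifier of $B$ with essentially the same number of edges. The lower bound then follows from \Cref{thm:cut-sparsifier-lb}, after rescaling $\eps$ by a constant.

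\textbf{Construction.} Let $B=(L\cup R,E_B)$ be the Andoni et al.\ graph on $n-1$ vertices, with unit (or bounded) weights. Add one apex vertex $a$ adjacent to every vertex of $L\cup R$, giving $G$. Since $B$ is bipartite, it has no triangles, so every triangle of $G$ has the form $\{a,u,v\}$ with $\{u,v\}\in E_B$. Choose edge weights so that every such triangle has weight $1$; for example, give all edges weight $1$, or scale the apex edges. Then in $G_\Delta$:
\begin{itemize}
\item each edge $\{u,v\}\in E_B$ has weight $1$, since it lies in exactly one triangle;
\item each apex edge $\{a,u\}$ has weight $\deg_B(u)$.
\end{itemize}
So $G_\Delta$ is $B$ plus a star centred at $a$ with weights $\deg_B(\cdot)$.

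\textbf{Reduction step.} Let $H$ be an $\eps$-cut sparsifier of $G_\Delta$. For any $S\subseteq L\cup R$, consider the cut $(S, V\setminus S)$ in $G_\Delta$, placing $a$ on the side $V\setminus S$. Its value is
\[
\mathrm{cut}_B(S)+\sum_{u\in S}\deg_B(u)=\mathrm{cut}_B(S)+\mathrm{vol}_B(S).
\]
The volume term is a known additive offset, but it can be much larger than $\mathrm{cut}_B(S)$. A multiplicative $(1\pm\eps)$ error on the sum therefore does not directly give a $(1\pm O(\eps))$ error on $\mathrm{cut}_B(S)$.

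The cleanest fix is to avoid a global apex. Instead, use the communication argument itself rather than the black-box statement \Cref{thm:cut-sparsifier-lb}. In the Andoni et al.\ protocol, Bob only queries cuts of specific shapes: a vertex $v_i\in L$ together with a subset of $R$ chosen from $t$, compared against cuts whose values are known in advance. I would therefore attach the apex structure in a way that the apex contributions are identical across the two hypotheses of the Gap-Hamming-Distance problem in \Cref{thm:GHD}. One way to do this is to give each vertex its own private apex, so that every $B$-edge $\{u,v\}$ is completed by a dedicated pendant vertex $x_{uv}$. This, however, blows up the vertex count to $|E_B|$. The better choice is a single apex with the cut formula above. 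Bob can compute $\mathrm{vol}_B(S)$ exactly, because all strings have Hamming weight $1/(2\eps^2)$ and hence all degrees are known. He then subtracts it, and the gap he needs to detect is $\Theta(1/\eps)$ on a cut of size $\Theta(1/\eps^2)$. I would check that for his queried sets $\mathrm{vol}_B(S)=O(\mathrm{cut}_B(S))$, namely $O(1/\eps^2)$. This holds because $S$ consists of one left vertex and half its neighbourhood-sized subset of $R$, and right degrees are controlled in the construction (each right vertex has degree $O(1/\eps^2)$ in the blocked construction). Then the additive error $\eps\cdot O(1/\eps^2)=O(1/\eps)$ can be absorbed by choosing the sparsifier accuracy $\eps/C$.

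\textbf{Conclusion and main obstacle.} Alice encodes her strings as $G$, computes a sparsifier of $G_\Delta$ with $k$ edges, and sends it using $\widetilde O(k)$ bits; Andoni et al.\ handle weight discretization. Bob then decides the Gap-Hamming-Distance instance, so \Cref{thm:GHD} forces $k=\Omega(n/\eps^2)$. The main obstacle is the step above: verifying, within the Andoni et al.\ block structure, that the apex-induced volume terms on Bob's query cuts are only $O(1/\eps^2)$, so that the multiplicative error stays comparable to the $\Theta(1/\eps)$ gap. If this fails for a global apex, I would fall back on one apex per block of $1/\eps^2$ vertices. That uses only $O(n\eps^2)$ extra vertices, and degrees within each block are bounded by $O(1/\eps^2)$.
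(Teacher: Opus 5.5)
\paragraph{Where the proposal breaks.} The gap is the reduction step, which you yourself flag as the main obstacle, and the check you propose for it is wrong. You are right that with an apex the cut of $S\subseteq L\cup R$ in $G_\Delta$ is $\mathrm{cut}_B(S)+\mathrm{vol}_B(S)$, so \Cref{thm:cut-sparsifier-lb} cannot be used as a black box. Opening up the communication argument as you describe does not fix this.

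Take $S=\{v_i\}\cup T$ with $T\subseteq R$ the support of Bob's string $t$. Then $|T|=1/(2\eps^2)$, and with high probability every right vertex has degree $\Theta(1/\eps^2)$, so $\mathrm{vol}_B(T)=\Theta(1/\eps^4)$. Hence $\mathrm{vol}_B(S)$ is $\Theta(1/\eps^4)$, not $O(1/\eps^2)$. So is the cut itself, since $\mathrm{cut}_B(S)=\deg_B(v_i)+\mathrm{vol}_B(T)-2|N_B(v_i)\cap T|$.

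A $(1\pm\eps)$ guarantee therefore leaves additive error of order $1/\eps^3$. That swamps the $\Theta(1/\eps)$ Hamming gap even with no apex at all. This is why the Andoni et al.\ argument has to use the for-all guarantee over many large cuts at once, not a single test cut around $v_i$.

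There are two further problems. Bob cannot compute $\mathrm{vol}_B(S)$ exactly, because right degrees depend on all of Alice's strings; only left degrees are known. And the fallback of one apex per block leaves the same offset $\mathrm{vol}_B(S)$ on every cut inside a block. So nothing in the proposal controls the apex offset. Making an apex work would require re-deriving the Andoni et al.\ for-all argument with the offset built in, which you do not do.

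\paragraph{The missing idea.} Place the heavy auxiliary edges so that they never cross the cuts you need. This removes the offset entirely and lets you use \Cref{thm:cut-sparsifier-lb} as a black box. Your private-apex idea was close, but the apex should be private to each left vertex, not to each edge.

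The paper adds, for every $u\in L$, a twin $u'$ adjacent to $u$ and to all of $N_B(u)$. Since $B$ is bipartite, the triangles are exactly $\{u,u',v\}$ with $v\in N_B(u)$. So in $G_\Delta$ every edge $uv$ and $u'v$ has weight $1$, each matching edge $uu'$ has weight $\deg_B(u)$, and the vertex count grows only by a factor $3/2$.

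For any cut $(S,\bar S)$ of $B$, take the synchronized cut that puts each $u'$ on the side of $u$. It cuts no matching edge, so its value is exactly $2\,\mathrm{cut}_B(S,\bar S)$. Now take an $\eps$-cut sparsifier of $G_\Delta$ (a reweighted subgraph), identify each $u'$ with $u$, discard the matching edges, and merge parallel edges. The result has no more edges, and its value on $(S,\bar S)$ equals the sparsifier's value on the synchronized cut. Halving all weights therefore gives an $\eps$-cut sparsifier of $B$ with at most as many edges, and \Cref{thm:cut-sparsifier-lb} yields the bound directly.
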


\begin{proof}
We reduce from the one-way distributional Gap-Hamming-Distance problem in~\Cref{thm:GHD}.
Fix $n$ and $\varepsilon$.
Given Alice's input strings $s_1,\dots,s_{n/2} \in \{0,1\}^{1/\varepsilon^2}$, we first recall the bipartite graph construction used in~\citet{andoni2016on}.
The graph consists of $\varepsilon^2 n/2$ disjoint components.
Each component is a bipartite graph with left side $L$ and right side $R$, where $|L|=|R|=1/\varepsilon^2$.
Each vertex $u\in L$ corresponds to one string $s_u$, and is connected to exactly those vertices in $R$ indicated by the $1$-entries of $s_u$.
All edges have unit weight.
As shown in~\citet{andoni2016on}, any $\varepsilon$-cut sparsifier of this graph with $m$ edges yields a one-way protocol for ~\Cref{thm:GHD} using nearly $m$ bits.

We now transform each bipartite component into a tripartite graph still based on the strings $s_1,\dots,s_{n/2}$.
For each component, we construct a graph $G_j$ with vertex partition
\[
L_1(G_j)\;\cup\;L_2(G_j)\;\cup\;R(G_j),
\]
where
\[
|L_1(G_j)| = |L_2(G_j)| = |R(G_j)| = 1/\varepsilon^2.
\]
Vertices in $L_1(G_j)$ correspond one-to-one with the vertices of $L$ in the original bipartite construction.
Edges are added as follows:
\begin{itemize}[nosep]
\item For each $u \in L_1(G_j)$ and $v \in R(G_j)$, include edge $(u,v)$ of weight $1$ if and only if the corresponding bit in $s_u$ is $1$.
\item For each $u \in L_1(G_j)$, let $u'$ denote its corresponding vertex in $L_2(G_j)$, and add the edge $(u,u')$ of weight $1$.
\item For each edge $(u,v)$ between $L_1(G_j)$ and $R(G_j)$, add the edge $(u',v)$ between $L_2(G_j)$ and $R(G_j)$ with weight $1$.
\end{itemize}
In other words, we derive the graph by copying the left part of the original graph as $L_2(G_j)$ together with its edges connected with $R(G_j)$. Then, add an edge from each vertex in $L_1(G_j)$ to the corresponding vertex in $L_2(G_j)$. 
Let $G$ denote the disjoint union of all such components $G_j$.
The total number of vertices is $\Theta(n)$.

Let $G_\Delta$ be the triangle-weighted graph of $G$, where each edge weight equals the number of triangles in $G$ containing that edge.
By construction, every triangle in $G$ consists of exactly one vertex from each of $L_1(G_j)$, $L_2(G_j)$, and $R(G_j)$.
For each $u \in L_1(G_j)$, the edge $(u,u')$ is contained in exactly $\deg(u)$ triangles, where $\deg(u)$ is the degree of $u$ into $R(G_j)$, which equals the Hamming weight of $s_u$.
Thus, in $G_\Delta$, the edge $(u,u')$ has weight $1/(2\varepsilon^2)$.
All other edges inherit weights that exactly match the original bipartite construction.

Suppose there exists an $\varepsilon$-cut sparsifier $G'_\Delta$ of $G_\Delta$ with 
$m=o(n/\varepsilon^2)$ edges. We show that it induces an $\varepsilon$-cut sparsifier of the original bipartite graph $H$ with at most $m$ edges, contradicting~\Cref{thm:cut-sparsifier-lb}.

Construct a projected weighted graph $\Pi(G'_\Delta)$ on the vertex set of $H$ as follows. 
Every edge of $G'_\Delta$ between $L_1(G_j)$ and $R(G_j)$ is mapped to the corresponding edge of $H$ with the same weight. 
Every edge of $G'_\Delta$ between $L_2(G_j)$ and $R(G_j)$ is also mapped to the corresponding edge of $H$ by identifying each vertex $u'\in L_2(G_j)$ with its copy $u\in L_1(G_j)$. 
Edges between $L_1(G_j)$ and $L_2(G_j)$ are discarded, and parallel projected edges are merged by summing their weights. 
Thus $\Pi(G'_\Delta)$ has at most $m$ edges.

It remains to verify that $\Pi(G'_\Delta)$ preserves all cuts of $H$. 
For any cut $(S,\bar S)$ of $H$, consider the synchronized cut $(S_{\rm sync},\overline{S_{\rm sync}})$ in $G_\Delta$ that places each pair $u\in L_1(G_j)$ and $u'\in L_2(G_j)$ on the same side, and places every $v\in R(G_j)$ according to its side in $(S,\bar S)$. 
Under this synchronized cut, the matching edges between $L_1(G_j)$ and $L_2(G_j)$ do not cross, while the two edge types $L_1(G_j)$--$R(G_j)$ and $L_2(G_j)$--$R(G_j)$ contribute two identical copies of the original bipartite cut. 
Hence $\mathrm{cut}_{G_\Delta}(S_{\rm sync},\overline{S_{\rm sync}})=2\,\mathrm{cut}_{H}(S,\bar S)$. 
By the definition of the projection, the cut value of $\Pi(G'_\Delta)$ on $(S,\bar S)$ is exactly the cut value of $G'_\Delta$ on the synchronized cut.

Since $G'_\Delta$ is an $\varepsilon$-cut sparsifier of $G_\Delta$, the projected graph $\Pi(G'_\Delta)$ preserves every cut of $H$ within a factor of $(1\pm\varepsilon)$ after the common factor $2$. 
Therefore, scaling all edge weights of $\Pi(G'_\Delta)$ by $1/2$ gives an $\varepsilon$-cut sparsifier of $H$ with at most $m=o(n/\varepsilon^2)$ edges. 
This contradicts~\Cref{thm:cut-sparsifier-lb}. 
Therefore, every $\varepsilon$-cut sparsifier of $G_\Delta$ has $\Omega(n/\varepsilon^2)$ edges.
\end{proof}

Finally, we relate cut sparsifiers of triangle-weighted graphs to triangle cut sparsifiers of the underlying graph.

\noindent {\Cref{thm:triangle_cut_lb}}
{
    For every integer $n$ and $\varepsilon \in (1/\sqrt{n},1)$, there exists an $n$-vertex graph $G$ such that every $\varepsilon$-triangle cut sparsifier of $G$ has $\Omega(n/\varepsilon^2)$ edges.
}

\begin{proof}
Let $G$ be the graph constructed in the proof of~\Cref{lem:triangle-weighted-lb}, and let $G_\Delta$ be its corresponding triangle-weighted graph.
Suppose, for contradiction, that there exists an $\varepsilon$-triangle cut sparsifier $G'$ of $G$ with $o(n/\varepsilon^2)$ edges.
Let $(G')_\Delta$ be the triangle-weighted graph induced by $G'$.

By~\Cref{fact:equivalence}, for every cut $(S,\bar S)$,
\[
\mathrm{cut}_{G_\Delta}(S,\bar S)
=
2\cdot \mathrm{cut}^{\Delta}_{G}(S,\bar S)
\]
and
\[
\mathrm{cut}_{(G')_\Delta}(S,\bar S)
=
2\cdot \mathrm{cut}^{\Delta}_{G'}(S,\bar S).
\]
Since $G'$ is an $\varepsilon$-triangle cut sparsifier of $G$, we have
\[
\mathrm{cut}^{\Delta}_{G'}(S,\bar S)
=
(1\pm\varepsilon)\,
\mathrm{cut}^{\Delta}_{G}(S,\bar S)
\]
for every cut $(S,\bar S)$.
Therefore,
\[
\mathrm{cut}_{(G')_\Delta}(S,\bar S)
=
(1\pm\varepsilon)\,
\mathrm{cut}_{G_\Delta}(S,\bar S).
\]
Hence $(G')_\Delta$ is an $\varepsilon$-cut sparsifier of $G_\Delta$.
Moreover, every edge of $(G')_\Delta$ is an edge of $G'$, so
\[
|E((G')_\Delta)|\le |E(G')|=o(n/\varepsilon^2).
\]
This contradicts~\Cref{lem:triangle-weighted-lb}.
\end{proof}

\end{document}